\documentclass[a4paper,onecolumn,unpublished,11pt]{quantumarticle}
\pdfoutput=1
\usepackage{amsthm}
\usepackage{amsmath}
\usepackage{amssymb}
\usepackage{relsize}
\usepackage{tikzit}
\usepackage{subcaption}
\usepackage{ragged2e}
\usepackage{multirow}
\usepackage{array}

\tikzstyle{Node}=[fill=none, draw=black, shape=circle, thick, inner sep=0.75mm]
\tikzstyle{Fixed node}=[fill=none, draw=black, shape=regular polygon, thick, inner sep=0.6mm, regular polygon sides=4]
\tikzstyle{Latent node}=[fill=none, draw=red, shape=circle, thick, inner sep=0.9mm, text=red]
\tikzstyle{new style 0}=[fill=red, draw=red, shape=circle, inner sep=0.5mm]

\tikzstyle{Edge}=[draw=blue, ->, thick]
\tikzstyle{Bidirected edge}=[<->, draw=red, thick]
\tikzstyle{Unobserved edge}=[->, draw=red, thick]
\tikzstyle{Undirected}=[-, thick]

\usetikzlibrary{fit, positioning, shapes.geometric}
\usepackage[T1]{fontenc}
\newcommand{\pa}{\text{pa}}
\newcommand{\ch}{\text{ch}}

\newcommand{\an}{\text{an}}
\newcommand{\de}{\text{de}}
\newcommand{\nd}{\text{nd}}

\newcommand{\ind}{\mathrel{\perp\!\!\!\perp}}
\newcommand{\aff}{\text{aff}}
\newcommand{\Span}{\text{span}}

\newcommand{\inc}{\text{inc}}
\newtheorem{theorem}{Theorem}
\newtheorem{lemma}[theorem]{Lemma}
\newtheorem{proposition}[theorem]{Proposition}
\newtheorem{definition}[theorem]{Definition}
\newtheorem{conjecture}{Conjecture}
\newtheorem{corollary}{Corollary}[theorem]
\newtheorem{example}[theorem]{Example}
\newtheorem*{remark}{Remark}
\newtheorem*{no_number_theorem}{Theorem}

\usepackage{graphicx}

\usepackage[
backend=biber,
style=lncs,
sorting=none,
citestyle=numeric-comp
]{biblatex}
\DefineBibliographyStrings{english}{%
  backrefpage = {\uparrow},
  backrefpages = {\uparrow},
}
\usepackage[hidelinks]{hyperref}
\begin{document}

\title{Bounding Classical and Quantum Correlations in Bayesian Networks with Quasiprobabilities}
\author{Paul Becsi}
\affiliation{University of Oxford}
\orcid{0009-0000-6341-2094}

\author{Matty J. Hoban}
\affiliation{University of Oxford}
\orcid{0000-0001-9765-0373}

\maketitle

\begin{abstract}
Bell's theorem reveals that quantum theory is in tension with classical causal reasoning and, in particular, the notion of local causality. This is now understood as a particular example of non-classicality in the study of correlations in (Bayesian) networks with both unobserved and observed nodes: the correlations are probability distributions over the observed nodes. There is a great deal of work aiming to understand the bounds on quantum and classical correlations in such networks and one approach is to consider outer approximations to the former. Along these lines, we consider quasiprobabilistic models for Bayesian networks, which can be seen as classical models but the probability distributions involving unobserved nodes are "replaced" with quasiprobabilities that respect normalisation but not positivity. We denote the set of correlations resulting from these models as the \textit{quasi set}. Such models have a history in the study of Bell-type non-classicality where it has been shown that they can produce all non-signalling correlations. We show a generalisation of this result for a broad class of networks, which motivates a conjecture that the quasi set recovers the so called \textit{nested Markov model}. Our work utilises a connection to tensor network decompositions, which may be of independent interest.
\end{abstract}

\keywords{Quantum Causality \and Quasiprobability Distributions \and Bell Scenario \and Bell Inequalities}

\section{Introduction}
Directed graphical models are a tool used frequently in all fields related to statistical and causal modeling, such as medicine \cite{medicine}, psychology \cite{psychology}, economics \cite{economics}, and machine learning \cite{machineLearning}. Under a causal interpretation, they provide an attractive, intuitive formalism for cause-effect experiments. Going back as far as the 1920s with some work carried out by Wright \cite{Wright}, they are now a central concept in the field of causality \cite{Pearl_book, Lauritzen_book}, and extensive research has been carried out to understand their behaviour.

In \emph{directed acyclic graph} (DAG) modeling, each vertex $v$ of a graph $\mathcal{G}$ is associated with a random variable $X_v$ from a sample space $\mathcal{X}_v$. We call a \textit{causal structure} the tuple consisting of the graph $\mathcal{G}$ together with the sample spaces of the vertices. An edge between two vertices can be interpreted as a direct causal influence between the two corresponding variables, following the direction of the edge. Intuitively, this might imply a temporal ordering on the two variables, hence the acyclicity condition: it is less intuitive what it means for a variable to be an (indirect) cause of itself (although such scenarios are the object of current research particularly in the study of indefinite causality, see \cite{cyclicCausality, Barrett_cyclicQuantumCausality} for example). Alternatively, one can interpret the lack of an edge between two variables as the absence of a channel between them. The questions of interest are "Given a distribution, what are the causal structures it is consistent with?" and "Given a causal structure, what is the set of distributions consistent with it?". Of course, one needs to define mathematically what consistency means in this context.

In the simplest setting where all variables are observed, consistency is given by the three equivalent Markov properties: the factorisation criterion, the local Markov property, and the global Markov property. If latent variables are present, classically one is interested in the set of observed distributions that can be extended to a full distribution Markov with respect to the graph. For a given graph $\mathcal{G}$, characterising this \emph{marginal model}, which we will call \emph{the classical set} and denote $\mathcal{C}(\mathcal{G})$, turns out to be much more difficult and has been the focus of extensive study.

A distribution consistent with a latent variable DAG must satisfy some independence constraints, but these are not sufficient. In general, some other polynomial equality and inequality constraints must be satisfied as well. The former, known as Verma constraints, were first mentioned in 1990 in \cite{VermaPearl_vermaConstr1990} and then were also studied by Spirtes in \cite{Spirtes_vermaConstr1993}. In 2002 Tian and Pearl \cite{TianPearl} proposed an algorithm that finds such constraints, which is now known to be complete by the results in \cite{nested_completeness}.

Inequality constraints however are quite difficult to characterise in general. Therefore, for practical applications they are sometimes omitted and one simply looks at the set of distributions satisfying either the fully observed independence constraints following from d-separation \cite{Pearl_dsepDefinition} (called the ordinary Markov model) or more strongly all equality constraints \cite{nested, nested_intro, nested_completeness} (called the nested Markov model). The latter is a subset of the former, since independence constraints are a subset of all equality constraints. They are used in practice as approximations to the classical set, given their nicer properties. Parameterisations exist for both \cite{ordinaryParameterisation, nestedParameterisation}, which makes them convenient to use for model fitting \cite{nestedParameterisationExperiment}.

In parallel, quantum theory has developed tremendously in the last century, scientists now generally believing it to be the correct theory to describe Nature, as recognised by the Nobel Prize in Physics in 2022. One central concept in the theory and perhaps one of its most surprising facets is the phenomenon of nonlocality: two spacelike separated systems should intuitively not be able to interact, but there are quantum effects which cannot be explained under this assumption. Having its roots in the EPR paradox \cite{EPR_Paradox} published in 1935, Bell inequalities \cite{Bell_EPR} play a crucial role in understanding nonlocality, asserting that quantum correlations cannot be explained via hidden variables. Formalised in a testable way by Clauser, Horne, Shimony, and Holt in 1969, violations of Bell inequalities have been repeatedly confirmed experimentally \cite{FirstBellTest1972, SecondBellTest1982}. The conclusion is that indeed quantum predictions seem to model the behaviour of Nature accurately, and hence that reality is nonlocal. These discussions led to other discoveries in quantum theory, such as quantum steering \cite{steering}, the CHSH game \cite{chsh_game}, and the GHZ game \cite{GHZ_Original, GHZ_3Parties, FirstGHZImplementation, GHZ_nonlocality}.

\begin{figure}[t]
    \ctikzfig{TikzFigures/Bell_Scenario}
    \caption{The bipartite Bell Scenario.}
    \label{fig:bipartite_bell}
\end{figure}

Recently, researchers noticed graphical models are useful in the study of nonlocality. The idea to use graphical reasoning dates back to papers such as \cite{bilocalScenario} and \cite{WoodSpekkens}, the latter explicitly using algorithms in the literature of classical causality for the purposes of quantum research. Multiple mathematical formalisms trying to describe what quantum causality should look like were proposed \cite{hlp, Fritz, Fritz2, Laskey_quantumCausality, LeiferSpekkens, Barrett_quantumCausality}, but there is no clear consensus on one as of yet. Conversely, classical causality has seen some progress coming from the quantum side \cite{Bell_EPR, Fritz, universal_bound, inflationIntroduction, inflationCompleteness, quantumInflation}.

As a concrete example of nonlocality formulated using graphical models, see the Bipartite Bell Scenario in Fig. \ref{fig:bipartite_bell}, graph which we will call $\mathcal{B}_2$. A distribution\footnote{When discussing Bell Scenarios, it is customary to consider the observed distributions to be conditional over some observed inputs, in our case $X$ and $Y$. In the formalism of \cite{nested}, this is the same as considering the Bell Scenario to be a CDAG with fixed variables $X$ and $Y$ and reasoning about the probability kernels $p(ab|xy)$ consistent with it. In said paper, fixed variables are drawn as squares, convention which we kept in Fig. \ref{fig:bipartite_bell}.} $p(ab|xy)$ is in the classical set $\mathcal{C}(\mathcal{B}_2)$ if it satisfies the factorisation condition 
\begin{equation}    \label{eq:bell_factorisation}
p(ab|xy)= \sum_{\lambda\in\mathcal{X}_\Lambda}p(\lambda)\cdot p(a|x,\lambda)\cdotp(b|y,\lambda),
\end{equation}
for some sample space $\mathcal{X}_\Lambda$ and (conditional) distributions $p(a|x,\lambda)$, $p(b|y,\lambda)$, and $p(\lambda)$. Any distribution in the classical set must satisfy the \emph{CHSH inequality} \cite{chsh}:
\begin{equation}
p(00|00)+p(11|00)+p(00|01)+p(11|01)+p(00|10)+p(11|10)+p(01|11)+p(10|11)\leq 3\label{bell_inequality}
\end{equation}
There are distributions in the quantum set defined in \cite{hlp} that violate this inequality, which in the interpretation of the CHSH game attests nonlocality.

One of the problems of interest in quantum causality is the task of characterising the quantum set, with applications in communication complexity problems \cite{communicationComplexity} and device-independent key distribution \cite{keyDistribution}. This turns out to be much more difficult than its classical counterpart, mainly because of the unorderly structure of the quantum set: even in the Bell Scenario, whose classical set has been fully characterised as a finite polytope \cite{Barrett}, the quantum set is not a finite polytope \cite{Tsirelson}, not closed\footnote{There is some subtleties in defining the quantum set, such as whether infinite-dimensional quantum resources are allowed, or whether shared sources need to come from a product Hilbert space. If only finite-dimensional resources are allowed, the set is often called $C_q$, whereas if infinite-dimensional resources are allowed, it is called $C_{qs}$. It turns out this subtlety makes a difference, as $C_q\neq C_{qs}$. In fact, neither of the two sets is closed. We point the reader to section 1.3 of \cite{quantumCorrelationsDiscussion} for a more detailed discussion of the history of this issue.} \cite{quantumSetNotClosed}, and in fact not even decidable! \cite{quantum_undecidable}

Given the difficulty of studying the quantum set exactly, it might be worth studying relaxations of it. It is known in any Bell Scenario (i.e. one latent variable and any number of observed variables with separate inputs) that if the latent variable is allowed to follow a quasiprobability distribution (a distribution with possibly negative probabilities), all non-signaling correlations can be achieved \cite{bell_quasi, AbramskyBrandenburger}. In particular, quantum correlations are therefore special cases of these quasiprobabilistic correlations. Using the framework in \cite{Barrett_GPTs}, we believe this inclusion can be extended to all graphs (although we won't do so this in this paper), in which case we could upper bound the quantum set by this new \textit{quasi set}.

It is hence worth studying what the set of quasicorrelations is in general. It is a folklore belief that quasicorrelations do not have to obey inequality constraints (except for the trivial ones stating the observed distribution is nonnegative). Formally, this is equivalent to the quasi set coinciding with the nested Markov model for any given graph, statement for which there aren't any published proofs to the best of our knowledge. Validity of this conjecture would imply the quasi set has the orderly structure of an algebraic set (up to trivial nonnegativity constraints), and would also give another interpretation to inequality constraints on the classical and quantum sets: inequality constraints come from non-negativity conditions on the former, and from restrictions on which negative distributions can be formulated as quantum states for the latter. In particular, no inequality constraints are inherited from simply using a subtheory of quasiprobability theory.

It is certainly not the first time negative probabilities have been mentioned in science, especially not in the context of quantum theory. One of the most famous examples dates back to 1932, with Wigner proposing the phase-state representation of quantum states \cite{Wigner}, which has seen extensive use in quantum mechanics. Feynman also famously vouched for the use of negative probabilities in physics \cite{Feynman}. In computer science they can be found in the context of affine finite automata \cite{affineAutomata} and affine Turing machines \cite{computationalLandscape}. Negative probabilities are used in the proof of a classical causality result in \cite{quasiprobabilityProof}.

\subsection{Contributions}
The main contributions of this paper are generalisations of the results in \cite{bell_quasi}. In that paper, the authors show that any non-signaling distribution in the Bell Scenario (including non-classical and non-quantum ones) can be written in the form described in \eqref{eq:bell_factorisation} if we allow either $p(\lambda)$ or the response functions $p(a|x,\lambda)$ and $p(b|y,\lambda)$ to be negative. Firstly, we will prove that for any graph, it doesn't matter whether we allow latent variables or response functions to be negative; the set of nonnegative observed distributions is the same in both cases. We will define the \emph{quasi set} as the set of distributions which can be written in either of these two equivalent forms. Secondly, we will study whether this quasi set contains the set of all non-signaling distributions in general, i.e. whether it coincides with the nested Markov model for any given graph. We will conjecture that this is indeed true for any DAG and will provide a proof of this result for correlation scenarios with no bidirected cycles.

\subsection{Paper Structure}
We start by introducing graphical models and some results in the causality literature in Section \ref{sec:graphical_models}. The next two sections then contain our main results: in Section \ref{sec:quasiprobability_in_causality} we talk about defining the quasi set and show that all definitions we introduce are equivalent. In Section \ref{sec:quasiprobability_and_the_nested_markov_model} we conjecture that the quasi set coincides with the nested Markov model for all DAGs, then we introduce tensor network decompositions and show how they are related to directed graphical models in the context of correlation scenarios. The section ends with a proof of the conjecture for tree-structured correlation scenarios. Finally, Section \ref{sec:conclusions} contains some conclusions and future work. Most of the results in the main text are stated without proof, with the proofs deferred to the Appendix.

\section{Directed Graphical Models} \label{sec:graphical_models}
A directed graph $\mathcal{G}$ consists of a set of vertices $V$ together with a set of edges $E\subseteq V\times V$, where edges from a vertex to itself are not allowed. We follow the convention of using lowercase letters for vertices and uppercase letters for sets of vertices and will denote the edge $(v,w)$ by $v\rightarrow w$. A directed path consists of a sequence of edges $v_1\rightarrow v_2, v_2\rightarrow v_3,...,v_{n-1}\rightarrow v_n$. A directed cycle is a directed path together with an additional edge $v_n\rightarrow v_1$. A directed acyclic graph (DAG) is a directed graph that has no directed cycles. Often we will write the set of vertices of $\mathcal{G}$ as $V\dot\cup L$, to make clear the graph has a set of observed nodes $V$ and a set of latent nodes $L$.

As usual, given a vertex $v$, we denote by $\pa_\mathcal{G}(v)=\{w\in V:w\rightarrow v\in E\}$ and $\ch_\mathcal{G}(v)=\{w\in V:v\rightarrow w\in E\}$ the sets of parents and children of $v$. We also define the sets $$\an^0_\mathcal{G}(v) = \de^0_\mathcal{G}(v) = \{v\},$$ $$\an^{n+1}_\mathcal{G}(v) = \{w\in V:\exists v'\in \an^n_\mathcal{G}(v).\,w\rightarrow v'\in E\},$$ $$\de^{n+1}_\mathcal{G}(v) = \{w\in V:\exists v'\in de^n_\mathcal{G}(v).\,v'\rightarrow w\in E\}$$ and finally the sets of ancestors, descendants, and nondescendants of $v$ as $\an_\mathcal{G}(v)=\bigcup_{n=0}^\infty \an^n_{\mathcal{G}}(v)$, $\de_\mathcal{G}(v)=\bigcup_{n=0}^\infty \de^n_{\mathcal{G}}(v)$, and $\nd_\mathcal{G}(v)=V\setminus \de_\mathcal{G}(v)$. We define all of the above sets for sets of vertices disjunctively, i.e. $\pa_\mathcal{G}(A)=\bigcup_{v\in A}\pa_{\mathcal{G}}(v)$. Often we will drop the subscript if the graph is clear from context.

When talking about random variables, $X_v$ will denote the random variable corresponding to vertex $v$, whereas $x_v$ will denote a value in its sample space $\mathcal{X}_v$. Similarly, $X_A$ and $x_A$ will denote vectors of these quantities indexed by the elements of set $A$, while $\mathcal{X}_A$ will denote the product space $\prod_{v\in A}\mathcal{X}_v$. For a distribution $p$ over $X_V$ we will write $p(x_V)$ for the probability of $X_V$ taking value $x_V$ under $p$, and will write $X_A\ind X_B\,|\, X_C\,\,\,[p]$ to mean the variables $X_A$ and $X_B$ are independent given $X_C$ in $p$. Sometimes if we want to make explicit that $p$ is defined on some vertex set $V$, we might refer to the distribution by $p(x_V)$; whether this notation refers to a number denoting an entry of $p$ or the whole distribution itself should hopefully be clear from context.

A DAG $\mathcal{G}$, some sample spaces for its observed vertices $\{\mathcal{X}_v\}_{v\in V}$, and a consistency condition generate a graphical model: a set of probability distributions which are said to be "consistent with the graph", or "explained by the graph", or "generated by the graph". Customarily, the model will contain distributions defined over the set of observed nodes only. If no distinction between observed or latent nodes is made, the convention is the entire graph is observed. Often we will just assume the sample spaces of the observed variables are implicit, so we might omit them when defining models. Also, we will mostly work with finite observed variables, and all product spaces are assumed by default.

The most common consistency condition found in literature is the \emph{factorisation criterion}. We say a distribution $p$ \emph{factorises with respect to} DAG $\mathcal{G}$ with vertices $V\dot\cup L$ if the following equation holds
\begin{equation}
    p(x_V)=\int_{x_L}\prod_{v\in V\cup L} p(x_v|x_{\pa(v)}),\,\,\forall x_V\in\mathcal{X}_V,
\end{equation}
for some sample spaces $\{\mathcal{X}_l\}_{l\in L}$ and (conditional) distributions $\{p(x_v|x_{\pa(v)})\}_{v\in V\cup L}$. 

We will call the graphical model given by the factorisation criterion \emph{the classical set} (sometimes found in literature as \emph{the marginal model}) and will denote it by $\mathcal{C}(\mathcal{G})$. The classical set is well-studied and fairly well-understood, although not very well-behaved in general. In the case the graph $\mathcal{G}$ is fully observed, the classical set consists of the distributions which are \emph{Markov with respect to} $\mathcal{G}$, namely the distributions satisfying the three equivalent \emph{Markov properties}. See Chapter $3.2.2$ in \cite{Lauritzen_book} for details.

A few results in literature allow us to follow some simplifying assumptions in the context of the classical set. Results in \cite{mDAGs} (namely lemmas $3.7$, $3.8$, and $3.9$) state that any DAG $\mathcal{G}$ with latent variables has the same classical set as the DAG $\mathcal{G}'$ obtained by:
\begin{itemize}
    \item[$\bullet$] exogenising all latent variables, i.e. removing all edges that have arrowheads at that variable and replacing them with arrows going from each parent of the latent to each child of the latent;\\
    \item[$\bullet$] eliminating any latent variables whose children are a subset of the children of another latent;\\
    \item[$\bullet$] eliminating any latent variables that have at most one child.
\end{itemize}
Additionally, the main result in \cite{universal_bound} states that $p$ factorises with respect to a DAG $\mathcal{G}$ if and only if there is a factorisation of $p$ which has finite sample spaces for the latent variables. Using these results, we can assume WLOG that any latent in a DAG $\mathcal{G}$ is finite discrete, has no parents and at least two children, and its set of children is not a subset of the children of another latent. For such a graph, which we will call a \emph{reduced graph}, the factorisation criterion becomes
\begin{equation}    \label{eq:factorisation}
    p(x_V)=\sum_{x_L}\prod_{v\in V}p(x_v|x_{\pa(v)})\prod_{l\in L}p(x_l),
\end{equation}
for some finite discrete sample spaces $\{\mathcal{X}_l\}_{l\in L}$, (conditional) distributions $\{p(x_v|x_{\pa(v)})\}_{v\in V}$, and distributions $\{p(x_l)\}_{l\in L}$. We will sometimes refer to the conditional distributions $\{p(x_v|x_{\pa(v)})\}_{v\in V}$ as \emph{response functions} (although we are not assuming they are deterministic).

By results in \cite{universal_bound}, the classical set is known to be semialgebraic in the context of finite observed variables: a union of sets, each constrained by some polynomial equalities and inequalities. By relaxing the inequality constraints we obtain the two models we mentioned before: the \emph{ordinary Markov model} $\mathcal{M}(\mathcal{G})$\footnote{This model is denoted by $\mathcal{I}$ in \cite{hlp}} \cite{markov_properties_admg} and the \emph{nested Markov model} \cite{nested_intro, nested}, which we will denote $\mathcal{N}(\mathcal{G})$. It is shown in \cite{nested} (and also in \cite{nested_completeness} more succintly) that $\mathcal{C}(\mathcal{G})\subseteq\mathcal{N}(\mathcal{G})$. It is worth mentioning that when we talk about $\mathcal{N}(\mathcal{G})$, we are actually referring to the nested Markov model of the \emph{latent projection} of $\mathcal{G}$. We are not putting too much emphasis on this distinction to avoid going into too many details not relevant to the discussion in this paper.

Another example of a graphical model is the one given in \cite{hlp}, corresponding to the set of distributions which can be obtained from a graph under a certain GPT. For example, \emph{the quantum set} $\mathcal{Q}(\mathcal{G})$ is the set of distributions $p$ which are generalised Markov with respect to $\mathcal{G}$ for quantum theory (see \cite{hlp} for precise definition). The quantum set will contain the classical set for any graph (simply use separable quantum states to simulate the classical latent variables). In \cite{hlp}, the authors also define the model consisting of all distributions $p$ which are generalised Markov for \emph{some} GPT with respect to a graph, which we will denote $\mathcal{GPT}(\mathcal{G})$ (originally denoted $\mathcal{G}$ in \cite{hlp}). This model will contain the quantum set by definition, and is shown in \cite{gpt_in_nested} to be contained in the nested Markov model.

In short, our notations for common models found in literature are:
\begin{itemize}
    \item[$\bullet$] $\mathcal{C}(\mathcal{G})$: the classical set; in the statistics literature this is sometimes called the marginal model and denoted differently (commonly $\mathcal{M}(\mathcal{G})$ or a variation);\\
    \item[$\bullet$] $\mathcal{Q}(\mathcal{G})$: the quantum set in \cite{hlp};\\
    \item[$\bullet$] $\mathcal{GPT}(\mathcal{G})$: the GPT set, or the set denoted $\mathcal{G}$ in \cite{hlp};\\
    \item[$\bullet$] $\mathcal{N}(\mathcal{G})$: the nested Markov model \cite{nested, nested_completeness} consisting of distributions satisfying all polynomial equalities;\\
    \item[$\bullet$] $\mathcal{M}(\mathcal{G})$: the ordinary Markov model consisting of all distributions satisfying the d-separation independences, or the set $\mathcal{I}$ in \cite{hlp}.
\end{itemize}
The inclusion relationships which hold for any DAG $\mathcal{G}$ are
\begin{equation}
    \mathcal{C}(\mathcal{G})\subseteq
    \mathcal{Q}(\mathcal{G})\subseteq
    \mathcal{GPT}(\mathcal{G})\subseteq
    \mathcal{N}(\mathcal{G})\subseteq
    \mathcal{M}(\mathcal{G}).
\end{equation}

\section{Quasiprobability in Causality} \label{sec:quasiprobability_in_causality}
In this section, we will define the quasimarginal model or the quasi set. We will give three definitions of the model and prove they are all equivalent. We want to compare this set with the other models we defined, so we will also assume that observed distributions, even if they are obtained quasiprobabilistically, have entries in $[0,1]$.

\begin{remark}
    We could potentially work with even more general spaces than just $\mathbb{R}$. However, given that we expect real probabilities to have maximum possible expressivity, it seems generalising too far wouldn't provide any new interesting insights. Nonetheless, perhaps treating the case of complex numbers would help with some algebraic geometry arguments, given the algebraic closure of $\mathbb{C}$. Although we don't prove this in this paper, all of our results should generalise to extensions of $\mathbb{R}$ as well.
\end{remark}

We define a \emph{quasiprobability distribution} to be simply a probability distribution with the relaxation that it can take any values in $\mathbb{R}$ instead of being limited to $[0,1]$. Conditional probability and independence are defined similarly. This will suffice for our purposes, but a more detailed treatment is included in Appendix \ref{sec:quasi_formalisation}. We will only work with the case of finite discrete quasiprobabilistic random variables. Indeed, as we mentioned before, we assume observed variables to be finite, and by extending the result in \cite{universal_bound}, we can assume latents to be finite even when discussing quasiprobabilistic random variables. The proof of this can be found in Appendix \ref{sec:universal_bound}.

Our quasimarginal model will consist of the nonnegative observed distributions which are marginals of a full quasidistribution that factorises with respect to the graph. We distinguish two particular types of such full quasidistributions.

\begin{definition}[Distributions with Quasilatents and Quasiresponses]
    Consider a DAG $\mathcal{G}$ with vertices $V\dot{\cup}L$ and quasidistribution $p$ over all its vertices. We say $p$ has \emph{quasilatents} if $p$ can be written as
    $$p(x_{V\cup L}) = \prod_{v\in V\cup L} p(x_v|x_{\pa_\mathcal{G}(v)}),$$ for some \emph{nonnegative} distributions $p(x_v|x_{\pa_\mathcal{G}(v)})$ for all $v\in V$ and quasidistributions $p(x_l|x_{\pa_\mathcal{G}(l)})$ for all $l\in L$. We say $p$ has \emph{quasiresponses} if $p$ can be written in the same form, but this time with $p(x_v|x_{\pa_\mathcal{G}(v)})$ being general quasidistributions, and $p(x_l|x_{\pa_\mathcal{G}(l)})$ being nonnegative.
\end{definition}

The first definition makes sense from a historical point of view, given that this is the definition often used when analysing negative probabilities in the context of the Bell Scenario. The second one is motivated by the results in \cite{bell_quasi}. We will show these two definitions are equivalent from the point of view of their marginals over $V$, illustrating a quasiprobabilistic duality between latent variables and response functions.

It is worth mentioning that the first definition seemingly follows the same idea as the HLP formalism \cite{hlp}, namely that only latents are non-classical. One might argue however that even in the HLP framework, quantum measurements are inherently a part of Quantum Theory, so if we were to extend the HLP intuition to quasiprobability, we should therefore allow both latents and response functions to be negative. This will turn out to give another model identical to the first two, and we will indeed opt for this HLP-esque definition in the Section \ref{sec:quasiprobability_and_the_nested_markov_model} discussion.

We define a third model below, which will be useful in proofs.

\begin{definition}[Distributions with Quasinoise]
    Consider a DAG $\mathcal{G}$ with vertices $V\dot{\cup} L$ and quasidistribution $p$ over all its vertices. Then we say $p$ has \emph{quasinoise} if there exist random variables $N_v$ for each $v\in V$ (independent of everything in the graph except $X_v$), called noise variables, such that $p$ can be written as:
    $$p(x_{V\cup L}) = \sum_{n_v}\prod_{v\in V} p(x_v|x_{\pa_\mathcal{G}(v)}, n_v)\prod_{l\in L} p(x_l|x_{\pa_\mathcal{G}(l)})\prod_{v\in V} p(n_v),$$ for some deterministic distributions $p(x_v|x_{\pa_\mathcal{G}(v)}, n_v)$, nonnegative distributions $p(x_l|x_{\pa_\mathcal{G}(l)})$ and quasidistributions $p(n_v)$.
\end{definition}

\begin{figure}[t]
    \begin{subfigure}{0.5\textwidth}
        \ctikzfig{TikzFigures/TriangleDAG}
        \caption{The Triangle Scenario.}
        \label{fig:triangleDAG}
    \end{subfigure}
    \begin{subfigure}{0.5\textwidth}
        \ctikzfig{TikzFigures/TriangleDAGNoise}
        \caption{The Triangle Scenario with noise variables.}
        \label{fig:triangle_noise}
    \end{subfigure}
    \caption{Introducing noise variables, illustrated on the Triangle Scenario.}
    \label{fig:triangle_noise_example}
\end{figure}
Intuitively, noise variables simply isolate away the internal randomness of the observed nodes. See Figure \ref{fig:triangle_noise} for an example. They can be considered finite discrete in the case of finite discrete observed variables\footnote{To see this, note that one can regard noise variables as simply some additional latent variables. Since the construction in Appendix \ref{sec:universal_bound} does not change the response distributions of the observed variables when creating the finite model, one can replace the noise variables with finite counterparts.}.

We are now ready to define the quasimarginal set. We will give three different definitions in terms of the type of full quasidistribution we consider.

\begin{definition}[The Three Quasimarginal Sets]
    Consider DAG $\mathcal{G}$, and nonnegative distribution $p$ over its observed vertices. Then we define the sets
    \begin{itemize}
        \item[$\bullet$] $\mathcal{W}_l(\mathcal{G})$ to be the set of distributions $p$, such that $p$ is the marginal of a quasidistribution with quasilatents that factorises with respect to $\mathcal{G}$;\\
        \item[$\bullet$] $\mathcal{W}_r(\mathcal{G})$ to be the set of distributions $p$, such that $p$ is the marginal of a quasidistribution with quasiresponses that factorises with respect to $\mathcal{G}$;\\
        \item[$\bullet$] $\mathcal{W}_n(\mathcal{G})$ to be the set of distributions $p$, such that $p$ is the marginal of a quasidistribution with quasinoise that factorises with respect to $\mathcal{G}$.
    \end{itemize}
\end{definition}

In the following result we will prove these sets coincide for any DAG $\mathcal{G}$, and hence we can define the quasimarginal set as any of them.

\begin{theorem}[Equivalence of Quasimarginal Sets]  \label{equivalence_theorem}
    For any DAG $\mathcal{G}$, the following holds:
    $$\mathcal{W}_l(\mathcal{G})=\mathcal{W}_r(\mathcal{G})=\mathcal{W}_n(\mathcal{G}).$$
\end{theorem}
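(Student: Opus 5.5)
The plan is to prove the cycle of inclusions $\mathcal{W}_r(\mathcal{G})\subseteq\mathcal{W}_n(\mathcal{G})\subseteq\mathcal{W}_r(\mathcal{G})$ and then connect $\mathcal{W}_l(\mathcal{G})$ to the other two. Throughout I use the reduction from Section~\ref{sec:graphical_models}: latents are finite, parentless, and have at least two children. I assume the lemmas of \cite{mDAGs} transfer verbatim to quasidistributions, since their proofs are algebraic manipulations of sums and products. Everything is then a statement about multilinear sums over finite sets, and the only invariant to track is normalisation, never positivity.

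\textbf{Step 1: $\mathcal{W}_r=\mathcal{W}_n$.} For $\mathcal{W}_n\subseteq\mathcal{W}_r$, sum out the noise: $\tilde p(x_v|x_{\pa(v)}):=\sum_{n_v}p(x_v|x_{\pa(v)},n_v)p(n_v)$. This is a quasi-conditional, because summing over $x_v$ gives $\sum_{n_v}p(n_v)=1$. The latents are untouched, so they stay nonnegative. For $\mathcal{W}_r\subseteq\mathcal{W}_n$, use the functional-model trick. Let $N_v$ range over all functions $f:\mathcal{X}_{\pa(v)}\to\mathcal{X}_v$, with
$$p(n_v=f):=\prod_{x_{\pa(v)}}p\bigl(f(x_{\pa(v)})\,|\,x_{\pa(v)}\bigr),$$
and set $p(x_v|x_{\pa(v)},f)=\delta_{x_v,f(x_{\pa(v)})}$. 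Summing over $f$ factorises coordinatewise, which recovers $p(x_v|x_{\pa(v)})$ and gives total mass $1$. No positivity is used anywhere in this step.

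\textbf{Step 2: $\mathcal{W}_n\subseteq\mathcal{W}_l$.} Treat each noise variable $N_v$ as a quasi-latent. This gives a quasilatent model on the augmented graph $\mathcal{G}'$, in which every observed response is deterministic and hence nonnegative. The noise must then be absorbed back into $\mathcal{G}$.
\begin{itemize}
\item If $v$ has a latent parent $l$, replace $l$ by the pair $(l,n_v)$ with quasidistribution $p(x_l)p(n_v)$. Let $v$ read off $n_v$ and the other children of $l$ ignore it. The result is a quasilatent model on $\mathcal{G}$ with the same marginal.
\item If $v$ has no latent parent, then all of $v$'s parents are observed. I would argue that the quasi-response of $v$ can be replaced by a genuine conditional: order vertices topologically, condition on $x_{\pa(v)}$, and compare with the observed (nonnegative) distribution. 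This is where care is needed; see Step~4.
\end{itemize}

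\textbf{Step 3: $\mathcal{W}_l\subseteq\mathcal{W}_r$.} This is the quasiprobabilistic duality between latents and responses. For each latent $l$, replace $p(x_l)$ by the uniform distribution $u(x_l)$, and pick one child $c(l)$ of $l$ to absorb the reweighting $r(x_l)=p(x_l)/u(x_l)$. The naive choice $p(x_c|\cdot)\,r(x_l)$ fails to be normalised. I would instead encode $x_l$ into an enlarged hidden latent, giving the child access to a value together with a "sign/weight" index and using the trick of \cite{bell_quasi} (affine combinations of deterministic strategies). Concretely, I would write the quasi-distribution $p(x_l)$ as an affine combination $\sum_k \alpha_k \delta_{x_l,k}$ with $\sum_k\alpha_k=1$. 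I would then let a nonnegative latent choose $k$ uniformly, and let one child carry the factor $|\mathcal{X}_l|\alpha_k$ in a normalised way, by mixing with a correction term that is independent of $l$ and whose total contribution cancels.

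\textbf{Step 4: main obstacle.} I expect the main obstacle to be the normalisation bookkeeping in Step~3 and the parentless-latent case in Step~2. In both, negativity has to be moved between a latent and one of its children, while keeping every conditional summing to one and leaving all other children's dependence unchanged. My first attempt would be the affine-combination encoding of \cite{bell_quasi}, generalised via the $\mathcal{W}_n$ form. Once all responses are deterministic functions of parents and noise, the reweighting can be placed on the noise of a single chosen child, where normalisation only involves $\sum_{n}p(n)$ and is easy to enforce.
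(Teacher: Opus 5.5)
Your Step 1 and Step 2 are essentially the paper's arguments for $\mathcal{W}_r=\mathcal{W}_n$ and $\mathcal{W}_n\subseteq\mathcal{W}_l$. In particular, for a vertex without latent parents the paper also shows that the effective quasi-response must equal the observed conditional $p(x_v|x_{\pa(v)})$ (Lemma~\ref{proof_no_latent_parents}). The gap is Step 3, which is the only nontrivial inclusion, and you give no working construction for it. The scheme you sketch fails. Take $\Lambda$ uniform and give child $c$ the response $q(x_c|x_l,\cdot)=|\mathcal{X}_l|p(x_l)\,p(x_c|x_l,\cdot)+\bigl(1-|\mathcal{X}_l|p(x_l)\bigr)r(x_c)$. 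Then the correction term contributes
\[
r(x_c)\sum_{x_l}\bigl(u(x_l)-p(x_l)\bigr)\prod_{v\in\ch(l)\setminus\{c\}}p(x_v|x_l,\cdot),
\]
which vanishes only if the other children behave identically under $u$ and $p$. Moving the weight onto one child's noise (your Step 4) does not help either. Whatever the noise, the noise-averaged response of $c$ is normalised for every value of $x_l$, so it cannot carry an $x_l$-dependent mass. More generally, a replacement that touches only $l$ and its children must reproduce $\sum_{x_l}p(x_l)\prod_{v\in\ch(l)}p(x_v|x_l,\cdot)$ identically in the remaining variables. Summing out $x_c$ from this can leave a negative quantity (e.g.\ children that copy $x_l$). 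A nonnegative latent with nonnegative responses for the other children can only give something nonnegative. So every child of $l$ must receive quasi-responses, correlated through the new latent.

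The missing idea is the paper's two-step reduction. First, split $\Lambda$ into its ensemble of components $(\Lambda_B,\Lambda^+,\Lambda^-)$ (Lemma~\ref{proof_split_quasilatents_assignment}). Then $\Lambda^\pm$ are nonnegative and all negativity sits in a binary $\Lambda_B$ with $p^+\geq 1$, $p^-\leq 0$. Second, show that $m$ perfectly correlated copies of $\Lambda_B$, i.e.\ $p^+[+\cdots+]+p^-[-\cdots-]$, lie in $\mathcal{W}_r(Com_m)$ (Lemma~\ref{proof_local_noise_simulating_shared_negativity}). Each child can then regenerate its own copy of $\Lambda_B$ from a shared nonnegative latent and local quasi-noise. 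This lemma is where the real work lies, and it is not a routine normalisation fix. With a uniform latent on $2^{m-1}$ values it amounts to solving $\sum_\lambda\prod_{i\in S}x_i^{(\lambda)}=2^{m-1}p^+$ for every nonempty $S\subseteq\{1,\dots,m\}$. Symmetric choices cannot work: equal responses would require a real variable with $\mathbb{E}[X]=\mathbb{E}[X^2]=p^+>1$, i.e.\ negative variance under a nonnegative measure. The paper solves the system by an asymmetric induction on $m$, starting from the $m=2$ solution.
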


We give a full proof of this result in Appendix \ref{sec:equivalence_theorem}. We first assume the DAG $\mathcal{G}$ is reduced, since the results in \cite{mDAGs} extend for quasiprobability distributions as well, which is shown in Appendix \ref{sec:exogenisation}. The proof then involves showing $\mathcal{W}_n=\mathcal{W}_r$ and $\mathcal{W}_n=\mathcal{W}_l$. The former is fairly simple. For the latter, the $\mathcal{W}_n\subseteq\mathcal{W}_l$ direction involves just encapsulating any noise variable into a latent parent of its observed child. The $\mathcal{W}_l\subseteq\mathcal{W}_n$ direction is the more involved step of the proof, and involves a lemma that shows a quasiprobabilistic latent parent can be replaced by a classical latent parent and quasiprobabilistic noise variables for its children. This process is then applied for each latent.

It is not too difficult to see that Theorem \ref{equivalence_theorem} can be extended to also include the set of distributions which are marginals of some general factorising quasidistribution (i.e. full quasidistributions where neither observed nor latent response functions are restricted to $[0,1]$).

We can therefore define the quasi-marginal set as follows:
\begin{definition}[The Quasimarginal Set]\label{definition_quasimarginal_set}
    Given reduced DAG $\mathcal{G}$, we define the quasimarginal set $\mathcal{W}(\mathcal{G})$ to be any of the sets $\mathcal{W}_l(\mathcal{G})$, $\mathcal{W}_r(\mathcal{G})$, $\mathcal{W}_n(\mathcal{G})$. Equivalently, it is equal to the set of observed distributions $p$ which can be written as 
    $$p(x_V) =\sum_{x_L} \prod_{v\in V} p(x_v|x_{\pa_\mathcal{G}(v)})\prod_{l\in L}p(x_l),$$
    for some (conditional) quasidistributions $p(x_v|x_{\pa(v)})$ and $p(x_l)$.
\end{definition}

\section{Quasiprobability and the Nested Markov Model}  \label{sec:quasiprobability_and_the_nested_markov_model}
We will now discuss the relationship between the quasimarginal set and the nested Markov model.

\begin{proposition} \label{proposition:quasi_is_in_nested}
    For any DAG $\mathcal{G}$,
    $$\mathcal{W}(\mathcal{G})\subseteq\mathcal{N}(\mathcal{G}).$$
\end{proposition}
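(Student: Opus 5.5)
Two routes are available. One observes that the existing proof $\mathcal{C}(\mathcal{G})\subseteq\mathcal{N}(\mathcal{G})$ never uses positivity of the latent factors. The other observes that the nested Markov constraints are polynomial identities that vanish on the classical set, and so vanish on its Zariski closure. I plan to use the first, direct argument, and to keep the second as a fallback.

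\emph{Step 1 (reduction).} By Definition \ref{definition_quasimarginal_set} and the exogenisation results for quasidistributions (Appendix \ref{sec:exogenisation}), I take $\mathcal{G}$ to be reduced. Then $p\in\mathcal{W}(\mathcal{G})$ means $p(x_V)=\sum_{x_L}\prod_{v\in V}q(x_v|x_{\pa(v)})\prod_{l\in L}q(x_l)$. Here the $q$ are real-valued and normalised, and $p$ is assumed nonnegative.

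\emph{Step 2 (kernels factorise).} The nested Markov model can be characterised, following \cite{nested}, as the set of $p$ for which every reachable set $R$ (reachable via the fixing operation) yields a kernel $\phi_R(p)$ that factorises into district kernels, equivalently obeys the global nested Markov property. I will show by induction on fixing sequences that, for such a $p$, the kernel obtained after fixing any reachable set is again of the same quasi form, on the CADMG where the fixed vertices have their incoming edges removed and become fixed inputs. The key computation is the same as in the classical proof. Fixing a vertex $r$ that is fixable, meaning its district and its descendants intersect only in $r$, divides by $\phi(x_r\mid x_{\mathrm{mb}(r)})$. In the latent-variable representation this division removes exactly the factor $q(x_r|x_{\pa(r)})$. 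The latents shared between $r$ and its district survive, and they are summed using only normalisation of the $q$'s. Normalisation of each $q$ suffices to marginalise out descendants, since $\sum_{x_v}q(x_v|\cdot)=1$. Hence every identity used in the classical derivation, namely the sum rule, the product rule and telescoping of normalised kernels, holds verbatim over $\mathbb{R}$. The resulting kernel is therefore a product over districts of terms $\sum_{x_{L_D}}\prod q\prod q(x_l)$, which is the required district factorisation.

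\emph{Step 3 (main obstacle: division).} The fixing operation divides by conditionals of $p$ that could vanish, and intermediate quasi objects could vanish even when $p$ does not. I would handle this in one of two ways. The first is to work with the polynomial form of the constraints, i.e.\ Verma-type equalities with denominators cleared, as in \cite{nested_completeness}. The second, fallback, route uses the fact that the map $(q)\mapsto p$ is polynomial in the parameters, so each cleared-denominator nested constraint is a polynomial $f$ with $f(p(q))=0$ whenever all $q$ lie in the positive orthant, because $\mathcal{C}(\mathcal{G})\subseteq\mathcal{N}(\mathcal{G})$. The positive orthant intersected with the normalisation affine subspace has nonempty interior in that subspace, so it is Zariski dense there. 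Hence $f\circ p$ vanishes identically on the entire affine space of normalised real parameters, which includes all quasi parameters. Since $p$ is additionally nonnegative, it satisfies all of these constraints and so lies in $\mathcal{N}(\mathcal{G})$. The delicate point is to ensure that $\mathcal{N}(\mathcal{G})$ is exactly the nonnegative distributions satisfying these polynomial equalities, including at the boundary where divisions are undefined. I would cite the algebraic description in \cite{nested, nested_completeness} for this.
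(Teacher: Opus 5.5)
Your Step 2 is essentially the paper's own argument. The appendix proof also observes that fixing acts on the quasi latent-variable representation exactly as in the classical case: it deletes the fixed vertex's factor and sums out descendants using only normalisation. Every reachable kernel therefore Tian-factorises, and the symbolic derivation of the local Markov property (Theorem B.5 of \cite{nested}) carries over.

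Your Zariski-density fallback is a genuinely different route, and in some ways the cleaner one. Fix the latent cardinalities, which may be taken finite by Appendix \ref{sec:universal_bound}. The normalised parameters then form an affine space, and the strictly positive parameters are a nonempty open subset of it. On that subset $p>0$, so the classical inclusion $\mathcal{C}(\mathcal{G})\subseteq\mathcal{N}(\mathcal{G})$ applies with no zero-division issues. Hence every denominator-cleared nested constraint, composed with the polynomial parametrisation, vanishes identically on the whole affine space, and in particular at quasi parameters.

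This route gains two things. You never have to re-check that the semigraphoid axioms, the fixing lemmas and Theorem B.5 are ``purely symbolic'', which the paper simply asserts. You also do not need to reduce the graph first.

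Its cost is that it only delivers the constraints in polynomial (cleared-denominator) form. You still have to identify $\mathcal{N}(\mathcal{G})$ with the nonnegative distributions satisfying those polynomials, including where conditionals vanish. The paper's proof relies on exactly the same identification: it describes $\mathcal{N}$ as the distributions satisfying all polynomial equalities and treats fixing as a polynomial operation. So you are not behind it there, and you were right to flag the division issue that the paper passes over.
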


\begin{proof}
    Follows from the fact that the nested Markov model is defined entirely in terms of the factorisation criterion, which is obeyed by distributions in the quasi set (albeit with quasiprobabilistic factors). Indeed, the treatment in \cite{nested} is purely symbolic and does not rely on positivity (nor realness, in fact) of the distributions. A more detailed discussion can be found in Appendix \ref{sec:proof_quasi_is_in_nested}.
\end{proof}

We conjecture that, even more strongly, equality holds for any graph.

\begin{conjecture}  \label{conjecture}
    For any DAG $\mathcal{G}$,
    $$\mathcal{W}(\mathcal{G})=\mathcal{N}(\mathcal{G}).$$
\end{conjecture}

We will now study the conjecture in the context of correlation scenarios \cite{Fritz} with a tree structure.

\begin{definition}[Correlation Scenario]
    A DAG $\mathcal{G}$ is a \emph{correlation scenario} if it is reduced and observed variables have no children.
\end{definition}

Therefore, a correlation scenario $\mathcal{G}$ can be seen as a directed bipartite graph: it consists of a layer of latent variables and a layer of observed variables, and all edges are from some latent variable to some observed one. We will also assume the graphs we are working with are connected graphs (otherwise just separate the problem into each connected component). The Markov model and the nested Markov model have particularly simple forms in the context of correlation scenarios.

\begin{proposition} \label{proposition:correlation_scenario_properties}
    Let $\mathcal{G}$ be a correlation scenario. Then a distribution $p$ is in the Markov model of $\mathcal{G}$ iff it satisfies $A\ind B\,\,[p]$ whenever $A$ and $B$ are two subsets of $V$ such that no latent variable has children in both sets. Additionally, $\mathcal{N}(\mathcal{G})=\mathcal{M}(\mathcal{G})$.
\end{proposition}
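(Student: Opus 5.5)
The plan is to prove the two claims separately. In both, the key point is that in a correlation scenario all latents are parentless, every observed node is a sink, and all edges go from $L$ to $V$. For the first claim, I would use the fact that the ordinary Markov model is the set of distributions over $V$ satisfying the global Markov property for the latent projection. In other words, it is cut out by the d-separation statements $A\ind B\,|\,C$ with $A,B,C\subseteq V$ that hold in $\mathcal{G}$. I would then classify these statements.

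\textbf{Step 1 (d-separation in bipartite graphs).} Every path between observed nodes alternates $v - l - v'$ and has the form $v\leftarrow l\rightarrow v'$ at each latent. The latents are therefore never colliders, while every intermediate observed node $v'$ on a path is a collider $l\rightarrow v'\leftarrow l'$. Since observed nodes have no descendants other than themselves, such a collider is open iff $v'\in C$.

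So $A$ and $B$ are d-connected given $C$ exactly when there is a path $a - l_1 - c_1 - l_2 - \dots - l_k - b$ whose intermediate observed nodes lie in $C$. Equivalently, $A\perp_d B\,|\,C$ holds iff, in the undirected ``co-parent'' graph on $V$ (edges between nodes sharing a latent parent), $A$ and $B$ lie in different connected components of the subgraph on $A\cup B\cup C$. For $C=\emptyset$, this says that no latent has children in both sets, up to chaining through $A$ or $B$ themselves; such chaining is harmless, since it only relates $A$ to $A$.

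\textbf{Step 2 (reduction to marginal independences).} It remains to show that the marginal statements imply all conditional ones. Let $A\perp_d B\,|\,C$. Split $C=C_A\dot\cup C_B\dot\cup C_0$ according to which co-parent component of $A\cup B\cup C$ each node lies in: the component meeting $A$, the one meeting $B$, or neither. Then no latent has children in two of the three sets $A\cup C_A$, $B\cup C_B$ and $C_0$.

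The marginal hypotheses give that these three sets are mutually independent. I expect to use the statement for the pair $(A\cup C_A)$ versus $(B\cup C_B\cup C_0)$, and then the pair $C_0$ versus $B\cup C_B$. From mutual independence, $A\cup C_A\ind B\cup C_B\,|\,C_0$ follows, and conditioning further on $C_A$ and $C_B$ preserves the independence, giving $A\ind B\,|\,C$. This is routine probability and works for any nonnegative distribution.

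Conversely, the marginal statements are themselves d-separation statements, so they are necessary. This proves the first claim.

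\textbf{Step 3 ($\mathcal{N}=\mathcal{M}$).} Since $\mathcal{N}(\mathcal{G})\subseteq\mathcal{M}(\mathcal{G})$ always holds, it suffices to show that every $p\in\mathcal{M}(\mathcal{G})$ satisfies the nested constraints. The nested Markov model is defined by recursively fixing vertices and requiring the Markov property for the resulting CADMGs; equivalently, by the factorisation into districts (c-components), with kernels computed via the g-formula.

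In the latent projection of a correlation scenario there are no directed edges at all, only bidirected ones. Hence $\mathcal{G}$ is a pure bidirected graph, fixing a vertex simply marginalises it, and every reachable subgraph is an induced bidirected subgraph on a subset $S\subseteq V$. The kernel there is just the marginal $p(x_S)$. The nested constraints therefore reduce to: for each $S\subseteq V$, the marginal $p(x_S)$ is Markov for the induced bidirected graph on $S$. Each such constraint is a marginal independence between sets with no shared latent parent, which $p$ already satisfies by Step 1.

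I expect the main obstacle to be this last step: making rigorous, within the formalism of \cite{nested}, that fixing is always marginalisation here, so that no genuinely new (Verma-type) constraints appear. I would handle it by invoking the known result that for bidirected ADMGs the nested and ordinary Markov models coincide. Failing that, I would verify directly that the fixability condition ($\de(v)\cap\text{dis}(v)=\{v\}$) holds for every vertex and that the fixing operation divides by $q(x_v|\ldots)=q(x_v|\text{mb})$, which for a sink with no directed parents amounts to summing out $x_v$.
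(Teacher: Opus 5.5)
Your proof is correct, and it is more self-contained than the paper's. The paper proves neither claim directly. For the first claim it cites Theorem 3 of \cite{markov_properties_admg} for a special case and asserts that this ``directly generalises''. For $\mathcal{N}(\mathcal{G})=\mathcal{M}(\mathcal{G})$ it only says the result follows from the definition in \cite{nested_completeness} by an inductive argument.

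Your Steps 1--2 replace the citation with a direct argument. Latents are forks and observed nodes are childless colliders, so d-separation given $C$ becomes separation in the co-parent graph induced on $A\cup B\cup C$. The split $C=C_A\dot\cup C_B\dot\cup C_0$, two marginal independences, decomposition and weak union then recover every conditional statement. The co-parent formulation handles latents with any number of children at once, so no ``generalisation'' step is needed, and it uses only manipulations valid for every distribution.

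Your Step 3 is the inductive argument the paper alludes to, and it is sound, but state its inductive hypothesis explicitly. Let $\text{dis}(v)$ be the district of $v$ in the bidirected graph induced on $S$. Dividing $q(x_S)$ by $q(x_v\mid x_{\text{dis}(v)\setminus\{v\}})$ yields the marginal $q(x_{S\setminus\{v\}})$ only because $q(x_v\mid x_{S\setminus\{v\}})=q(x_v\mid x_{\text{dis}(v)\setminus\{v\}})$. That identity need not hold for an arbitrary kernel. It holds here because the current kernel is the marginal $p(x_S)$ of some $p\in\mathcal{M}(\mathcal{G})$, and $v\ind S\setminus\text{dis}(v)\mid\text{dis}(v)\setminus\{v\}$ is a d-separation statement by your Step 1. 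So carry ``the current kernel is $p(x_S)$ and is Markov for the induced bidirected graph on $S$'' through the induction, with the usual positivity convention for the division. Do not assert that fixing is marginalisation outright.
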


For the first claim, see for example \cite{markov_properties_admg} (Theorem 3) for a proof in the case of graphs with only binary latents, which directly generalises. The second claim follows from the definition of the nested Markov property in \cite{nested_completeness} and an inductive argument.

We will assume from now on that all correlation scenarios we are working with only have latent variables with exactly two children. Less than that would make the latent variable redundant, whereas three children or more can be reduced to the simpler case via the following proposition.

\begin{proposition} \label{proposition:correlation_scenarios_with_2_children_reduction}
    Let $\mathcal{G}$ be a correlation scenario with a latent $\Lambda$ with $k\geq 3$ children. Consider now the correlation scenario $\mathcal{G}'$ which replaces $\Lambda$ with $\binom{k}{2}$ latents with two children each, one for every pair of children of $\Lambda$. Then $\mathcal{M}(\mathcal{G'})=\mathcal{M}(\mathcal{G})$ and $\mathcal{W}(\mathcal{G'})\subseteq\mathcal{W}(\mathcal{G})$.
\end{proposition}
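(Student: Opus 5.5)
The plan has two independent parts: an equality of ordinary Markov models, proved with the first claim of Proposition \ref{proposition:correlation_scenario_properties}, and an inclusion of quasi sets, proved by an explicit simulation.

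\textbf{Markov models.} Proposition \ref{proposition:correlation_scenario_properties} characterises $\mathcal{M}$ of a correlation scenario purely combinatorially. A distribution is in the model iff $A\ind B$ holds whenever no latent has children in both $A$ and $B$. So it suffices to show that the families of such ``separated'' pairs $(A,B)$ coincide for $\mathcal{G}$ and $\mathcal{G}'$. Away from $\Lambda$ the two graphs are identical, so only pairs meeting $\mathrm{ch}(\Lambda)=\{c_1,\dots,c_k\}$ need checking.
\begin{itemize}
    \item[$\bullet$] In $\mathcal{G}$, the latent $\Lambda$ separates nothing inside $A\cup B$ exactly when $A$ and $B$ do not both meet $\mathrm{ch}(\Lambda)$.
    \item[$\bullet$] In $\mathcal{G}'$, the latents $\Lambda_{ij}$ separate nothing exactly when there is no pair $c_i\in A$, $c_j\in B$ with $i\neq j$.
    \item[$\bullet$] If $A$ and $B$ both meet $\mathrm{ch}(\Lambda)$, one can pick $c_i\in A$ and $c_j\in B$ with $i\neq j$, since $A$ and $B$ are disjoint in any independence statement.
\end{itemize}
Hence the two conditions agree, the constraint sets are identical, and $\mathcal{M}(\mathcal{G}')=\mathcal{M}(\mathcal{G})$. (If the paper's statement allows non-disjoint $A,B$, the trivial cases are handled the same way.)

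\textbf{Quasi sets.} Given $p\in\mathcal{W}(\mathcal{G}')$, I would take the factorisation of Definition \ref{definition_quasimarginal_set}, with quasidistributions $p(\lambda_{ij})$ for each new latent and quasi-responses $p(x_v\mid x_{\mathrm{pa}'(v)})$. I then define a single latent $\Lambda$ in $\mathcal{G}$ with sample space $\prod_{i<j}\mathcal{X}_{\Lambda_{ij}}$ and quasidistribution
$$p(\lambda)=\prod_{i<j}p(\lambda_{ij}).$$
Each child $c_i$ of $\Lambda$ reads off the components $\lambda_{ij}$ with $j\neq i$ and applies its $\mathcal{G}'$ response function, discarding the remaining components. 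All other latents and responses stay unchanged. Expanding the sum over $\lambda$ as an iterated sum over the tuple $(\lambda_{ij})$ reproduces the $\mathcal{G}'$ factorisation term by term, so the observed marginal is exactly $p$. Normalisation of $p(\lambda)$ and of the modified responses is inherited from the factors, and $p$ is nonnegative by assumption, so $p\in\mathcal{W}(\mathcal{G})$.

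The one point needing care is that $\mathcal{G}$ should remain reduced, or at least that $\mathcal{W}$ of the unreduced graph equals that of its reduction. Here $\Lambda$ is already a parentless latent with $k\geq 3$ children, so the definition applies directly. Even if reduction were an issue, the exogenisation results of Appendix \ref{sec:exogenisation} for quasiprobabilities would cover it. This step is the only potential obstacle, and it is essentially bookkeeping; the construction itself is purely symbolic and never uses positivity.
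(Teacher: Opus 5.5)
Your proposal is correct and follows the paper's argument. For $\mathcal{W}(\mathcal{G}')\subseteq\mathcal{W}(\mathcal{G})$, you let $\Lambda$ carry the product of the $\Lambda_{ij}$ and have each child read off its own components, which is exactly the paper's construction; for $\mathcal{M}(\mathcal{G}')=\mathcal{M}(\mathcal{G})$, the paper only notes that the two graphs have the same latent projection, and your check of separated pairs via Proposition~\ref{proposition:correlation_scenario_properties} verifies this concretely.
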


The first result follows the fact that the two graphs have the same latent projection, which encodes all observed independences. For the second result, if $p$ can be obtained by a latent assignment in $\mathcal{G}'$, then simply make $\Lambda$ jointly simulate all of its corresponding latent variables in $\mathcal{G}'$, and make its children just pick out the information they have access to in $\mathcal{G}'$. This shows $p$ has a latent variable construction in $\mathcal{G}$ as well.

\begin{corollary}
    It is enough to prove Conjecture \ref{conjecture} for correlation scenarios where all latent variables have exactly two children.
\end{corollary}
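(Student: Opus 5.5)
The corollary should follow by combining Proposition \ref{proposition:quasi_is_in_nested}, Proposition \ref{proposition:correlation_scenario_properties} and Proposition \ref{proposition:correlation_scenarios_with_2_children_reduction} in a sandwich argument. Assume Conjecture \ref{conjecture} holds for every correlation scenario in which each latent variable has exactly two children, and let $\mathcal{G}$ be an arbitrary correlation scenario; this is the scope in which the corollary is stated, following the reduction to correlation scenarios announced in this section. The goal is to show $\mathcal{N}(\mathcal{G})\subseteq\mathcal{W}(\mathcal{G})$, since the reverse inclusion is exactly Proposition \ref{proposition:quasi_is_in_nested}.

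First I would normalise the latent variables. A latent with at most one child is redundant and can be removed without changing any of the models; this is one of the reductions from \cite{mDAGs}, whose quasiprobabilistic version the paper establishes in the appendix, and it can also be checked directly from the factorisation. Then I would process the latents with $k\geq 3$ children one at a time. Each application of Proposition \ref{proposition:correlation_scenarios_with_2_children_reduction} produces a new graph $\mathcal{G}'$ with $\mathcal{M}(\mathcal{G}')=\mathcal{M}(\mathcal{G})$ and $\mathcal{W}(\mathcal{G}')\subseteq\mathcal{W}(\mathcal{G})$. Iterating over all such latents gives a graph $\mathcal{G}^\ast$ in which every latent has exactly two children, with
\[
\mathcal{M}(\mathcal{G}^\ast)=\mathcal{M}(\mathcal{G})
\quad\text{and}\quad
\mathcal{W}(\mathcal{G}^\ast)\subseteq\mathcal{W}(\mathcal{G}),
\]
since the equalities and inclusions compose.

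One technical point needs care here. $\mathcal{G}'$ may fail to be reduced, because a new pair-latent can have a child set contained in that of another latent. Removing such a dominated latent does not change $\mathcal{M}$, and it also leaves $\mathcal{W}$ unchanged by the quasiprobabilistic version of the \cite{mDAGs} lemmas. The only question is whether this removal could spoil the "exactly two children" property. It cannot: if the surviving dominating latent has more than two children, it is itself split by the next application of Proposition \ref{proposition:correlation_scenarios_with_2_children_reduction}. Each split either decreases the maximum child count or the number of latents attaining it, so the process terminates. Duplicate pair-latents sharing the same two children merge into one, again by the subset-elimination rule.

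Finally I would chain the inclusions. By Proposition \ref{proposition:correlation_scenario_properties}, $\mathcal{N}=\mathcal{M}$ for both $\mathcal{G}$ and $\mathcal{G}^\ast$. Applying the assumed conjecture to $\mathcal{G}^\ast$ then gives
\[
\mathcal{N}(\mathcal{G})=\mathcal{M}(\mathcal{G})=\mathcal{M}(\mathcal{G}^\ast)=\mathcal{N}(\mathcal{G}^\ast)=\mathcal{W}(\mathcal{G}^\ast)\subseteq\mathcal{W}(\mathcal{G}),
\]
which is the required inclusion. The main obstacle is the bookkeeping in the second step: checking that the intermediate graphs are correlation scenarios to which the propositions apply, and that the reduction terminates. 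Everything else is a direct combination of the stated results.
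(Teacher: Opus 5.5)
Your proposal is correct and is the paper's argument: the same sandwich $\mathcal{M}(\mathcal{G})\supseteq\mathcal{W}(\mathcal{G})\supseteq\mathcal{W}(\mathcal{G}^\ast)=\mathcal{M}(\mathcal{G}^\ast)=\mathcal{M}(\mathcal{G})$, built from Propositions \ref{proposition:quasi_is_in_nested}, \ref{proposition:correlation_scenario_properties}, \ref{proposition:correlation_scenarios_with_2_children_reduction} and the assumed conjecture. Your extra bookkeeping (splitting every latent with $k\geq 3$ children in turn, and removing dominated or duplicate pair-latents so that $\mathcal{G}^\ast$ is again reduced and hence a genuine correlation scenario) fills in details that the paper's one-line proof passes over, without changing the approach.
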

\begin{proof}
    Assume we have proven Conjecture \ref{conjecture} for correlation scenarios where all latent variables have exactly $2$ children and let $\mathcal{G}$ be an arbitrary correlation scenario. Consider the graph $\mathcal{G}'$ defined in Proposition \ref{proposition:correlation_scenarios_with_2_children_reduction}. Then by Propositions \ref{proposition:quasi_is_in_nested} and \ref{proposition:correlation_scenarios_with_2_children_reduction} we have $$\mathcal{M}(\mathcal{G})\supseteq\mathcal{W}(\mathcal{G})\supseteq\mathcal{W}(\mathcal{G}')=\mathcal{M}(\mathcal{G}')=\mathcal{M}(\mathcal{G}).$$
\end{proof}

Now, we will discuss a certain subclass of these correlation scenarios.

\begin{definition}
    Let $\mathcal{G}$ be a correlation scenario. Consider the undirected graph $\mathcal{G}^t$ that has set of vertices $V$ (i.e. the observed vertices of $\mathcal{G}$), and an edge between two vertices if they have a common latent parent in $\mathcal{G}$. We say $\mathcal{G}$ is a \emph{tree-structured correlation scenario (TCS)} if $\mathcal{G}^t$ is a tree.
\end{definition}

Note that if $\mathcal{G}$ is a TCS, then every latent variable must have exactly $2$ children: if $\Lambda$ has at least $3$ children, then any two of them will be connected in $\mathcal{G}^t$, introducing a cycle. As we mentioned before, we will prove Conjecture \ref{conjecture} in the context of TCSs. One important feature that allows this is the very simple structure of the Markov model (and hence the nested Markov model too) of TCSs.

\begin{proposition} \label{markov_model_in_TCS}
    Let $\mathcal{G}$ be a TCS. For each vertex $v$, note that removing $v$ from the graph separates it into a bunch of connected components. We will call the set of such components $\mathcal{B}_v$, meaning $$\mathcal{B}_v=\{B\subseteq{V\setminus \{v\}}:B\text{ is a maximal connected component of the graph }\mathcal{G}\setminus\{v\}\}.$$

    Then $p\in\mathcal{M}(\mathcal{G})$ iff for any vertex $v$ and $B\in\mathcal{B}_v$, we have $$B\ind V\setminus(B\cup\{v\})\,\,[p],$$
    or equivalently, $$p(x_{V\setminus\{v\}})=\prod_{B\in\mathcal{B}_v} p(x_B).$$ 
\end{proposition}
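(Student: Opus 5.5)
We reduce to Proposition \ref{proposition:correlation_scenario_properties}: $p\in\mathcal{M}(\mathcal{G})$ iff $A\ind B\,[p]$ for all $A,B\subseteq V$ such that no latent has a child in both $A$ and $B$. In a TCS every latent has exactly two children, so this condition says that $A$ and $B$ are disjoint and there is no edge of $\mathcal{G}^t$ between them. The proposition then amounts to showing that this whole family of independences is equivalent to the smaller family indexed by pairs $(v,B)$ with $B\in\mathcal{B}_v$. The equivalence between the independence form and the product form is routine: for fixed $v$, the components in $\mathcal{B}_v$ are pairwise non-adjacent in $\mathcal{G}^t$, so we apply the independence $B\ind V\setminus(B\cup\{v\})$ repeatedly, peeling off one component at a time, to get $p(x_{V\setminus\{v\}})=\prod_{B\in\mathcal{B}_v}p(x_B)$. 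Conversely, marginalising the product gives each independence.

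For the forward direction, take $v$ and $B\in\mathcal{B}_v$. Then $B$ and $C=V\setminus(B\cup\{v\})$ are disjoint. Any $\mathcal{G}^t$-edge between them would lie in $\mathcal{G}^t\setminus\{v\}$ and would join $B$ to another component, contradicting maximality of $B$. Hence no latent has children in both sets, and Proposition \ref{proposition:correlation_scenario_properties} gives $B\ind C$.

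The converse is the main step. Assume all the component independences hold, and take $A,B\subseteq V$ with no $\mathcal{G}^t$-edge between them. We prove $A\ind B$ by induction on $|V|$, or equivalently on the number of vertices outside $A\cup B$.

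If $A\cup B=V$, then since $\mathcal{G}^t$ is connected some edge joins $A$ and $B$, unless one of them is empty, which is trivial. So assume some $v\notin A\cup B$.

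The components in $\mathcal{B}_v$ partition $V\setminus\{v\}$, and the hypothesis at $v$ gives mutual independence of the $X_{B'}$, $B'\in\mathcal{B}_v$, by the product form. Write $A_{B'}=A\cap B'$ and similarly $B_{B'}$. Mutual independence reduces $A\ind B$ to $A_{B'}\ind B_{B'}$ for each component $B'$. This uses the standard fact that if $(Y_i,Z_i)$ are mutually independent over $i$ and $Y_i\ind Z_i$ for each $i$, then $(Y_i)_i\ind(Z_i)_i$, which holds symbolically since everything factorises.

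Now consider the TCS obtained by deleting $v$ and keeping only the component $B'$. Its $\mathcal{G}^t$ is the tree $B'$, which has fewer vertices. The marginal $p(x_{B'})$ satisfies that subtree's component independences. Indeed, for $w\in B'$, the components of $B'\setminus\{w\}$ are obtained from components of $V\setminus\{w\}$ by intersecting with $B'$. The component containing $v$ loses $v$ and everything outside $B'$, and these pieces are exactly the full components of $V\setminus\{w\}$ intersected with $B'$. So the needed independences are marginals of the original ones at $w$. The induction hypothesis then gives $A_{B'}\ind B_{B'}$.

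The base case is $|V|\le 2$, where the only constraint is independence of two vertices not joined by an edge, or something trivial.

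The delicate part is verifying that the restricted marginal satisfies the hypotheses for the subtree. This combinatorial check about components of subtrees is where I expect care to be needed, though it should follow from the fact that removing a vertex from a tree and restricting to a subtree commute in this sense.
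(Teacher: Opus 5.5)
Your proof is correct. The paper gives no separate argument for this proposition. It states it right after Proposition~\ref{proposition:correlation_scenario_properties} and treats it as a direct consequence, so reducing to that proposition is the intended route. Your induction over subtrees supplies the converse direction the paper leaves implicit: that the component independences imply $A\ind B$ for every pair of disjoint, non-adjacent sets.

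The step you flagged as delicate does go through. For $w\in B'$:
\begin{itemize}
\item every neighbour of $w$ other than $v$ lies in $B'$;
\item each component of $\mathcal{G}^t\setminus\{w\}$ contains exactly one neighbour of $w$;
\item in a tree, the intersection of two connected vertex sets is connected.
\end{itemize}
Hence the components of $B'\setminus\{w\}$ are exactly the nonempty sets $C\cap B'$ with $C\in\mathcal{B}_w$. Since $B'\setminus\big((C\cap B')\cup\{w\}\big)=B'\cap\big(V\setminus(C\cup\{w\})\big)$, the independence you need for the subtree is a marginal of the one at $w$.

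One minor point: your description of the base case is slightly off, because the two vertices of a two-vertex tree are adjacent. No base case beyond the trivial one is needed anyway, since your inductive step already handles $A\cup B=V$ and empty sets.
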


In other words, all the (nested) Markov model requires is that after eliminating a vertex $v$, the distribution with $x_v$ summed out factorises with respect to the branches left.

Next, we introduce tensor network decompositions and then we will provide a proof for Conjecture \ref{conjecture} for TCSs.

\subsection{Tensor Network Decompositions}
Tensor network decompositions are tools that allow storing the information of a tensor more compactly. They are used in machine learning, and studying structures of entanglement in multipartite quantum states such as in quantum many-body physics, but it was noted they are also related to graphical models \cite{GraphicalModelsAndTensors}. They involve decomposing a tensor into an expression consisting of tensor products and contractions following the structure of a graph, where each vertex is assigned a separate tensor, and edges represent contraction of the neighbouring components over an index of fixed size, called a \emph{bond dimension}. For now we assume the entries of the tensors come from an arbitrary field $k$, and will replace $k:=\mathbb{R}$ later when we get back to graphical models. For an undirected graph $\mathcal{G}$, we will denote by $\inc_\mathcal{G}(v):=\{(v_1,v_2)\in E:v_1=v\text{ or }v_2=v\}$ the set of incident edges of vertex $v$, and will drop the index when the graph is clear from context.

\begin{definition}[Tensor Network Decomposition]    \label{def:tensor_network_decompositions}
    Let $\mathcal{G}$ be an undirected graph. Assign to each vertex $v\in V$ a finite-dimensional $k$-vector space $\mathbb{V}_v$ and each edge $e\in E$ a bond dimension $r_e\in\mathbb{Z}_+^*$. We say a tensor $\mathcal{T}\in\prod_{v\in V} \mathbb{V}_v$ \emph{decomposes with respect to $\mathcal{G}$ with bond dimensions $\{r_e\}_{e\in E}$} if there exist $k$-vector spaces $\{\mathbb{E}_e\}_{e\in E}$ of corresponding dimensions $\{r_e\}_{e\in E}$ and tensors $\left\{T^{(v)}\in\mathbb{V}_v\otimes\left( \bigotimes_{e\in\inc(v)}\mathbb{E}_e\right)\right\}_{v\in V}$ such that 
    \begin{equation}    \label{eq:tensor_factorisation_initial_definition}
    \mathcal{T}(x_V)=\sum_{\{1\leq i_e\leq r_e\}_{e\in E}}\prod_{v\in V} T^{(v)}\left(x_v,\{i_e\}_{e\in \inc(v)}\right).
    \end{equation}
\end{definition}

From now on, when referring to entries of $T^{(v)}$, we will move the indices $i_e$ of the edges incident to $v$ to the subscript, so equation \eqref{eq:tensor_factorisation_initial_definition} becomes:
\begin{equation}    \label{eq:tensor_factorisation}
    \mathcal{T}(x_V)=\sum_{\{1\leq i_e\leq r_e\}_{e\in E}}\prod_{v\in V} T^{(v)}_{\{i_e\}_{e\in \inc(v)}}(x_v).
    \end{equation}

\begin{figure}[t]
    \begin{subfigure}{0.5\textwidth}
        \ctikzfig{TikzFigures/4-on-line_undirected}
        \caption{A tree $\mathcal{T}$ with edges labeled by indices.}
        \label{fig:4-on-line_undirected}
    \end{subfigure}
    \begin{subfigure}{0.5\textwidth}
        \ctikzfig{TikzFigures/4-on-lineDAG}
        \caption{The corresponding TCS $\mathcal{T}^{DAG}$ of $\mathcal{T}$.}
        \label{fig:4-on-lineDAG}
    \end{subfigure}

    \caption{The correspondence between the $4$-on-line Scenario and its underlying undirected graph. Edges become latent variables with the same domain size as the corresponding index.}
\end{figure}

\begin{example} \label{example_tree_tensor_decomposition}
Let's look at the undirected graph\footnote{Note that often in quantum literature, one would draw open edges at each vertex in order to emphasise the index corresponding to $v$ of the tensor $T^{(v)}$.} $\mathcal{T}$ in Figure \ref{fig:4-on-line_undirected}, and the TCS $\mathcal{T}^{DAG}$ in \ref{fig:4-on-lineDAG} that we obtain by replacing every edge in $\mathcal{T}$ by a latent variable, which we will call the \emph{$4$-on-line Scenario}. For some finite sets $\mathcal{A}, \mathcal{B}, \mathcal{C}, \mathcal{D}$ we consider the order-$4$ real-valued tensors $T(a,b,c,d)$, where $a\in\mathcal{A}$, $b\in\mathcal{B}$, $c\in\mathcal{C}$, $d\in\mathcal{D}$. We say such a tensor $T$ decomposes with respect to $\mathcal{T}$ and bond dimensions $(I,J,K)\in\mathbb{R}^3$ if there exist tensors
\begin{gather*}
   A\in\mathbb{R}^{|\mathcal{A}|}\,\otimes\,\mathbb{R}^I, \quad B\in\mathbb{R}^{|\mathcal{B}|}\,\otimes\,\mathbb{R}^I\otimes \,\mathbb{R}^J, \quad 
   C\in\mathbb{R}^{|\mathcal{C}|}\,\otimes\,\mathbb{R}^J\otimes\, \mathbb{R}^K,\\
    D\in\mathbb{R}^{|\mathcal{D}|}\,\otimes\,\mathbb{R}^K
\end{gather*}
such that 
\begin{equation}    \label{eq:ex_decomposition}
    T(a,b,c,d)=\sum_{i,j,k} A_i(a) B_{ij}(b)C_{jk}(c)D_k(d).
\end{equation}

Now assume we have a distribution $p(a,b,c,d)\in\mathcal{W}\left(\mathcal{T}^{DAG}\right)$, where the random variables $X_A,X_B,X_C,X_D$ take values in the sets $\mathcal{A}$, $\mathcal{B}$, $\mathcal{C}$, $\mathcal{D}$. In other words, there exist latent variables $\Lambda_{AB}$, $\Lambda_{BC}$, $\Lambda_{CD}$, such that 
\begin{align*}    p(a,b,c,d)=&\sum_{\lambda_{AB},\lambda_{BC},\lambda_{CD}} p(a|\lambda_{AB})\cdot p(b|\lambda_{AB},\lambda_{BC})\cdot p(c|\lambda_{BC},\lambda_{CD})\\
&\cdot p(d|\lambda_{CD})\cdot p(\lambda_{AB})\cdot p(\lambda_{BC})\cdot p(\lambda_{CD}),
\end{align*}
for some value assignments to the (conditional) quasidistributions in the expression. Given this expression, we can equivalently write
\begin{align*}
p(a,b,c,d)=&\sum_{\lambda_{AB},\lambda_{BC},\lambda_{CD}} p(a,\lambda_{AB})\cdot p(b|\lambda_{AB},\lambda_{BC})\cdot p(c,\lambda_{BC}|\lambda_{CD})\cdot p(d,\lambda_{CD}),
\end{align*}
by simply grouping factors, for example $p(a|\lambda_{AB})\cdot p(\lambda_{AB})=p(a,\lambda_{AB})$. Therefore, if we now define \begin{gather*}
    A_{\lambda_{AB}}(a):= p(a,\lambda_{AB}),\quad 
    B_{\lambda_{AB},\lambda_{BC}}(b)=p(b|\lambda_{AB},\lambda_{BC}),\quad\\
    C_{\lambda_{BC},\lambda_{CD}}(c):=p(c,\lambda_{BC}|\lambda_{CD}),\quad
    D_{\lambda_{CD}}(d):=p(d,\lambda_{CD}),
\end{gather*}
and view $p(a,b,c,d)$ as an order-$4$ tensor, we actually recover equation \eqref{eq:ex_decomposition}, where the indices now range over the values of the latent variables. Note that we could've instead grouped $p(\lambda_{AB})$ into the factor corresponding to $b$ via $p(b|\lambda_{AB},\lambda_{BC})\cdot p(\lambda_{AB})=p(b,\lambda_{AB}|\lambda_{BC})$ to obtain a different tensor network decomposition. This shows the process we applied is not unique.

In general, given a latent variable representation of a distribution with respect to a correlation scenario $\mathcal{G}$, we can always construct a tensor network decomposition with respect to the corresponding undirected graph $\mathcal{G}^t$ by following the same process, where the bond dimensions are the sizes of the latents. Our proof of Conjecture \ref{conjecture} will explore going in the opposite direction.
\end{example}

A pretty common result in the literature of tensor network states is that every tensor can be decomposed with respect to a graph $\mathcal{G}$ as long as we allow the bond dimensions to be large enough (the result is more common for trees \cite{TreeTensorNetworks, TreeTensorNetworks2}; it is proven in the general case in \cite{TensorNetworkRanks}). It is also known that in the case of a tree graph, a \emph{minimal} decomposition exists for every tensor \cite{TensorNetworkRanks, TreeTensorNetworks}.

We now have all the ingredients we need for our proof of Conjecture \ref{conjecture} in the case of TCSs.

\begin{theorem}
    For any Tree-structured Correlation Scenario $\mathcal{G}$, a distribution is in the quasimarginal model if and only if it satisfies the observed independence constraints enforced by the graph. In other words, $$\mathcal{W}(\mathcal{G})=\mathcal{N}(\mathcal{G})=\mathcal{M}(\mathcal{G}).$$   \label{theorem:TCS_reals}
\end{theorem}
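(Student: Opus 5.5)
We only need the inclusion $\mathcal{M}(\mathcal{G})\subseteq\mathcal{W}(\mathcal{G})$. The chain $\mathcal{W}(\mathcal{G})\subseteq\mathcal{N}(\mathcal{G})=\mathcal{M}(\mathcal{G})$ is already given by Propositions \ref{proposition:quasi_is_in_nested} and \ref{proposition:correlation_scenario_properties}. The plan is to go from a tensor network decomposition back to a latent variable model, reversing the process of Example \ref{example_tree_tensor_decomposition}. Since Definition \ref{definition_quasimarginal_set} allows all factors to be quasidistributions, the only constraints to enforce are normalisations, not positivity.

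Fix $p\in\mathcal{M}(\mathcal{G})$ and view it as a tensor on the tree $\mathcal{G}^t$. By the results quoted above for trees \cite{TensorNetworkRanks, TreeTensorNetworks}, $p$ has a \emph{minimal} tensor network decomposition $\{T^{(v)}\}$ with respect to $\mathcal{G}^t$. For each vertex $v$, form the summed tensor
$$S^{(v)}_{\{i_e\}_{e\in\inc(v)}}:=\sum_{x_v}T^{(v)}_{\{i_e\}}(x_v),$$
which lives in $\bigotimes_{e\in\inc(v)}\mathbb{E}_e$. Contracting $S^{(v)}$ with the branch tensors (the contractions of the subtrees in $\mathcal{B}_v$) gives $p(x_{V\setminus\{v\}})$. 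By Proposition \ref{markov_model_in_TCS}, this equals $\prod_{B\in\mathcal{B}_v}p(x_B)$, a rank-one tensor across the branches.

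The key step, and the one I expect to be the main obstacle, is to pull this rank-one property back through the branches. Minimality of the decomposition should make each branch contraction map $\mathbb{E}_e^*\to\mathbb{R}^{\mathcal{X}_B}$ injective: otherwise we could project the bond onto a smaller subspace and reduce $r_e$. Injectivity then forces $S^{(v)}=\bigotimes_{e\in\inc(v)}s^{(v)}_e$ for some vectors $s^{(v)}_e\in\mathbb{E}_e^*$. Total mass $1$ implies that all these vectors are nonzero, and that for each edge $e=\{u,w\}$ the "edge scalar" pairing is nonzero once the other factors are accounted for. I would verify the injectivity claim carefully, working from the leaves inward.

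Next, orient each edge $e=\{u,w\}$ arbitrarily and use the gauge freedom: insert $G_eG_e^{-1}$ on the bond, with $G_e$ invertible. Choose $G_e$ so that $s^{(u)}_e$ becomes the all-ones vector. We may also choose it so that the transformed $s^{(w)}_e=:c_e$ has all entries nonzero; this is a generic condition, compatible with the first, since $s^{(w)}_e\neq0$. Then define the quasiresponses and latent quasidistributions by
$$p(x_v\mid\{\lambda_e\}_{e\in\inc(v)}):=T^{(v)}_{\{\lambda_e\}}(x_v)\Big/\prod_{e\in\inc(v),\,v=w(e)}c_e(\lambda_e),\qquad p(\lambda_e):=c_e(\lambda_e).$$
This definition has three consequences. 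Each response sums to $1$ over $x_v$, because $S^{(v)}$ factorises into ones and the $c_e$ it divides by. The product of factors reproduces the decomposition, hence $p$. Finally, $\sum_\lambda c_e(\lambda)$ equals the total mass contributed across that edge, which we normalise to $1$; this is achieved by rescaling $c_e$ and compensating with scalars on other edges so that the product stays $1$.

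This exhibits $p$ in the form of Definition \ref{definition_quasimarginal_set}, with latents $\Lambda_e$ of size $r_e$, so $p\in\mathcal{W}(\mathcal{G})$. Combined with the opposite inclusion, this gives $\mathcal{W}(\mathcal{G})=\mathcal{N}(\mathcal{G})=\mathcal{M}(\mathcal{G})$.
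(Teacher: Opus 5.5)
Your proposal is correct, and its backbone matches the paper's proof. You take a minimal decomposition on $\mathcal{G}^t$, get injectivity of the branch matrices $M^{(B)}$ from minimality, and pull the Markov factorisation $p(x_{V\setminus\{v\}})=\prod_{B\in\mathcal{B}_v}p(x_B)$ back through left inverses to obtain $\sum_{x_v}T^{(v)}=\bigotimes_{e\in\inc(v)}s^{(v)}_e$. The injectivity needs no leaves-inward induction: a column dependency of $M^{(B)}$ lets you delete one value of $i_{e_B}$ by absorbing the coefficients into $T^{(a)}$, which is exactly the paper's argument.

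Where you genuinely diverge is in securing nonzero denominators. The paper first proves the Nonzero Sums Lemma (Lemma~\ref{lemma_nonzero_sums}, via $\exists\forall$-partitions and iterated gauge transformations) for an arbitrary decomposition of a tensor with nonzero total mass. Only then does it factorise, splitting each edge weight symmetrically as $\psi_e=\phi^{(v_1)}_e\phi^{(v_2)}_e$. You notice that the factorisation step needs no nonvanishing hypothesis, and once it holds the problem decouples bond by bond. One gauge per edge sends the tail vector to $\mathbf{1}$ and makes the head vector $c_e$ entrywise nonzero. In the variable $(X^T)^{-1}$ the constraint is affine and each condition $(c_e)_i=0$ is a proper hyperplane when $r_e\ge 2$; the case $r_e=1$ is immediate from $s^{(w)}_e\neq 0$. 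This gives the nonzero sums as a by-product and is considerably lighter. The paper's lemma buys generality (it applies to any decomposition, not only minimal ones), but this theorem never uses that generality.

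One small tidy-up: the final normalisation needs no cross-edge compensation. Set $\kappa_e:=\sum_\lambda c_e(\lambda)$; then $\prod_e\kappa_e=1$. Setting $p(\lambda_e):=c_e(\lambda_e)/\kappa_e$ with the responses unchanged therefore preserves the product, just as the paper does with $\psi_e/\sum_{i_e}\psi_e(i_e)$.
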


\begin{proof}
It follows from Propositions \ref{proposition:quasi_is_in_nested} and \ref{proposition:correlation_scenario_properties} that $\mathcal{W}(\mathcal{G})\subseteq\mathcal{N}(\mathcal{G})=\mathcal{M}(\mathcal{G})$. For the opposite inclusion, consider an arbitrary element of the ordinary Markov model $p\in\mathcal{M}(\mathcal{G})$. We will build a quasiprobabilistic latent variable representation of $p$.

We will view $p(x_V)$ as a tensor: it has order $|V|$, one mode for each vertex of the graph (or equivalently observed variable). Now consider the undirected graph $\mathcal{G}^t(V, E^t)$ which has as vertices the observed vertices of $\mathcal{G}$ and an edge between $v$ and $w$ iff $v$ and $w$ have a common (latent) parent. By the results in \cite{TensorNetworkRanks} (more precisely Theorem 8.3), there exist minimal bond dimensions $(r_1,\dots,r_{|E^t|})$ such that $p(x_V)$ decomposes with respect to $\mathcal{G}^t$. The tuple $(r_1,\dots,r_{|E^t|})$ is called the \emph{$\mathcal{G}^t$-rank} of $p$, and its minimality is with respect to the product partial order on tuples. We will transform this minimal decomposition of $p(x_V)$ to show that $p(x_V)\in\mathcal{W}(\mathcal{G})$. We will illustrate the proof for the case of the $4$-on-line Scenario in Fig. \ref{fig:4-on-lineDAG} and direct the reader to Appendix \ref{sec:proof_conjecture_trees} for the general case.

In the case of the $4$-on-line Scenario, the minimal tensor decomposition corresponds to minimal $I,J,K\in\mathbb{R}$ such that there exist tensors
\begin{gather*}
   A\in\mathbb{R}^{|\mathcal{A}|}\,\otimes\,\mathbb{R}^I, \quad B\in\mathbb{R}^{|\mathcal{B}|}\,\otimes\,\mathbb{R}^I\otimes \,\mathbb{R}^J, \quad 
   C\in\mathbb{R}^{|\mathcal{C}|}\,\otimes\,\mathbb{R}^J\otimes\, \mathbb{R}^K,\\
    D\in\mathbb{R}^{|\mathcal{D}|}\,\otimes\,\mathbb{R}^K
\end{gather*}
that satisfy the equation
\begin{equation}    \label{eq:4-on-line_decomposition}
    p(a,b,c,d)=\sum_{i,j,k} A_i(a) B_{ij}(b)C_{jk}(c)D_k(d).
\end{equation}

The first step is to note that WLOG the following inequations hold:
\begin{gather*}
    \sum_a A_i(a)\neq0\\
    \sum_b B_{ij}(b)\neq 0\\
    \sum_c C_{jk}(c)\neq 0\\
    \sum_d D_k(d)\neq 0,
\end{gather*}
for all values $1\leq i,j,k\leq I,J,K$. If the decomposition we are given doesn't satisfy this, then we can always construct one that does using \emph{gauge tansformations}; the proof can be found in sections \ref{sec:multilinear_products}, \ref{sec:gauge_transformations}, and \ref{sec:nonzero_assumption} of the Appendix. 

Next, notice that we can write:
\begin{align*}
    p(a,b,c,d)
    &=\sum_{i,j,k} A_i(a) B_{ij}(b)C_{jk}(c)D_k(d)\\
    &=\sum_{i,j,k}\frac{A_i(a)}{\sum_a A_i(a)}\cdot\frac{B_{ij}(b)}{\sum_b B_{ij}(b)}\cdot\frac{C_{jk}(c)}{\sum_c C_{jk}(c)}\cdot\frac{D_k(d)}{\sum_d D_k(d)}\\
    &\cdot\left(\sum_a A_i(a)\right)\left(\sum_b B_{ij}(b)\right)\left(\sum_c C_{jk}(c)\right)\left(\sum_d D_k(d)\right),
\end{align*}
and we have $\sum_a\frac{A_i(a)}{\sum_a A_i(a)}=1,\forall i$, meaning we can consider $\frac{A_i(a)}{\sum_a A_i(a)}$ to be $p(a|\lambda_{AB}=i)$, and the same for the other observed nodes. What we need now is for the factor at the end of the expression to decompose as $$\left(\sum_a A_i(a)\right)\left(\sum_b B_{ij}(b)\right)\left(\sum_c C_{jk}(c)\right)\left(\sum_d D_k(d)\right)=\phi(i)\cdot\psi(j)\cdot\theta(k).$$ Indeed, if this holds then we can write 
\begin{align*}
    1
    &=\sum_{a,b,c,d}p(a,b,c,d)\\
    &=\sum_{i,j,k}\left(\sum_a\frac{A_i(a)}{\sum_a A_i(a)}\right)\cdot\left(\sum_b\frac{B_{ij}(b)}{\sum_b B_{ij}(b)}\right)\cdot\left(\sum_c\frac{C_{jk}(c)}{\sum_c C_{jk}(c)}\right)\cdot\left(\sum_d\frac{D_k(d)}{\sum_d D_k(d)}\right)\\
    &\cdot\left(\sum_a A_i(a)\right)\left(\sum_b B_{ij}(b)\right)\left(\sum_c C_{jk}(c)\right)\left(\sum_d D_k(d)\right)\\
    &=\sum_{i,j,k}1\cdot 1\cdot1\cdot1\cdot\phi(i)\cdot\psi(j)\cdot\theta(k)\\
    &=\sum_i\phi(i)\cdot\sum_j\psi(j)\cdot\sum_k\theta(k),
\end{align*}
and hence 
\begin{align*}
    p(a,b,c,d)
    &=\sum_{i,j,k} A_i(a) B_{ij}(b)C_{jk}(c)D_k(d)\\
    &=\sum_{i,j,k}\frac{A_i(a)}{\sum_a A_i(a)}\cdot\frac{B_{ij}(b)}{\sum_b B_{ij}(b)}\cdot\frac{C_{jk}(c)}{\sum_c C_{jk}(c)}\cdot\frac{D_k(d)}{\sum_d D_k(d)}\\
    &\cdot\frac{\phi(i)}{\sum_i\phi(i)}\cdot\frac{\psi(j)}{\sum_j\psi(j)}\cdot\frac{\theta(k)}{\sum_k\theta(k)}.
\end{align*}
Now we have $\sum_i\frac{\phi(i)}{\sum_i\phi(i)}=1$ and the same for $j$ and $k$, so we can define
\begin{gather*}
p(\lambda_{AB}=i):=\frac{\phi(i)}{\sum_i\phi(i)},\\
p(\lambda_{BC}=j):=\frac{\psi(j)}{\sum_j\psi(j)},\\
p(\lambda_{CD}=k):=\frac{\theta(k)}{\sum_k\theta(k)},
\end{gather*}
and we are done.

It is therefore enough to show that $\sum_b B_{ij}(b)$ and $\sum_c C_{jk}(c)$ factorise into components only depending on $i$ and $j$, and $j$ and $k$ respectively. This is where we will use the independence constraints of the ordinary Markov model. We will show that $\sum_b B_{ij}(b)$ decomposes and the same will follow for $C$. The proof just involves some reasoning about matrices, and the general proof in Appendix \ref{sec:proof_conjecture_trees} is a tensor extension of this argument.

We start by summing \eqref{eq:4-on-line_decomposition} over $b$:
\begin{equation}
    p(a,c,d)=\sum_{i,j,k}A_i(a)\cdot\left(\sum_b B_{ij}(b)\right)\cdot C_{jk}(c)\cdot D_k(d).  \label{eq:summed_decomposition}
\end{equation}
We now rearrange the expression to act on order-two objects, i.e. matrices. The LHS $p(a,c,d)$ is an order-$3$ tensor, but we can rewrite its entries into a $|\mathcal{A}|\times(|\mathcal{C}|\cdot|\mathcal{D}|)$ matrix with rows corresponding to $a$, and columns corresponding to $(c,d)$. We will denote this matrix by $p(a;(c,d))$. Similarly, the tensor $\sum_k C_{jk}(c)\cdot D_k(d)$ has three dimensions: $j$, $c$, and $d$, but we can rewrite it as a matrix with dimensions corresponding to $j$ and $(c,d)$, matrix which we will denote by $M$. Hence if we write $A$ as a $|\mathcal{A}|\times I$ matrix, $\sum_b B_{ij}(b)$ as a $I\times J$ matrix, and $M$ is a $J\times(|\mathcal{C}|\cdot|\mathcal{D}|)$ matrix, then equation \eqref{eq:summed_decomposition} becomes:
\begin{equation}
    p(a;(c,d))= A \left(\sum_b B_{ij}(b)\right)M.    \label{eq:flattened_matrix_decomposition}
\end{equation}

The idea now is to note that $A$ and $M$ have inverses on one side each. Indeed, we claim that the columns of the $|\mathcal{A}|\times I$ matrix $A$ are linearly independent. Assume for the sake of contradiction that there exists $1\leq q\leq I$ such that $$A_q(a)=\sum_{i\neq q}\alpha_i A_i(a),$$ for some $\{\alpha_i\in\mathbb{R}\}_{i\neq q}$. Then going back to \eqref{eq:4-on-line_decomposition}, we get:
\begin{align*}
    p(a,b,c,d) 
    &=\sum_{i,j,k} A_i(a) B_{ij}(b)C_{jk}(c)D_k(d)\\
    &=\sum_{i\neq q,j,k}A_i(a) B_{ij}(b)C_{jk}(c)D_k(d) + \sum_{j,k}A_q(a) B_{qj}(b)C_{jk}(c)D_k(d)\\
    &=\sum_{i\neq q,j,k}A_i(a) B_{ij}(b)C_{jk}(c)D_k(d) + \sum_{i\neq q,j,k} \left(\alpha_i A_i(a)\right)B_{qj}(b)C_{jk}(c)D_k(d)\\
    &=\sum_{i\neq q,j,k}A_i(a)\cdot\left(B_{ij}(b)+\alpha_i B_{qj}(b)\right)\cdot C_{jk}(c) D_k(d),
\end{align*}
which is a tensor decomposition of $p$ with bond dimensions $(I-1,J,K)$, contradicting the fact that our initial bond dimensions were minimal. Therefore, the columns of $A$ are linearly independent, meaning $A$ has a left inverse $A^L$ such that $A^L A=\mathcal{I}_I$, where $\mathcal{I}_I$ is the $I\times I$ identity matrix. Similarly (but perhaps less obvious; see Appendix \ref{sec:actual_conjecture_proof}), the matrix $M$ has a right inverse $M^R$ such that $MM^R=\mathcal{I}_J$. Multiplying to the left by $A^L$ and to the right by $M^R$ in equation \eqref{eq:flattened_matrix_decomposition} we get: $$A^L\cdot p(a;(c,d))\cdot M^R= \sum_b B_{ij}(b).$$

But $p\in\mathcal{M}(\mathcal{G})$, meaning $p(a,c,d)=p(a)p(c,d)$, or equivalently $p(a;(c,d))=\mathbf{p(a)}\otimes \mathbf{p(c,d)}^T$, where we consider $\mathbf{p(a)}$ and $\mathbf{p(c,d)}$ to be column vectors (and hence we write them in bold). Therefore, we get $$\left(A^L\cdot \mathbf{p(a)}\right)\otimes\left(\mathbf{p(c,d)}^T\cdot M^R\right)= \sum_b B_{ij}(b).$$

Finally, if we look at the types in the LHS, $A^L$ is a $I\times |\mathcal{A}|$ matrix, meaning $A^L\cdot p(a)^T$ is a column vector of size $I$, and similarly $p(c,d)^T\cdot M^R$ is a row vector of size $J$. Therefore, $\sum_b B_{ij}(b)$ indeed factorises into an $i$ component and a $j$ component. The conclusion follows.

\end{proof}

\begin{corollary}   \label{corollary_directed_models_tensor_network_decompositions_equivalence}
    Let $\mathcal{G}$ be a TCS. Then for any distribution $p\in\mathcal{M}(\mathcal{G})$, $p$ has a tensor network decomposition with respect to $\mathcal{G}^t$ and bond dimensions $\{r_e\}_{e\in E^t}$ if and only if it has a quasiprobabilistic latent variable construction where the latent corresponding to edge $e$ in $\mathcal{G}^t$ has domain size equal to $r_e$.
\end{corollary}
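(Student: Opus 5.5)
The equivalence has two directions. The forward one is the construction of Example~\ref{example_tree_tensor_decomposition}. For the converse, I would reduce to the minimal decomposition built in the proof of Theorem~\ref{theorem:TCS_reals} and then enlarge the latent domains with dummy values of probability zero.

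\emph{Latent construction $\Rightarrow$ decomposition.} Suppose $p\in\mathcal{M}(\mathcal{G})$ has a quasiprobabilistic latent variable construction in which the latent $\Lambda_e$ has domain size $r_e$. By Definition~\ref{definition_quasimarginal_set},
$$p(x_V)=\sum_{x_L}\prod_{v\in V}p(x_v|x_{\pa(v)})\prod_{l\in L}p(x_l).$$
In a TCS every latent has exactly two children, and its edge $e$ is the edge of $\mathcal{G}^t$ joining them. I would pick one endpoint of $e$ and absorb the factor $p(\lambda_e)$ into that child's response function. Then I set $T^{(v)}_{\{\lambda_e\}_{e\in\inc(v)}}(x_v)$ to be the resulting product for each $v\in V$. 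This is exactly equation \eqref{eq:tensor_factorisation} with indices $i_e=\lambda_e$ ranging over $r_e$ values. Nothing here uses positivity, so quasiprobabilities cause no difficulty.

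\emph{Decomposition $\Rightarrow$ latent construction.} Suppose $p$ decomposes with respect to $\mathcal{G}^t$ with bond dimensions $\{r_e\}$. Since $\mathcal{G}^t$ is a tree, I would invoke Theorem 8.3 of \cite{TensorNetworkRanks}, as in the proof of Theorem~\ref{theorem:TCS_reals}. This gives a $\mathcal{G}^t$-rank $\{r'_e\}$ which is minimal in the product partial order. The point I need is the stronger fact that it is the \emph{unique} minimal element, i.e.\ the minimum. Then every achievable tuple dominates it, so $r'_e\le r_e$ for every $e$.

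Next I would run the proof of Theorem~\ref{theorem:TCS_reals} (the general version in Appendix~\ref{sec:proof_conjecture_trees}) on a minimal decomposition. I need to check that the latent domain sizes it produces are exactly the bond dimensions $r'_e$. The steps are: gauge-transform so that all marginal sums are nonzero; normalise each $T^{(v)}$ into a response function; factorise the leftover scalars via left and right inverses; normalise these into latent quasidistributions. None of these steps changes the index ranges. The output is a construction with $|\mathcal{X}_{\Lambda_e}|=r'_e$.

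Finally I would pad each latent to size $r_e$. I add $r_e-r'_e$ new values to $\Lambda_e$ and give each of them probability $0$, so $p(x_{\Lambda_e})$ still sums to $1$. For both children of $\Lambda_e$, I define the response function at any configuration involving a new value to be any fixed normalised distribution, for instance a copy of the response at value $1$. Every term in the sum over $x_L$ that involves a new value carries the factor $p(\lambda_e)=0$, so the marginal over $V$ is still $p$.

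\emph{Main obstacle.} I expect the only real obstacle to be the claim that the tree rank is a genuine minimum rather than merely one of several minimal elements. Everything else is bookkeeping. If the cited theorem only gives minimality, I would prove domination directly. For each edge $e$, $r'_e$ should equal the matrix rank of the flattening of $p$ across the bipartition of $V$ obtained by deleting $e$ from the tree. Any decomposition with bond dimension $r_e$ on that edge factors this flattening through an $r_e$-dimensional space, which forces $r_e\ge r'_e$.
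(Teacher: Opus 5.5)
Your proposal is correct and matches the paper's proof. The latent-to-tensor direction is the grouping construction of Example~\ref{example_tree_tensor_decomposition}. The converse passes to the $\mathcal{G}^t$-rank (which the paper, like you, takes from \cite{TensorNetworkRanks} to be dominated by every achievable tuple), runs the proof of Theorem~\ref{theorem:TCS_reals} there, and then enlarges the latents, which you make explicit with zero-probability padding.

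Your flattening-rank fallback also works: applying the column-independence argument of Appendix~\ref{sec:actual_conjecture_proof} at both endpoints of an edge shows that any minimal tuple has $r'_e$ equal to that flattening's rank. In fact your ``main obstacle'' can be bypassed entirely. The finitely many achievable tuples below $\{r_e\}$ always contain one that is minimal among all achievable tuples, and minimality in this sense is all the proof of Theorem~\ref{theorem:TCS_reals} uses.
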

\begin{proof}
    Let $\{r_e^*\}_{e\in E^t}$ be the $\mathcal{G}^t$-rank of $p$. Then we know by \cite{TensorNetworkRanks} that $p$ has a decomposition with bond dimensions $\{r_e\}_{e\in E^t}$ if and only if $r_e\geq r_e^*$ for all $e$.

    Now assume $p$ has a tensor network decomposition with bond dimensions $\{r_e\}_{e\in E^t}$. Then $r_e\geq r_e^*$ for all $e$ and $p$ has a tensor decomposition with bond dimensions $\{r_e^*\}_{e\in E^t}$. By the discussion in Appendix \ref{sec:proof_conjecture_trees}, it has a latent variable construction with sizes $\{r_e^*\}_{e\in E^t}$. Since $r_e\geq r_e^*$ for all $e$, this implies the same must be true for latent variable sizes $\{r_e\}_{e\in E^t}$.
    
    Conversely, if $p$ has a latent variable construction with sizes $\{r_e\}_{e\in E^t}$, then it has a tensor network decomposition with bond dimensions $\{r_e\}_{e\in E^t}$ by the construction we have given in the Example \ref{example_tree_tensor_decomposition}.
\end{proof}

Corollary \ref{corollary_directed_models_tensor_network_decompositions_equivalence} therefore illustrates a sort of equivalence between tensor network decompositions and quasiprobabilistic latent variable constructions in the case of non-signaling distributions and TCSs.

\section{Conclusions}   \label{sec:conclusions}
In this work, we have formalised the set of quasidistributions consistent with a graph, generalising some known results about the Bipartite Bell Scenario. We have shown a duality between latent variables and response functions from a quasiprobabilistic point of view. Finally, we proved the quasi set is the same as the nested Markov model for any correlation scenario with a tree structure.

A proof of Conjecture \ref{conjecture} would provide an affirmative answer to Proposition 3.4 of \cite{Fritz}: it is known that GPTs are not expressive enough to generate all distributions in the nested Markov model, not even for correlation scenarios, as the perfect correlation in the Triangle Scenario is not in $\mathcal{GPT}$ \cite{hlp}; however, the theory allowing latents (or equivalently response functions or both) to take all real-valued probabilities is. This theory can be seen as an example of a \emph{non-free generalised probabilistic theory} \cite{non-free_theories, computationalLandscape}, so it appears dropping the \emph{no-restriction assumption} that GPTs are often studied under is enough to give them the expressivity to achieve all nested Markov model correlations.

In terms of directions for future work, studying Conjecture \ref{conjecture} is a natural direction. We know it is true for tree-structured correlation scenarios, and Appendix \ref{sec:triangle_proof} contains a proof for the Triangle Scenario often found in literature, hence the conjecture could reasonably hold for all correlation scenarios. What about general DAGs, where the nested Markov model becomes more complex? In lack of a full proof, we believe the formalism in \cite{Barrett_GPTs} can be used to show the quasi set at least contains the $\mathcal{GPT}$ set in general. In parallel, it would be interesting to see what makes the quasi set special, and why it is so much more expressive than other physically motivated theories. One avenue we see towards this is the Inflation Technique \cite{inflationCompleteness}: we are aware that the Marginal Problem becomes much simpler in the context of quasiprobability\footnote{In fact, we believe we have a proof for the following statement: for any $n\geq 2$ and correlation scenario $\mathcal{G}$, an observed distribution $p$ has a quasiprobabilistic $n$-th order inflation (defined in \cite{inflationCompleteness}) if and only if $p\in\mathcal{N}(\mathcal{G})=\mathcal{M}(\mathcal{G})$.}, but the completeness proof in \cite{inflationCompleteness} requires a version of the de Finetti Theorem, and we are not aware of a quasiprobabilistic version of this theorem that proves completeness of inflation for the quasi set as well.

\paragraph{Acknowledgements:}
We would like to thank Jonathan Barrett, Robin Evans, Tein van der Lugt, Marina Maciel Ansanelli, Máté Weisz, and Ruiwen Dong for the useful discussions that helped build the results in this paper, as well as Christopher Chang for proofreading.

\renewcommand{\theHsection}{A\arabic{section}}

\appendix

\section{Quasiprobability Definitions} \label{sec:quasi_formalisation}
\begin{definition}[Quasiprobability Distribution]
    Given a sample space $\Omega$ and $\sigma$-algebra $\mathcal{F}$, a \emph{quasiprobability measure} or \emph{quasimeasure} is a function $p:\mathcal{F}\rightarrow \mathbb{R}$ satisfying countable disjoint additivity and that $p(\Omega)=1$. The triple $(\Omega,\mathcal{F},p)$ is a \emph{quasiprobability space}. If the space $\Omega$ is finite or countable, we call $p$ a \emph{quasiprobability distribution} or \emph{quasidistribution}.
\end{definition}

\begin{definition}[Conditional Quasidistribution and Independence]
    Given quasidistribution $p$ and event $B$ such that $p(B)\neq 0$, we define the conditional distribution $p(A|B)$ as $$p(A|B)=\frac{p(A,B)}{p(B)}.$$

    We say events $A$ and $B$ are \emph{independent given} $C$ with $p(C)\neq 0$ if $$p(A,B|C)=p(A|C)\cdot p(B|C).$$
\end{definition}

\begin{definition}[Quasiprobabilistic Random Variable]\
    Given quasiprobability space $(\Omega,\mathcal{F},p)$, a random variable is a function $X:\Omega\rightarrow E$, where $E$ is a measurable space.
\end{definition}

\section{Extensions of Classical Results for Quasiprobability}

\subsection{Sufficiency of Finite Discrete Latent Variables}    \label{sec:universal_bound}
In this section we will extend one of the main results in \cite{universal_bound} for Quasiprobability. We will use the same measure theoretic notation as the cited paper.

\begin{proposition}[Finiteness of Quasilatents]
Given DAG $\mathcal{G}$ with vertices $V\dot\cup L$, a distribution $p(x_V)$ over finite discrete variables $X_V$ can be written as $$p(x_V)=\int_{\Omega_L}\prod_{v\in V}p(x_v|x_{\pa(v)})\prod_{l\in L} d\rho_l(x_l),$$ for (quasi or not) distributions $p(x_v|x_{\pa(v)})$ for $v\in V$ and quasiprobability measures $\rho_l(x_l)$ over sets $\Omega_l$, if and only if it can be written in the same form, but with all latents finite discrete.
\end{proposition}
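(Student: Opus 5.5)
The "if" direction is immediate, since a finite discrete quasidistribution is a quasiprobability measure on a finite set with the discrete $\sigma$-algebra. For the "only if" direction, I would mimic the argument of \cite{universal_bound}, replacing its convexity/Carathéodory step with a purely linear-algebraic one. Convexity arguments break down for signed measures, but linear spans do not. I would proceed one latent at a time: fix $l\in L$ and keep every other ingredient of the model fixed. Then the observed distribution is a linear functional of the measure $\rho_l$:
\begin{equation*}
p(x_V)=\int_{\Omega_l} f_{x_V}(x_l)\, d\rho_l(x_l),
\end{equation*}
where $f_{x_V}(x_l)$ is obtained by integrating out all other latents with $x_l$ held fixed. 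Here I assume, as in \cite{universal_bound}, that the model is first reduced so that latents are exogenous. The quasiprobabilistic exogenisation is treated in Appendix \ref{sec:exogenisation}, so $\rho_l$ has no parents. Since $X_V$ is finite, $f:\Omega_l\to\mathbb{R}^{\mathcal{X}_V}$ takes values in a finite-dimensional space.

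The key step is to replace $\rho_l$ by a finitely supported signed measure. Consider the vector
\begin{equation*}
w=\left(\int f\,d\rho_l,\ \int 1\,d\rho_l\right)=(p,1)\in\mathbb{R}^{\mathcal{X}_V}\times\mathbb{R}.
\end{equation*}
This vector lies in the linear span $S$ of $\{(f(\omega),1):\omega\in\Omega_l\}$. To justify this, note that if a linear functional vanishes on every $(f(\omega),1)$, then integrating against $\rho_l$ shows it vanishes on $w$. This requires $f$ to be $\rho_l$-integrable, which I take as part of the hypothesis that the integral is defined. Since $S$ is finite dimensional, there are finitely many points $\omega_1,\dots,\omega_m$ with $m\le|\mathcal{X}_V|+1$ whose images span $S$. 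Hence we can write $w=\sum_k c_k (f(\omega_k),1)$ with real $c_k$, and we set $\rho_l'=\sum_k c_k\delta_{\omega_k}$. The second coordinate gives $\sum_k c_k=1$, so $\rho'_l$ is a normalised quasidistribution on a finite set, and it reproduces $p$ exactly. This is simpler than the classical case: no positivity is needed, so Carathéodory's theorem becomes a basis-selection argument.

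Iterating over all latents finishes the proof. At each stage the new latent is finite and the others are untouched, and the children's response functions only need to be evaluated at the chosen points $\omega_k$, relabelled as $\{1,\dots,m\}$. The main obstacle I expect is measure-theoretic bookkeeping. Specifically, I must check that after fixing the other (possibly continuous, signed) latents, $f_{x_V}$ is well defined and measurable by a Fubini-type argument for signed measures of finite total variation. I would handle this by assuming, as in the measure-theoretic setup of \cite{universal_bound}, that each $\rho_l$ has finite total variation, so that Fubini applies to the product $|\rho_l|$. A further point is that the response functions of observed children stay unchanged, so whether they are quasi or not is preserved. This means the same proof serves for all of $\mathcal{W}_l,\mathcal{W}_r,\mathcal{W}_n$.
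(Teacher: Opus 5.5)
Your proof is correct and is essentially the paper's argument. The paper also eliminates one latent at a time and shows $p$ lies in the affine hull of the fixed-latent outcome vectors: its lemma writes $\aff(U)=u+\Span(U-u)$ and checks the defining linear equations under the integral, which is your annihilator argument on the lifted vectors $(f(\omega),1)$. Your integrability and Fubini remarks add care the paper omits.

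One correction to your final sentence. The span step can assign negative weights $c_k$ even when $\rho_l$ is a genuine probability measure, so it does not by itself keep latents nonnegative, as $\mathcal{W}_r$ and $\mathcal{W}_n$ require. For those you need the convex Carathéodory argument of \cite{universal_bound}, which still works with quasi responses since only the mixing measure must be positive, and the paper uses exactly that. This lies outside the proposition as stated.
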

\begin{proof}
Our proof will follow the one in \cite{universal_bound}. We will replace each latent, in turn, with a finite discrete variable over the same space (or equivalently a finite subset of it, i.e. the image of the finite variable), keeping the observed response distributions the same.

Pick a latent variable $l\in L$. We will perform the following experiment: Pick an arbitrary value $x_l^*\in\Omega_l$, fix $X_l$ to deterministically take value $x_l^*$ and then construct $p$ using the factorisation formula as before. The behaviour of the resulting model is:
$$p_{x_l^*}(x_V)=\left[\int_{\Omega_{L\setminus \{l\}}}\prod_{v\in V}p(x_v|x_{\pa(v)})\prod_{l'\in {L\setminus \{l\}}} d\rho_{l'}(x_{l'})\right]_{x_l=x_l^*}.$$

By the square brackets around the right-hand side we mean every occurence of $x_l$ in the expression is replaced by the value $x_l^*$. Now the vector $\overrightarrow{p_{x_l^*}}$ is a random variable that depends on $x_l^*$ with average $$\mathbb{E}[\overrightarrow{p_{x_l^*}}]=\int_{\Omega_l}\overrightarrow{p_{x_l^*}} \,\,d\rho_l(x_l^*)=p(x_V).$$

Note that $p\in \mathbb{R}^d$, where $d=|\mathcal{X}_V|$. We can look at the subset $U=\{p_{x_l^*}|x_l^*\in\Omega_l\}$ of $\mathbb{R}^d$ consisting of the observed distributions resulting from the experiment over the entire range of possible deterministic choices for $x_l^*\in\Omega_l$. By construction, point $p(x_V)$ is an affine mixture with weights in $\mathbb{R}$ of points in $U$. Then by Lemma \ref{lemma_caratheodory}, $p(x_V)$ is a finite affine combination with weights in $\mathbb{R}$ of points in $U$, i.e. we can write $$p(x_V)=\sum_{x_l}\int_{\Omega_{L\setminus \{l\}}}\prod_{v\in V}p(x_v|x_{\pa(v)})\prod_{l\in {L\setminus \{l\}}} d\rho_l(x_l).$$

Applying this process to the rest of the latent variables then gives the desired result.
\end{proof}

\begin{lemma}\label{lemma_caratheodory}
Given subset $U\subset \mathbb{R}^d$ and $\overrightarrow{x}^*\in \mathbb{R}^d$ a (possibly continuous) affine mixture of points in $U$, then $\overrightarrow{x}^*$ is a finite affine combination of points in $U$ with weights in $\mathbb{R}$.
\end{lemma}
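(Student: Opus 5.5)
The key observation is that $\overrightarrow{x}^*$, being an affine mixture of points of $U$, must lie in the affine hull $\aff(U)$. Once that is established, the claim follows from finite-dimensional linear algebra, since every point of an affine hull is a finite affine combination of points of the set. Concretely, write $\overrightarrow{x}^*=\int_{\Omega}f(\omega)\,d\rho(\omega)$, where $f:\Omega\to U$ is measurable and integrable with respect to the quasimeasure $\rho$, and $\rho(\Omega)=1$. The plan is to show $\overrightarrow{x}^*\in\aff(U)$ and then apply a Carathéodory-type bound, so that at most $d+1$ points are needed.

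\textbf{Step 1: Reduce to linear functionals.} Let $H=\aff(U)$. It is a closed affine subspace of $\mathbb{R}^d$ because every affine subspace of a finite-dimensional space is closed. Hence $H$ can be cut out by finitely many affine equations $\ell_k(\overrightarrow{y})=c_k$, where each $\ell_k$ is a linear functional. For every $\omega$ we have $f(\omega)\in U\subseteq H$, so $\ell_k(f(\omega))=c_k$. Integration commutes with linear functionals, since $\ell_k$ acts coordinatewise by finitely many real coefficients and the integral is linear. Therefore
$$\ell_k(\overrightarrow{x}^*)=\int_\Omega \ell_k(f(\omega))\,d\rho(\omega)=c_k\,\rho(\Omega)=c_k.$$
This is the only place where normalisation $\rho(\Omega)=1$ is used. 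Positivity of $\rho$ is never needed, so the argument works for quasimeasures. It follows that $\overrightarrow{x}^*\in H$.

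\textbf{Step 2: Finite combination.} Choose a maximal affinely independent subset $u_0,\dots,u_m$ of $U$ with $m\le d$. By maximality, $\aff(U)=\aff\{u_0,\dots,u_m\}$. Every point of this hull can be written uniquely as $\sum_i \alpha_i u_i$ with real weights satisfying $\sum_i\alpha_i=1$. Applied to $\overrightarrow{x}^*$, this gives the required finite affine combination, with at most $d+1$ points and possibly negative weights.

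\textbf{Main obstacle.} The only genuine subtlety lies in Step 1: what "continuous affine mixture" means measure-theoretically for a signed measure. I would fix this as the hypothesis that each coordinate of $f$ is $|\rho|$-integrable, so the integral is well defined, and that linearity of the integral holds. Under that reading the proof is short. I would also remark that, in the application to finiteness of quasilatents, the resulting finite latent takes values in $\{u_0,\dots,u_m\}\subseteq\Omega_l$ with the weights $\alpha_i$ as its quasidistribution, which is exactly what that argument needs.
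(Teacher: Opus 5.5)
Your proof is correct and follows the paper's argument. The paper likewise writes $\aff(U)=\overrightarrow{u}+\Span(U-\overrightarrow{u})$ and tests $\overrightarrow{x}^*-\overrightarrow{u}$ against each homogeneous linear equation defining that span, using only linearity of the integral and normalisation of $\rho$, never positivity. Your Step 2 just makes explicit the step the paper leaves implicit, from membership in $\aff(U)$ to a finite affine combination of at most $d+1$ points. One small correction to your closing remark: the finite latent should take values in chosen preimages $\omega_i\in\Omega_l$ of the $u_i$, not in the $u_i$ themselves.
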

\begin{proof}
Consider $\rho$ to be a quasiprobability measure on (the Borel subsets of) $\mathbb{R}^d$ with the support set of the mass function being a subset of $U$. We need to prove that $\overrightarrow{x}^*=\int_{U}\overrightarrow{x}\,\,d\rho(\overrightarrow{x})$ is in the affine hull $\aff(U)$. Note that  $\aff(U)=\overrightarrow{u}+\Span(U-\overrightarrow{u})$, where we fix some $\overrightarrow{u}\in U$.

Now $\Span(U-\overrightarrow{u})$ consists of the solutions to a homogeneous linear system of equations. Consider $\overrightarrow{a}\cdot\overrightarrow{x}=a_1x_1+\dots a_dx_d=0$ to be one such equation, where $x_1,\dots x_d$ are the entries of an arbitrary vector $\overrightarrow{x}\in \mathbb{R}^d$. Then, substituting $\overrightarrow{x}^*-\overrightarrow{u}$ in the above we get:
\begin{align*}
\overrightarrow{a}\cdot(\overrightarrow{x}^*-\overrightarrow{u})&=a_1(x^*_1-u_1)+\dots+a_d(x^*_d-u_d)\\
&=(a_1x^*_1+\dots+a_dx^*_d)-(a_1u_1+\dots+a_du_d)\\
&=\left(a_1\int_U x_1\,d\rho(\overrightarrow{x})+\dots+a_d\int_Ux_d\,d\rho(\overrightarrow{x})\right)\\
&\,\,\,\,\,\,\,-(a_1u_1+\dots+a_du_d)\\
&=\int_U (a_1x_1+\dots+a_dx_d)\,d\rho(\overrightarrow{x})-(a_1u_1+\dots+a_du_d)\\
&=(a_1u_1+\dots+a_du_d)\int_U 1d\,\rho(\overrightarrow{x})-(a_1u_1+\dots+a_du_d)\\
&=0,
\end{align*}
where the second-to-last equality comes from the fact that $$a_1x_1+\dots+a_dx_d=a_1u_1+\dots+a_dx_d,$$ because $\overrightarrow{x}-\overrightarrow{u}\in \Span(U)$, and the last equality follows from the fact that $\rho$ is an affine mixture. Indeed, all these calculations will hold for all linear equations in the system, hence $\overrightarrow{x}^*-\overrightarrow{u}$ is a solution to the system, so $\overrightarrow{x}^*\in\aff(U)$.
\end{proof}

\subsection{Soundness of Reducing a Graph}    \label{sec:exogenisation}
In this section we will show the results in \cite{mDAGs} hold for our quasiprobabilistic models as well. However, we cannot assume yet that the three definitions of the quasiset are equivalent (and indeed, we want to use the results in this section to prove the equivalence). Therefore, we would in theory need to prove the three results in \cite{mDAGs} for each of the three definitions. For the sake of brevity we will omit the measure theoretic approach in the original paper. If full rigourosity is desired, these proofs could be rewritten in a more formal way by following the original measure theoretical arguments more closely.

\begin{definition}[Exogenisation]
Consider a DAG $\mathcal{G}$ and a vertex $u$. Then the \emph{exogenised graph} $\mathfrak{r}(\mathcal{G},u)$ is defined as follows: take the parents of $u$ in $\mathcal{G}$ and draw an arrow from each of them to each child of $u$, and erase the arrows pointing to $u$.
\end{definition}

\begin{proposition}[Quasilatents Exogenisation - Lemma $3.7$ in \cite{mDAGs}]
Consider DAG $\mathcal{G}$ with vertices $V\dot\cup \{u\}$. Then $p(x_V)$ is the marginal of a quasidistribution with quasilatents/quasiresponses/quasinoise Markov wrt $\mathcal{G}$ iff the same is true for $\mathfrak{r}(\mathcal{G},u)$.
\end{proposition}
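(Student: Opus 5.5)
Both directions come from regrouping factors in the factorisation formula, following Lemma 3.7 of \cite{mDAGs}. I treat the three notions (quasilatents, quasiresponses, quasinoise) in parallel. Here $u$ is a latent vertex that is marginalised out; as in the statement, $V$ denotes the remaining vertices, some of which may themselves be latent. Write $P=\pa_\mathcal{G}(u)$ and $C=\ch_\mathcal{G}(u)$.

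\emph{Direction $\mathcal{G}\Rightarrow\mathfrak{r}(\mathcal{G},u)$.} Start from a factorisation $p(x_{V\cup\{u\}})=\prod_w p(x_w|x_{\pa_\mathcal{G}(w)})$, possibly with noise variables. For each child $c\in C$ I want a new kernel with parent set $\pa_\mathcal{G}(c)\setminus\{u\}\cup P$. The plan is to absorb $u$ into its children using a fresh copy:
\[
p'(x_c|x_{\pa'(c)}) \;=\; \sum_{x_u} p(x_u|x_P)\,p(x_c|x_{\pa(c)}).
\]
This naive version fails when $|C|\ge 2$, because the children must share the same value of $u$. The standard fix is to exogenise by turning $u$ into a parentless source. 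Replace $u$ by a latent $u'$ with no parents, ranging over response functions $f:\mathcal{X}_P\to\mathcal{X}_u$, with weight $q(f)=\prod_{x_P}p(f(x_P)|x_P)$. Each child then computes $x_u=f(x_P)$ from its now-direct parents $P$.

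The weights $q$ sum to one because each factor does. If $p(x_u|x_P)$ is a quasikernel, $q$ is a quasidistribution; if it is nonnegative, so is $q$. Summing over $f$ recovers $\prod_{x_P}$-marginals correctly, since only the coordinate $f(x_P)$ at the realised $x_P$ matters and the others sum to $1$. Finally $u'$ is absorbed: sum it out into the children's kernels, or keep it as a parentless latent if one is allowed.

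\emph{Direction $\mathfrak{r}(\mathcal{G},u)\Rightarrow\mathcal{G}$.} Here $u$ is latent in $\mathcal{G}$, and the exogenised graph gives each child $c$ the parents $P\cup(\pa(c)\setminus\{u\})$. Let $u$ in $\mathcal{G}$ simply copy $x_P$: take $\mathcal{X}_u=\mathcal{X}_P$ with the deterministic kernel $p(x_u|x_P)=\delta_{x_u,x_P}$. Each child then reads $x_P$ from $x_u$ and applies its old kernel. Since a deterministic kernel is nonnegative, this is valid for all three notions.

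\emph{Checking the type restrictions.} This bookkeeping is the main obstacle, because in the quasinoise and quasiresponse definitions latent kernels must be nonnegative. In the forward direction, when $u$ carries a quasi kernel in the quasilatent setting, $q$ is a quasidistribution on a parentless latent, which is allowed there. When $u$ is nonnegative (quasiresponse and quasinoise settings), $q$ is nonnegative, which is also allowed. Noise variables of observed vertices are untouched, and children's kernels stay deterministic given $(f,x_P,\text{noise})$ in the quasinoise case. So I would verify case by case that the kernel composition $x_c\mid f,x_P,\dots$ preserves each restriction, and that marginalising $f$ gives exactly $\sum_{x_u}$ in the original formula.
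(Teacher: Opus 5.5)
Your forward direction ($\mathcal{G}\Rightarrow\mathfrak{r}(\mathcal{G},u)$) is essentially the paper's argument. The paper writes $x_u=f_u(e_u,x_{\pa(u)})$ for a noise variable $e_u$, sets $u_{new}:=e_u$, and lets each child apply $f_u$. Your response-function variable with weight $q(f)=\prod_{x_P}p(f(x_P)|x_P)$ is the canonical such $e_u$, and your type bookkeeping is correct. One correction there: the option of ``summing $u'$ out into the children's kernels'' is precisely the naive construction you already rejected, and it fails as soon as $|C|\ge 2$. You must keep $u'$ as a parentless latent. This is legitimate because $\mathfrak{r}(\mathcal{G},u)$ only erases the arrows \emph{into} $u$; the vertex $u$ and its arrows to $C$ remain.

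The genuine gap is in the reverse direction, and it comes from the same misreading of the exogenised graph. In $\mathfrak{r}(\mathcal{G},u)$ a child $c\in C$ has parent set $\pa_\mathcal{G}(c)\cup P$, which still contains $u$. It is not $P\cup(\pa(c)\setminus\{u\})$ as you write. Moreover, the exogenised $u$ carries its own distribution $p^{\mathfrak{r}}(x_u)$, which may be negative in the quasilatent case. The kernels you must reproduce are therefore $p^{\mathfrak{r}}(x_c\,|\,x_{\pa_\mathcal{G}(c)\setminus\{u\}},x_u,x_P)$, and a $u$ that deterministically copies $x_P$ gives the children no access to $x_u$.

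Concretely, if $P=\emptyset$ then $\mathfrak{r}(\mathcal{G},u)=\mathcal{G}$. Yet your construction makes $u$ constant, and so it loses every correlation the children obtained through $u$.

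The fix, which is the paper's construction, is to let $u$ in $\mathcal{G}$ output the pair $(x_u^{\mathrm{old}},x_P)$. Its kernel is $p(x_u^{\mathrm{old}},y\,|\,x_P)=p^{\mathfrak{r}}(x_u^{\mathrm{old}})\,\delta_{y,x_P}$, and each child feeds both components into its old kernel. Summing over $u$ then returns exactly the exogenised factorisation. This kernel is a quasikernel or a nonnegative kernel exactly when $p^{\mathfrak{r}}(x_u)$ is, so each of the three types is preserved.
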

\begin{proof}
For the 'if' direction, assume $p(x_V)$ is the marginal of a quasidistribution with quasilatents/quasiresponses/quasinoise Markov wrt $\mathfrak{r}(\mathcal{G},u)$. Then one can create a quasidistribution of the same type consistent with $\mathcal{G}$ that marginalises to $p$ by simply setting $u_{new}|x_{\pa_\mathcal{G}(u)}:=(u,x_{\pa_\mathcal{G}(u)})$, i.e. make the new $u$ simply behave the same as before, but also send down the values of its parents. If only latents/responses/local noise variables were quasiprobabilistic before, the same is true for the new construction.

For the 'only if' direction, assume $p(x_V)$ is the marginal of a quasidistribution with quasilatents/quasiresponses/quasinoise Markov wrt $\mathcal{G}$. Extract the intrinsic randomness of the variable corresponding to $u$ into a noise variable $e_u$, i.e. assume $u=f_u(e_u,x_{\pa(u)})$ for some deterministic function $f_u$. Then, we can create a quasidistribution consistent with $\mathfrak{r}(\mathcal{G},u)$ that marginalises to $p$ as follows: make $u_{new}:=e_u$ and make each child of $u$ apply the function $f_u$ to simulate the value of $u$. Each such variable has all the ingredients to do so: it is now a child of all the original parents of $u$, as well as its noise variable. Once again, this constructs the same type of quasidistribution (i.e. with quasilatents/quasiresponses/quasinoise) as before.
\end{proof}

\begin{proposition}[Elimination of Redundant Quasilatents $1$ - Lemma $3.8$]
Let $\mathcal{G}$ be a DAG with vertices $V\dot\cup\{u,w\}$ such that $\pa(u)=\pa(w)=\emptyset$ and $\ch(w)\subseteq\ch(u)$. Then $p(x_V)$ is the marginal of a quasidistribution with quasilatents/quasiresponses/quasinoise Markov wrt $\mathcal{G}$ iff the same is true for the graph $\mathcal{G}_{-w}$ obtained by removing $w$.
\end{proposition}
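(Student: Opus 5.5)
For the direction from $\mathcal{G}_{-w}$ to $\mathcal{G}$ I would simply add $w$ back as a dummy latent. Give $w$ a one-point sample space $\{*\}$ with $p(x_w=*)=1$, and let every child of $w$ ignore its value. Then all conditional factors of the construction for $\mathcal{G}_{-w}$ are reused unchanged. The new factor for $w$ is nonnegative and deterministic, so a construction with quasilatents, quasiresponses or quasinoise on $\mathcal{G}_{-w}$ yields one of the same type on $\mathcal{G}$ with the same marginal on $V$.

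For the converse I would follow the idea of Lemma 3.8 in \cite{mDAGs}: merge $w$ into $u$. Both are parentless and $\ch(w)\subseteq\ch(u)$, so every child of $w$ already sees $u$. In the construction for $\mathcal{G}_{-w}$, let $u$ take values in $\mathcal{X}_u\times\mathcal{X}_w$ with
\[ p'(x'_u)=p'(x_u,x_w):=p(x_u)\,p(x_w). \]
Each child $c$ of $u$ uses the appropriate component of $x'_u$. If $c\in\ch(w)$, it evaluates $p(x_c\mid x_{\pa_\mathcal{G}(c)})$ with both $x_u$ and $x_w$ read off $x'_u$; otherwise it evaluates its old response with $x_u$ alone. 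The remaining factors are left untouched. Summing over $x'_u$ reproduces the original sum over $(x_u,x_w)$ term by term, so the marginal on $V$ is unchanged. I would also check that $\mathcal{G}_{-w}$ is a valid DAG here: $w$ has no parents and $\ch(w)\subseteq\ch(u)$, so no edges need to be redirected.

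I would then verify that each of the three types is preserved. For quasiresponses, $p(x_u)$ and $p(x_w)$ are nonnegative, so their product is nonnegative; the responses stay quasi and only get relabelled. For quasinoise, the latent product is again nonnegative, and the noise variables and deterministic response functions are simply relabelled. For quasilatents, the product of two normalised quasidistributions is a normalised quasidistribution, since $\sum p(x_u)p(x_w)=1$, and the observed responses remain nonnegative.

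The main obstacle is only bookkeeping. It arises if the lemma is applied in a setting where $u$ or $w$ is not already a root; the hypothesis $\pa(u)=\pa(w)=\emptyset$ rules this out. Otherwise one would first need the exogenisation proposition. Finiteness of the latents (Appendix \ref{sec:universal_bound}) ensures that the product sample space is finite.
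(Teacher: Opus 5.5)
Your proposal is correct and follows the paper's own proof. The direction from $\mathcal{G}_{-w}$ to $\mathcal{G}$ is the trivial one. For the converse, you let $u$ jointly simulate $u$ and $w$ as an independent product, delete $w$, and have the children in $\ch(u)\setminus\ch(w)$ ignore the $w$-component, exactly as the paper does, and your type-by-type check just spells out the paper's one-line remark that the resulting quasidistribution is of the same type.
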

\begin{proof}
The 'if' direction is trivial. For the 'only if' direction, make $u_{new}$ simulate both $u$ and $w$ independently and delete $w$. All nodes in $\ch(u)\setminus\ch(w)$ can simply ignore the $w$-component of $u_{new}$. Trivially, the marginal of this construction over $V$ is the same, and the obtained quasidistribution is of the same type.
\end{proof}

\begin{proposition}[Elimination of Latent Variables with One Child - Lemma $3.9$]
Let $\mathcal{G}$ be a DAG with vertices $V\dot\cup\{u\}$ such that $u$ has no parents and at most one child. Then $p(x_V)$ is the marginal of a quasidistribution with quasilatents/quasireponses/quasinoise Markov with respect to $\mathcal{G}$ iff the same is true for $\mathcal{G}_{-u}$.
\end{proposition}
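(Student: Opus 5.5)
The plan is to prove both directions by direct constructions, handling the three types of quasidistribution (quasilatents, quasiresponses, quasinoise) separately but in parallel, as in the previous two propositions.

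The `if' direction is trivial. Given a factorisation with respect to $\mathcal{G}_{-u}$, I would adjoin $u$ as a latent with a single fixed value and probability $1$. Its (at most one) child simply ignores it, so its response function is unchanged. The factor $p(x_u)=1$ is nonnegative, so the construction stays of the same type in each case, and the marginal over $V$ is unchanged.

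For the `only if' direction, suppose $p(x_V)$ is the marginal of a quasidistribution Markov with respect to $\mathcal{G}$, where $u$ has no parents and at most one child. If $u$ has no children, summing over $x_u$ removes the factor $p(x_u)$, since it sums to $1$ (normalisation holds even for quasidistributions); this leaves a factorisation on $\mathcal{G}_{-u}$ of the same type. If $u$ has exactly one child $c$, I would absorb $u$ into $c$'s response by defining
$$p'(x_c|x_{\pa(c)\setminus\{u\}}):=\sum_{x_u}p(x_c|x_{\pa(c)},x_u)\,p(x_u).$$
Summing over $x_c$ gives $\sum_{x_u}p(x_u)=1$, so this is a valid conditional quasidistribution, and substituting it reproduces the marginal over $V$.

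The main obstacle is checking that the absorption preserves the type of model.
\begin{itemize}
\item For quasiresponses it is immediate: $p(x_u)\ge 0$ and the new response is allowed to be quasi.
\item For quasilatents, $p(x_u)$ may be negative, so $p'$ could fail to be nonnegative. In this case I would not absorb $u$ into $c$ but into a quasiprobabilistic source. If $c$ has another latent parent $w$, I would replace $w$ by $w'=(w,u)$ with quasidistribution $p(x_w)p(x_u)$; $c$ reads both components and the other children of $w$ ignore the $u$-component, keeping all responses nonnegative. If $c$ has no other latent parent, I would use the noise picture: make $c$'s randomness come from $u$ itself, treating $u$ as a noise variable $N_c$ of $c$, which lands in the quasinoise form for $\mathcal{G}_{-u}$.
\item For quasinoise, $u$ is a nonnegative latent, so I would fold it into the noise variable of $c$ by setting $N'_c=(N_c,u)$ with quasidistribution $p(n_c)p(x_u)$; $c$'s response stays deterministic in $(x_{\pa(c)\setminus\{u\}},n'_c)$.
\end{itemize}

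The quasilatent case with no other latent parent of $c$ is the delicate one, because it leaves the pure quasilatent form. If a self-contained argument is required there, I would introduce the quasiprobabilistic noise as a fresh latent parent of $c$ alone. That latent has a single child, so this does not immediately resolve the issue, which is presumably why the paper works modulo the equivalence established at the end of the section. Since this appendix precedes Theorem~\ref{equivalence_theorem}, I would check this case carefully, accepting a result stated up to that equivalence if needed.
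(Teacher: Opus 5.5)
Your `if' direction, the childless case, and three of the `only if' cases are sound and follow the paper: absorbing a nonnegative $u$ for quasiresponses, folding $u$ into $N_c$ for quasinoise, and merging $u$ into another latent parent $w$ of $c$ for quasilatents (the paper's appeal to the Lemma~3.8 analogue). The gap is the remaining quasilatent case, where $u$ is the only latent parent of its child $c$. There you drop direct absorption because $p'$ ``could fail to be nonnegative'', and fall back on the quasinoise form plus Theorem~\ref{equivalence_theorem}. That is circular: the theorem is proved for reduced graphs, and its extension to arbitrary DAGs is exactly what this appendix supplies. The paper does not work modulo the equivalence here. It settles this case directly, and, as you note, a fresh single-child latent does not help.

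The missing idea is that in this case absorption \emph{automatically} gives a nonnegative response. After absorbing $u$, $c$ has no latent parents in $\mathcal{G}_{-u}$, so its factor $p'(x_c|x_{\pa(c)})$ depends on no latent value and pulls out of the latent sum. Sum out the descendants of $c$, then everything except $c$ and $\pa(c)$. This gives $p(x_c,x_{\pa(c)})=p'(x_c|x_{\pa(c)})\,R(x_{\pa(c)})$, and summing over $x_c$ shows $R(x_{\pa(c)})=p(x_{\pa(c)})$. Hence $p'(x_c|x_{\pa(c)})=p(x_c|x_{\pa(c)})\in[0,1]$ wherever $p(x_{\pa(c)})>0$. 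This is Lemma~\ref{proof_no_latent_parents}, and it is what the paper means by replacing $u$ with a nonnegative variable via the opening of the $\mathcal{W}_n\subseteq\mathcal{W}_l$ case. Where $p(x_{\pa(c)})=0$, reset $p'(\cdot|x_{\pa(c)})$ to a point mass on some $x_c$ with $p'(x_c|x_{\pa(c)})\neq 0$. This leaves $p(x_V)$ unchanged, since $p(x_V)=p'(x_c|x_{\pa(c)})\,G(x_V)$ with $G$ not involving $p'$. So your first absorption construction already witnesses the quasilatent form on $\mathcal{G}_{-u}$, and no detour through quasinoise is needed.
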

\begin{proof}
The 'if' direction is trivial. For the 'only if' direction, we can assume WLOG that $u$ has a child $v$ (otherwise the case is trivial). We can also assume WLOG that $v$ has no other unobserved parents. Otherwise simply apply the proposition just above to eliminate $u$ as a redundant quasilatent. If $u$ is nonnegative, then it can be eliminated using the original version of this result in \cite{mDAGs} (alternatively, see the proof of $\mathcal{W}_n\subseteq\mathcal{W}_r$ in Appendix \ref{sec:equivalence_theorem}). If $u$ is quasiprobabilistic, we can actually replace $u$ with a nonnegative variable and keep the same marginal over $V$ (see the beginning of the case $\mathcal{W}_n\subseteq\mathcal{W}_l$ in Appendix \ref{sec:equivalence_theorem}), then eliminate it. In any case, the same type of quasidistribution is obtained.
\end{proof}

\section{Proof of Theorem \ref{equivalence_theorem}}    \label{sec:equivalence_theorem}
We assume $\mathcal{G}$ is reduced. Before the proof, we start with a few useful definitions.

\begin{definition}[Assignments]
    Given a DAG $\mathcal{G}$, an \emph{assignment} is a set of fixed quasidistributions $p(x_v|x_{\pa_{\mathcal{G}(v)}})$, one for each vertex $v\in V\dot{\cup}L$ of $\mathcal{G}$ (as well as the necessary probability spaces for the latents). We say the assignment \emph{achieves} distribution $p(x_V)$ over the vertices of $\mathcal{G}$ if $$p(x_V)=\sum_{x_L}\prod_{v\in V\cup L} p(x_v|x_{\pa_{\mathcal{G}}(v)}).$$ Furthermore, we say the assignment:
    \begin{itemize}
        \item[$\bullet$] \emph{witnesses} $p\in \mathcal{W}_l$ if it achieves $p$, and $p(x_v|x_{\pa_{\mathcal{G}}(v)})\in[0,1]$ for all $v\in V$.
        \item[$\bullet$] \emph{witnesses} $p\in \mathcal{W}_r$ if it achieves $p$ and $p(x_l)\in[0,1]$ for all $l\in L$.
    \end{itemize}
\end{definition}

\begin{definition}[Noisy Assignments]
    Given a DAG $\mathcal{G}$, a \emph{noisy assignment} is a set of random variables $N_v$ (whose values we will denote by $n_v$ as opposed to $x_{N_v}$) for each $v\in V$, together with deterministic distributions $p(x_v|x_{\pa_{\mathcal{G}}(v)}, n_v)$ for all $v\in V$, quasiprobabilistic distributions $p(x_l)$ for each latent $l\in L$ of $\mathcal{G}$, and quasiprobabilistic distributions $p(n_v)$ for each $v\in V$. We say this noisy assignment \emph{achieves} $p(x_V)$ if $$p(x_V) = \sum_{x_L,n_V}\,\,\prod_{v\in V} p(x_v|x_{\pa_{\mathcal{G}}(v)}, n_v)\prod_{l\in L} p(x_l)\prod_{v\in V} p(n_v).$$ Furthermore, we say the noisy assignment \emph{witnesses} $p\in \mathcal{W}_n$ if $p(x_l)\in[0,1]$ for all $l\in L$.
\end{definition}

Note the use of sums instead of integrals. By Appendix \ref{sec:universal_bound} we can assume finiteness of quasiprobabilistic latent variables (which covers the quasinoise and the quasilatents cases) and by \cite{universal_bound} we can assume finiteness of nonnegative latent variables (which covers the quasiresponses case). 

We are now ready to prove the quasimarginal set definition equivalence.

\begin{no_number_theorem}[\textbf{Equivalence of Quasimarginal Sets}]
For any DAG $\mathcal{G}$, the following holds:
    $$\mathcal{W}_l(\mathcal{G})=\mathcal{W}_r(\mathcal{G})=\mathcal{W}_n(\mathcal{G}).$$
\end{no_number_theorem}

\begin{proof}
Fix a DAG $\mathcal{G}$. Start by noting that for any noisy assignment, there exist functions $f_v:\mathcal{X}_{\pa_{\mathcal{G}}(v)}\times \mathcal{X}_{N_v} \rightarrow \mathcal{X}_v$, one for each $v\in V$, defined as $f_v(x_{\pa_{\mathcal{G}}(v)},n_v)=x_v^*$, where $x_v^*\in \mathcal{X}_v$ is the only value such that $p(x_v^*|x_{\pa_\mathcal{G}(v)}, n_v)\neq0$, or equivalently $p(x_v^*|x_{\pa_\mathcal{G}(v)}, n_v)=1$ (since for noisy assignments, these conditionals are deterministic). Therefore, a noisy assignment achieves $p(x_V)$ iff the following holds:
$$p(x_V) = \sum_{\substack{x_L,n_V\\f_v(x_{\pa(v)},n_v)=x_v\,\,\forall \,v\in V}}\prod_{l\in L} p(x_l)\prod_{v\in V} p(n_v).$$
The rest of the proof consists of showing that $\mathcal{W}_r=\mathcal{W}_n$ and $\mathcal{W}_l=\mathcal{W}_n$.\\\\
$\bullet\,\,\,\mathcal{W}_n\subseteq\mathcal{W}_r$\\
\textit{Idea:} encapsulate the noise variables into the response function of each observed node. Assume $p(x_V)$ is in $\mathcal{W}_n$ and consider a noisy assignment $p^*$ that witnesses this. Then,
\begin{align*}
p(x_V) &= \sum_{x_L,n_V}\,\,\prod_{v\in V} p^*(x_v|x_{\pa_{\mathcal{G}}(v)}, n_v)\prod_{l\in L} p^*(x_l)\prod_{v\in V} p^*(n_v)\\
    &= \sum_{x_L,n_V}\,\,\prod_{v\in V} p^*(x_v, n_v|x_{\pa_{\mathcal{G}}(v)})\prod_{l\in L} p^*(x_l)\\
    &= \sum_{x_L}\prod_{v\in V} \left(\sum_{n_v} p^*(x_v, n_v|x_{\pa_{\mathcal{G}}(v)})\right)\prod_{l\in L} p^*(x_l)\\
    &=\sum_{x_L}\prod_{v\in V} p^*(x_v|x_{\pa_{\mathcal{G}}(v)})\prod_{l\in L} p^*(x_l),
\end{align*}
and hence $p^*$ can be used to construct an assignment that achieves $p$. Additionally, since we assumed that $p^*$ witnesses $p\in\mathcal{W}_n$, we must have $p^*(x_l)\in[0,1]$ for all $l\in L$, therefore we can indeed say the newly constructed assignment witnesses $p\in\mathcal{W}_r$.\\\\

\noindent
$\bullet\,\,\,\mathcal{W}_r\subseteq\mathcal{W}_n$\\
\textit{Idea:} isolate the randomness of response functions into noise variables. Consider $p\in\mathcal{W}_r$ and an assignment $p^*$ that witnesses this. The goal is to create a noise variable for $X_v$ that simulates its distribution conditioned on its parents. This will lead to a noisy assignment witnessing $p\in\mathcal{W}_n$

As such, consider for each $X_v$ the variable $N_v$ given by tuple of independent variables $$N_v=\left(N_v^{(c_1)}, N_v^{(c_2)},\dots,N_v^{(c_s)}\right),$$ $\{c_1,\dots,c_s\}=\mathcal{X}_{\pa_{\mathcal{G}}(x)}$, i.e. the $c$'s are the possible values of the parents of $v$ in the assignment $p^*$. Note that we can assume WLOG that $|\mathcal{X}_{\pa_\mathcal{G}(x)}|$ is a finite quantity by Appendix \ref{sec:universal_bound}. We choose $$N_v^{(c_i)}\sim p^*(x_v|x_{\pa(x_v)}=c_i).$$

Finally, we define the functions $f_v$ of our noisy assignment as:
$$f_v\left(x_{\pa(v)},n_v=\left(n_v^{(c_1)},\dots,n_v^{(c_s)}\right)\right):=n_v^{(x_{\pa(v)})}.$$ Therefore, $N_v^{(c_i)}$ dictates the behaviour of $X_v$ for one value of its parents. If we consider the noisy assignment $p^{**}$ that has $p^{**}(x_l)=p^*(x_l)$ for all $l\in L$ and noise variables described above, then we have the following for every observed node $v\in V$:
\begin{align*}
p^{**}(x_v|x_{\pa(v)})
    &=\sum_{n_v}p^{**}(x_v|x_{\pa_{\mathcal{G}}(v)},n_v)p^{**}(n_v)\\ 
    &= \sum_{\substack{n_v\\f_v(x_{\pa(v)},n_v)=x_v}} p^{**}(n_v)\\
    &= \sum_{\substack{\left(n_v^{(c_1)},\dots,n_v^{(c_s)}\right)\\f_v\left(x_{\pa(v)},n_v=\left(n_v^{c_1},\dots,n_v^{c_s}\right)\right)=x_v}} p^{**}\left(n_v=\left(n_v^{(c_1)},\dots,n_v^{(c_s)}\right)\right)\\
    &= \sum_{\substack{\left(n_v^{(c_1)},\dots,n_v^{(c_s)}\right)\\n_v^{\left(x_{\pa(v)}\right)}=x_v}} p^{**}\left(n_v^{(c_1)}\right)\dots p^{**}\left(n_v^{(c_s)}\right)\\
    &=p^{**}\left(n_v^{\left(x_{\pa(v)}\right)}=x_v\right)\\
    &=p^*(x_v|x_{\pa(v)}).
\end{align*}

Finally, this implies that:
\begin{align*}
&\sum_{x_L,n_V}\,\prod_{v\in V} p^{**}(x_v|x_{\pa_{\mathcal{G}}(v)}, n_v)\prod_{l\in L} p^{**}(x_l)\prod_{v\in V} p^{**}(n_v)\\
    &\hspace{0.5cm}= \sum_{x_L}\,\prod_{v\in V}\sum_{n_v} p^{**}(x_v|x_{\pa_{\mathcal{G}}(v)}, n_v)p^{**}(n_v)\prod_{l\in L} p^{**}(x_l)\\
    &\hspace{0.5cm}= \sum_{x_L}\,\prod_{v\in V}p^{**}(x_v|x_{\pa_{\mathcal{G}}(v)})\prod_{l\in L} p^{**}(x_l)\\
    &\hspace{0.5cm}= \sum_{x_L}\prod_{v\in V} p^*(x_v|x_{\pa_{\mathcal{G}}(v)})\prod_{l\in L} p^*(x_l)\\
    &\hspace{0.5cm}= p(x_V),
\end{align*}
so indeed $p^{**}$ achieves $p(x_V)$. Since we haven't changed the latent variables, they are still nonnegative, and the response functions are deterministic, so $p^{**}$ is indeed a noisy assignment witnessing $p\in\mathcal{W}_n$.

Therefore, we have shown that $\mathcal{W}_r=\mathcal{W}_n$. We move on to $\mathcal{W}_l=\mathcal{W}_n$ next.\\\\

\noindent
$\bullet\,\,\,\mathcal{W}_n\subseteq\mathcal{W}_l$\\
\textit{Idea:} Encapsulate the noise variables into latent parents where possible, and into response functions where no latent parents are present. Consider $p\in\mathcal{W}_n$ and let $p^*$ be a noisy assignment witnessing that. Note that if an observed vertex $a$ has no latent parents in $\mathcal{G}$, then its noise variable can be assumed nonnegative WLOG. To see this, assume $N_a$ is not nonnegative. Then one can do the construction we used to prove $\mathcal{W}_n\subseteq\mathcal{W}_r$ to create an assignment $p^*_1$ witnessing $p\in\mathcal{W}_r$, and then use the construction from $\mathcal{W}_r\subseteq\mathcal{W}_n$ to create another noisy assignment $p_2^*$ witnessing $p\in\mathcal{W}_n$. However, this new noisy assignment has $$N_a=\left(N_a^{(c_1)},\dots,N_a^{(c_s)}\right),$$ where $p^*_2\left(n_a^{(c_i)}\right)=p_1^*(x_a|x_{\pa_{\mathcal{G}}(a)})$. But since $a$ has no latent parents, the fact that $p_1^*$ achieves $p(x_V)$ implies by Lemma \ref{proof_no_latent_parents} that
\begin{equation}
p_1^*(x_v|x_{\pa_{\mathcal{G}}(v)})=p(x_v|x_{\pa_\mathcal{G}(v)}).\label{nonnegative_noise}
\end{equation}
Therefore, each of the $N_a^{(c_i)}$'s has nonnegative distributions, so $N_a$ is nonnegative as well.

We will now iteratively eliminate noise variables from the noisy assignment and have them be simulated by latent variables where possible, or by response variables where no latent parents are present.

More precisely, for each observed variable $a\in V$ with no latent parents, make $N_a$ a trivial deterministic variable with a single value $1_a$ and set $$p_{new}^*(x_a|x_{\pa_{\mathcal{G}}(a)},1_a):=\sum_{n_a}p^*(x_a|x_{\pa_{\mathcal{G}}(a)},n_a)p^*(n_a).$$ Otherwise, $p^*_{new}$ behaves exactly the same as $p^*$.

Therefore, $p^*_{new}$ achieves $p(x_V)$ as well:
\begin{align*}
p(x_V) 
    &=\sum_{x_L,n_V}\,\,\prod_{v\in V} p^*(x_v|x_{\pa_{\mathcal{G}}(v)}, n_v)\prod_{l\in L} p^*(x_l)\prod_{v\in V} p^*(n_v)\\
    &=\sum_{x_L}\left(\sum_{n_a}p^*(x_a|x_{\pa_{\mathcal{G}}(a)}, n_a)p^*(n_a)\right)\cdot\\
    &\,\,\,\,\,\left(\sum_{n_{V\setminus\{a\}}}\,\,\prod_{v\in V\setminus\{a\}} p^*(x_v|x_{\pa_{\mathcal{G}}(v)}, n_v)\prod_{l\in L} p^*(x_l)\prod_{v\in V\setminus\{a\}} p^*(n_v)\right)\\
    &=\sum_{x_L}p_{new}^*(x_a|x_{\pa_{\mathcal{G}}(a)},1_a)\cdot\\
    &\,\,\,\,\,\left(\sum_{n_{V\setminus\{a\}}}\,\,\prod_{v\in V\setminus\{a\}} p^*_{new}(x_v|x_{\pa_{\mathcal{G}}(v)}, n_v)\prod_{l\in L} p^*_{new}(x_l)\prod_{v\in V\setminus\{a\}} p^*_{new}(n_v)\right)\\
    &=\sum_{x_L,n_V^{new}}\,\,\prod_{v\in V} p^*_{new}(x_v|x_{\pa_{\mathcal{G}}(v)}, n_v^{new})\prod_{l\in L} p^*_{new}(x_l)\prod_{v\in V} p^*_{new}(n_v^{new}),
\end{align*}
and the new response function $p^*_{new}(x_a|x_{\pa(a)})$ we gave $a$ is real nonnegative, as a result of (\ref{nonnegative_noise}). Next, replace $p^*$ by $p^*_{new}$ and repeat the process until we trivialise all of the noise variables associated with variables with no latent parents.

We then apply a similar process for the rest of the observed variables. For these, the noise variables can be from all of $\mathbb{R}$, so we hide them inside a latent parent.

More precisely, for each observed variable $a$ that has at least one latent variable $l_a$, make $N_a$ be trivial with a single value $1_a$, replace $l_a$ by $l_a^{new}=(l_a,N_a)$, and set $$p^*_{new}\left(x_a|x_{\pa_{\mathcal{G}}(a)\setminus\{l_a\}},x_{l_a^{new}}=(x_{l_a},n_a),1_a\right):=p^*(x_a|x_{\pa_{\mathcal{G}}(a)\setminus\{l_a\}},x_{l_a},n_a).$$ Otherwise, make $p^*_{new}$ behave the same as $p^*$. In particular, for any other observed child $b$ of $l_a$, set $$p^*_{new}\left(x_b|x_{\pa_{\mathcal{G}}(b)\setminus\{l_a\}},x_{l_a^{new}}=(x_{l_a},n_a),n_b\right):=p^*(x_b|x_{\pa_{\mathcal{G}}(b)\setminus\{l_a\}},x_{l_a},n_b),$$
so given that $x_{l_a^{new}}=(x_{l_a},n_a)$, we have that $$p^*_{new}\left(x_b|x_{\pa_{\mathcal{G}}(b)},n_b\right):=p^*(x_b|x_{\pa_{\mathcal{G}}(b)},n_b).$$

Then, once again, $p^*_{new}$ achieves $p(x_V)$ as well:
\begin{align*}
p(x_V)
    &=\sum_{x_L,n_V}\,\,\prod_{v\in V} p^*(x_v|x_{\pa_{\mathcal{G}}(v)}, n_v)\prod_{l\in L} p^*(x_l)\prod_{v\in V} p^*(n_v)\\
    &=\sum_{x_{L\setminus\{l_a\}},n_{V\setminus\{a\}}}\sum_{n_a,x_{l_a}} \left(p^*(n_a)p^*(x_{l_a})\right)\cdot
    p^*(x_a|x_{\pa_{\mathcal{G}}(a)\setminus\{l_a\}},x_{l_a}, n_a)\cdot\\
    &\,\,\,\,\,\left(\prod_{v\in V\setminus\{a\}} p^*(x_v|x_{\pa_{\mathcal{G}}(v)}, n_v)\prod_{l\in L\setminus\{l_a\}} p^*(x_l)\prod_{v\in V\setminus\{a\}} p^*(n_v)\right)\\
    &=\sum_{x_{L\setminus\{l_a\}},n_{V\setminus\{a\}}}\,\,\sum_{x_{l_a}^{new}=(n_a,x_{l_a})} p^*_{new}(x_{l_a}^{new})\cdot p^*_{new}\left(x_a|x_{\pa_{\mathcal{G}}(a)\setminus\{l_a\}},x_{l_a^{new}},1_a\right)\\
    &\,\,\,\,\,\left(\prod_{v\in V\setminus\{a\}} p^*_{new}(x_v|x_{\pa_{\mathcal{G}}(v)}, n_v)\prod_{l\in L\setminus\{l_a\}} p^*_{new}(x_l)\prod_{v\in V\setminus\{a\}} p^*_{new}(n_v)\right)\\
    &=\sum_{x_L^{new},n_V^{new}}\,\,\prod_{v\in V} p^*_{new}(x_v|x_{\pa_{\mathcal{G}}(v)}, n_v^{new})\prod_{l\in L} p^*_{new}(x_l^{new})\prod_{v\in V} p^*_{new}(n_v^{new}).
\end{align*}
Note that the response distribution of $a$ is still deterministic and hence real nonnegative in $p^*_{new}$. Set $p^*:=p^*_{new}$ and keep repeating the process until all noise variables have been trivialised.

At the end of this process, every noise variable in the model is trivial, so we can ignore them and obtain an assignment that achieves $p(x_V)$. Additionally, each observed variable has a response distribution that is either deterministic (if it has latent parents), or in any case real nonnegative (if it has no latent parents). Therefore, the constructed assignment witnesses $p\in\mathcal{W}_l$.\\\\

\noindent
$\bullet\,\,\,\mathcal{W}_l\subseteq\mathcal{W}_n$\\
\textit{Idea:} Simulate a quasiprobabilistic latent variable via a classical latent variables and quasiprobabilistic noise variables for its children. Given any quasiprobabilistic finite discrete variable $\Lambda$, we can split its domain $D_\Lambda$ into $D_\Lambda=D_\Lambda^+\dot{\cup}D_\Lambda^-$ as follows:
\begin{align*}
    &D_\Lambda^+=\{\lambda\in D_\Lambda:\,p(\lambda)\geq 0\} &D_\Lambda^-=\{\lambda\in D_\Lambda:\,p(\lambda)< 0\}.
\end{align*}
Now let $p^+_\Lambda=\sum_{\lambda\in D_\Lambda^+} p(\lambda)$, $p^-_\Lambda=\sum_{\lambda\in D_\Lambda^-} p(\lambda)$. These are finite quantities satisfying $p^+_\Lambda+p^-_\Lambda=1$, $p^+_\Lambda\geq 1$, $p^-_\Lambda\leq 0$, with $p^-_\Lambda=0$ iff $\Lambda$ is nonnegative.

We can construct the following variables:
\begin{itemize}
    \item[$\bullet$] $\Lambda_B$ having $D_{\Lambda_B}=\{+,-\}$, with $\mathbb{P}(\Lambda_B=+)=p^+_\Lambda$, $\mathbb{P}(\Lambda_B=-)=p^-_\Lambda$;\\
    \item[$\bullet$] $\Lambda^+$ having $D_{\Lambda^+}=D_\Lambda^+$, $\mathbb{P}(\Lambda^+=\lambda^+)=\frac{p(\lambda^+)}{p^+_\Lambda}$;\\
    \item[$\bullet$] $\Lambda^-$ having $D_{\Lambda^-}=D_\Lambda^-$, $\mathbb{P}(\Lambda^-=\lambda^-)=\frac{p(\lambda^-)}{p^-_\Lambda}$.
\end{itemize}

Then one can see that sampling $\Lambda$ is the same as sampling $\Lambda_B$ and then sampling $\Lambda^+$ if the result was $+$ and $\Lambda^-$ if the result was $-$.

\begin{definition}[Ensemble of Components]
    We call the random variable $(\Lambda_B,\Lambda^+,\Lambda^-)$ the \emph{ensemble of components} of quasiprobabilistic finite discrete random variable $\Lambda$.
\end{definition}

Now consider $p\in\mathcal{W}_l$ and an assignment witnessing this. Lemma \ref{proof_split_quasilatents_assignment} shows that there exists an assignment witnessing $p\in\mathcal{G}_l$ that replaces all latent variables $\Lambda$ in the previous one by $(\Lambda_B,\lambda^+,\lambda^-)$. We will use this latter assignment (which we denote $p^*$) to prove that $p\in\mathcal{W}_n$.

The crucial step of the proof is to show that a binary quasiprobabilistic latent can be simulated by its children if they have access to quasiprobabilistic noise and a nonnegative shared latent variable. We will replace each quasiprobabilistic latent variable in $p^*$ by a nonnegative one and will add noise variables to its children which will simulate its original behaviour. After we have done this for each latent, $p^*$ will become a noisy assignment witnessing $p\in\mathcal{W}_n$. 

We start with the assignment $p^*$ above, which has each latent variable replaced by its ensemble of components. We can also consider this $p^*$ to have a trivial 'noise variable' for each observed vertex, i.e. each vertex $v\in V$ has a 'noise variable' $N_v$ that can take only one value $1_v$ with probability $1$. Therefore, we can define $$p^*(x_v|x_{pa_{\mathcal{G}}(v)},1_v):=p^*(x_v|x_{pa_{\mathcal{G}}(v)}).$$
It is worth mentioning that these are not proper noise variables, and we cannot yet view $p^*$ as a noisy assignment, because the response distributions defined above are not deterministic.

We also have:
\begin{equation}    \label{noise_achieving_goal}
p(x_V) = \sum_{x_L,n_V}\,\,\prod_{v\in V} p^*(x_v|x_{\pa_{\mathcal{G}}(v)}, n_v)\prod_{l\in L} p^*(x_l)\prod_{v\in V} p^*(n_v),
\end{equation}
with the reminder that the sum over each $n_v$ is trivial.

Now we will describe a process to replace a latent variable by a nonnegative one while preserving the equality \eqref{noise_achieving_goal}. Pick a latent node $l=(l_B,l^+,l^-)\in L$, with $|\ch(l)|=m.$ Now, apply Lemma \ref{proof_local_noise_simulating_shared_negativity} for $n:=m$ and $(p^+,p^-):=\Big(p^*(l_B=+),\,p^*(l_B=-)\Big)$, and consider $p^{(l)}$ to be a noisy assignment witnessing $p_{A_1,\dots A_m}\in \mathcal{W}_n(Com_m)$. Note that the lemma uses $\mathcal{W}_r$ instead, but we have already shown that $\mathcal{W}_r=\mathcal{W}_n$.

Note that assignment $p^{(l)}$ has only one latent variable, which we will call $\Lambda^{(l)}$. We will also assign one of the variables $A_1,\dots, A_m$ to each child of $l$; the one assigned to $v\in\ch(l)$ will be called $A_v$. The noise variable associated with $A_v$ in $p^{(l)}$ will be called $N_v^{(l)}$. The deterministic value of $A_v$ given $\lambda^{(l)}$ and $n_v^{(l)}$ is $f_v(\lambda^{(l)},n_v^{(l)})$.

Now replace in $p^*$ the variable $l_B$ by variable $\Lambda^{(l)}$, and for each $v\in\ch(l)$ replace $N_v$ by $(N_v,N_v^{(l)}).$ Also, set
\begin{gather*}
p^*_{new}\left(x_v\bigg|x_{\pa_{\mathcal{G}}(v)\setminus \{l\}}, (\lambda^{(l)},x_l^+,x_l^-), (n_v,n_v^{(l)})\right)\\
\hspace{2cm}:=p^*\left(x_v|x_{\pa_{\mathcal{G}}(v)\setminus \{l\}},\big(f_v(\lambda^{(l)},n_v^{(l)}),x^+,x^-\big),n_v\right).
\end{gather*}

Now, note that for all $l\in L,x_V,x_l^+,x_l^-,n_V$, the following holds:
\begin{equation}\label{substitution_equation_lemma}
\begin{gathered}
\sum_{x_{l_B}}p^*(x_{l_B})\prod_{v\in\ch(l)}p^*(x_v|x_{\pa_{\mathcal{G}}(v)\setminus \{l\}},(x_{l_B},x_l^+,x_l^-),n_v)\\
=\sum_{x_{A_{\ch(l)}}}p_{A_{\ch(l)}}(x_{A_{ch(l)}})\prod_{v\in\ch(l)}p^*(x_v|x_{\pa_{\mathcal{G}}(v)\setminus \{l\}},(x_{A_v},x_l^+,x_l^-),n_v)\\
=\sum_{x_{A_{\ch(l)}}}\bigg(\sum_{\lambda^{(l)},n_{\ch(l)}^{(l)}}p^{(l)}(\lambda^{(l)})\prod_{v\in\ch(l)}p^{(l)}(x_{A_v}|\lambda_l,n_v^{(l)})p^{(l)}(n_v^{(l)})\bigg)\prod_{v\in\ch(l)}p^*(x_v|x_{\pa_{\mathcal{G}}(v)\setminus \{l\}},(x_{A_v},x_l^+,x_l^-),n_v)\\
=\sum_{\lambda^{(l)},n_{\ch(l)}^{(l)}}p^{(l)}(\lambda^{(l)})\prod_{v\in\ch(l)}p^{(l)}(n_v^{(l)})\prod_{v\in\ch(l)}p^*(x_v|x_{\pa_{\mathcal{G}}(v)\setminus \{l\}},\big(f_v(\lambda^{(l)},n_v^{(l)}),x_l^+,x_l^-\big),n_v)\\
=\sum_{\lambda^{(l)},n_{\ch(l)}^{(l)}}p^{(l)}(\lambda^{(l)})\prod_{v\in\ch(l)}p^{(l)}(n_v^{(l)})\prod_{v\in\ch(l)}p^*_{new}\left(x_v\bigg|x_{\pa_{\mathcal{G}}(v)\setminus \{l\}}, (\lambda^{(l)},x_l^+,x_l^-), (n_v,n_v^{(l)})\right),
\end{gathered}
\end{equation}
where the first equality comes from the fact that $p_{A_{\ch(l)}}(x_{A_\ch(l)})$ is equal to $0$ whenever $A_\ch(l)\not\in\{(+,\dots,+),(-,\dots,-)\}$. As a consequence of these calculations we have:
\begin{align*}
p(x_V)
&=\sum_{x_L,n_V}\,\,\prod_{v\in V} p^*(x_v|x_{\pa_{\mathcal{G}}(v)}, n_v)\prod_{l'\in L} p^*(x_{l'})\prod_{v\in V} p^*(n_v)\\
&=\sum_{x_{L\setminus\{l\}},n_{V\setminus\ch(l)}}\,\,\prod_{v\in V\setminus ch(l)}p^*(x_v|x_{\pa_{\mathcal{G}}(v)}, n_v)\prod_{l'\in L\setminus\{l\}} p^*(x_{l'})\prod_{v\in V\setminus\ch(l)} p^*(n_v)\\
& \sum_{x_{l_B},x_l^+,x_l^-,n_{\ch(l)}}p^*(x_{l_B})p^*(x_l^+)p^*(x_l^-)\prod_{v\in\ch(l)}p^*(x_v|x_{\pa_{\mathcal{G}}(v)\setminus \{l\}},(x_{l_B},x_l^+,x_l^-),n_v)p^*(n_v)\\
&\overset{(\ref{substitution_equation_lemma})}{=}\sum_{x_{L\setminus\{l\}},n_{V\setminus\ch(l)}}\,\,\prod_{v\in V\setminus ch(l)}p^*(x_v|x_{\pa_{\mathcal{G}}(v)}, n_v)\prod_{l'\in L\setminus\{l\}} p^*(x_{l'})\prod_{v\in V\setminus\ch(l)} p^*(n_v)\\
& \sum_{\lambda^{(l)},x_l^+,x_l^-,n_{\ch(l)},n_{\ch(l)}^{(l)}}p^{(l)}(\lambda^{(l)})p^*(x_l^+)p^*(x_l^-)\\
& \prod_{v\in\ch(l)}p^*_{new}\left(x_v\bigg|x_{\pa_{\mathcal{G}}(v)\setminus \{l\}}, (\lambda^{(l)},x_l^+,x_l^-), (n_v,n_v^{(l)})\right)p^*(n_v)p^{(l)}(n_v^{(l)})\\
&=\sum_{x_L^{new},n_V^{new}}\,\,\prod_{v\in V} p^*_{new}(x_v|x_{\pa_{\mathcal{G}}(v)}, n_v^{new})\prod_{l\in L} p^*_{new}(x_l^{new})\prod_{v\in V} p^*_{new}(n_v^{new}).
\end{align*}

Hence, $p^*_{new}$ also achieves $p$, but has a nonnegative distribution for $l$. Replace $p^*$ by $p^*_{new}$ and repeat the process for all latents. At the end, we have an almost-noisy-assignment achieving $p$. The only issue is the response functions are not deterministic, but this can be fixed by extending the noise variables to also contain the response distributions of the observed variables (following the construction we used to prove $\mathcal{W}_r\subseteq\mathcal{W}_n$). Finally, this gives a noisy assignment that achieves $p$, and hence witnesses $p\in\mathcal{W}_n$. 

Therefore, $\mathcal{W}_l=\mathcal{W}_n$ and our proof is done.
\end{proof}

\begin{lemma} \label{proof_no_latent_parents}
Given DAG $\mathcal{G}$ with vertices $V\dot{\cup}L$, and $p$ a distribution that can be written in the form 
\begin{equation}
p(x_V)=\sum_{x_L}\prod_{v\in V} p_1(x_v|x_{\pa_{\mathcal{G}}(v)})\prod_{l\in L}p_1(x_l),\label{lemma_factorisation_expression}
\end{equation}
for quasidistributions $p_1(x_v|x_{\pa_\mathcal{G}(v)})$ and $p_1(x_l)$, then for any vertex $a\in V$ that has no latent parents, $$p_1(x_a|x_{\pa_\mathcal{G}(a)})=p(x_a|x_{\pa_\mathcal{G}(a)}).$$
\end{lemma}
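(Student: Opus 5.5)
We will prove the lemma by marginalising the factorisation \eqref{lemma_factorisation_expression} down to $a$ together with its parents. By the reduced-graph conventions of this section (all latents are parentless), and since $a$ has no latent parents, we have $\pa_\mathcal{G}(a)\subseteq V$. The conditional $p(x_a|x_{\pa_\mathcal{G}(a)})$ is therefore a genuine conditional of the observed distribution $p$, defined wherever $p(x_{\pa(a)})\neq 0$. The plan is to show
\begin{equation*}
p(x_a,x_{\pa_\mathcal{G}(a)}) = p_1(x_a|x_{\pa_\mathcal{G}(a)})\cdot p(x_{\pa_\mathcal{G}(a)}),
\end{equation*}
and then divide.

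First, fix a topological order of $V\cup L$ and let $S=\an_\mathcal{G}(\pa_\mathcal{G}(a))\cup\{a\}$. This set is ancestral in $\mathcal{G}$. We sum \eqref{lemma_factorisation_expression} over all observed variables outside $S\cap V$, processing them in reverse topological order. Each such $v$ appears in exactly one factor as the child variable, namely $p_1(x_v|x_{\pa(v)})$. Every other factor in which $x_v$ appears belongs to a descendant of $v$, and those descendants have already been summed out. Summing over $x_v$ therefore just uses normalisation $\sum_{x_v}p_1(x_v|x_{\pa(v)})=1$, which holds for quasidistributions as well. Latents that no longer have any children can likewise be summed out using $\sum_{x_l}p_1(x_l)=1$. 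The result is
\begin{equation*}
p(x_{S\cap V})=\sum_{x_{L'}}\prod_{v\in S\cap V}p_1(x_v|x_{\pa(v)})\prod_{l\in L'}p_1(x_l),
\end{equation*}
where $L'$ is the set of latents with a child in $S$.

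Second, we pull out the factor for $a$. Since $a$ has no latent parents, $p_1(x_a|x_{\pa(a)})$ depends only on observed values in $S$. It can therefore be factored out of the latent sum, and we then sum over $x_a$. The variable $x_a$ appears only in its own factor: $a$ is not an ancestor of $\pa(a)$ by acyclicity, and no other element of $S$ has $a$ as a parent except possibly through nodes already removed. Applying the same marginalisation to the remaining ancestral set $S\setminus\{a\}$ gives
\begin{equation*}
p(x_{S\cap V})=p_1(x_a|x_{\pa(a)})\,p(x_{(S\cap V)\setminus\{a\}}).
\end{equation*}
We then sum over the observed variables of $S$ other than $a$ and $\pa(a)$; the factor $p_1(x_a|x_{\pa(a)})$ is constant in these variables, so this yields the displayed identity. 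Dividing by $p(x_{\pa(a)})$ whenever it is nonzero gives the claim.

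The main care point is bookkeeping rather than difficulty. We must justify that at each summation step the summed variable appears only in its own response factor; this is what the reverse topological ordering guarantees. We also need to note that the claimed equality is only meaningful where $p(x_{\pa(a)})\neq0$. Where $p(x_{\pa(a)})=0$, the conditional $p(x_a|x_{\pa(a)})$ is not defined, so the statement should be read as holding on the support of $p(x_{\pa(a)})$; on the complement, $p_1(x_a|x_{\pa(a)})$ can be chosen arbitrarily without affecting $p$.
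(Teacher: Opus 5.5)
Your proof is correct and is essentially the paper's argument. The paper sums out the descendants of $a$ so that $a$ becomes childless, factors $p_1(x_a|x_{\pa_{\mathcal{G}}(a)})$ out of the sum over the latents and the remaining observed variables, sums over $x_a$ to identify what is left with $p(x_{\pa_{\mathcal{G}}(a)})$, and divides, which is what your ancestral-set marginalisation does. You are also right, and more careful than the paper, that the conclusion only holds where $p(x_{\pa_{\mathcal{G}}(a)})\neq 0$.

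You should, however, drop the closing claim that off this support $p_1(x_a|x_{\pa_{\mathcal{G}}(a)})$ can be changed arbitrarily without affecting $p$. With quasidistributions, $p(x_{\pa_{\mathcal{G}}(a)})=0$ only forces the cofactor of $p_1(x_a|x_{\pa_{\mathcal{G}}(a)})$ in $p(x_V)$ to sum to zero over the other variables, not to vanish. For example, take $\lambda\in\{0,1\}$ uniform and the parent of $u$ and $c$, edges $u\to a\to c$, $p_1(x_u{=}1|\lambda)=\tfrac12(-1)^{\lambda}$, $p_1(x_a|x_u)=\delta_{x_a,1-x_u}$, and $p_1(x_c|x_a,\lambda)$ equal to $\tfrac12$ if $x_a=0$ and to $\delta_{x_c,\lambda}$ if $x_a=1$. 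The observed $p$ is then a genuine distribution with $p(x_u{=}1)=0$, yet redefining $p_1(x_a|x_u{=}1):=\delta_{x_a,1}$ gives $p(x_u{=}1,x_a{=}1,x_c{=}1)=-\tfrac14$. Your claim is only true classically, where the cofactor is nonnegative and hence identically zero there.
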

\begin{proof}
We can assume WLOG that $a$ has no children. Otherwise, simply sum out all its descendants (except $a$) in (\ref{lemma_factorisation_expression}) to obtain that the remaining distribution factorises with respect to the new (ancestral) graph.

Now, denote by $A:=V\setminus(\pa_\mathcal{G}(a)\cup\{a\})$. Then,
\begin{align*}
p(x_a,x_{\pa_{\mathcal{G}}(a)})
    &=\sum_{x_L,x_A} \prod_{v\in V} p_1(x_v|x_{\pa_{\mathcal{G}}(v)})\prod_{l\in L}p_1(x_l)\\
    &=p_1(x_a|x_{\pa_\mathcal{G}(a)})\sum_{x_L,x_A} \prod_{\substack{v\in V\\v\neq a}} p_1(x_v|x_{\pa_{\mathcal{G}}(v)})\prod_{l\in L}p_1(x_l).
\end{align*}

As a result,
\begin{align*}
    p(x_{\pa_{\mathcal{G}}(a)})&=\sum_{x_a}p(x_a,x_{\pa_{\mathcal{G}}(a)})\\
    &=\sum_{x_L,x_A} \prod_{\substack{v\in V\\v\neq a}} p_1(x_v|x_{\pa_{\mathcal{G}}(v)})\prod_{l\in L}p_1(x_l).
\end{align*}

Dividing the above two expressions, we obtain
\begin{align*}
    p(x_a|x_{\pa_\mathcal{G}(a)})
    &= \frac{p(x_a,x_{\pa_{\mathcal{G}}(a)})}{p(x_{\pa_{\mathcal{G}}(a)})}\\
    &= p_1(x_a|x_{\pa_\mathcal{G}(a)}).
\end{align*}
\end{proof}

\begin{lemma} \label{proof_split_quasilatents_assignment}
Given DAG $\mathcal{G}$ and assignment $p^*$ witnessing $p\in\mathcal{W}_l(\mathcal{G})$ for a distribution $p(x_V)$ over $\mathcal{G}$'s vertices. Then there exists assignment $p^{**}$ that also witnesses $p\in\mathcal{W}_l(\mathcal{G})$ as well, which replaces every latent $\Lambda$ in $p^*$ by its ensemble of components $(\Lambda_B,\Lambda^+,\Lambda^-)$. 
\end{lemma}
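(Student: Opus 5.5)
The plan is to do the replacement one latent at a time, keeping every observed response function unchanged up to a relabelling of the latent's values. Fix a latent $\Lambda$ of $p^*$ with finite domain $D_\Lambda=D_\Lambda^+\dot\cup D_\Lambda^-$, and form its ensemble of components $(\Lambda_B,\Lambda^+,\Lambda^-)$ as defined above. I would first handle the degenerate cases. If $D_\Lambda^-=\emptyset$ (so $p^-_\Lambda=0$ and $\Lambda^-$ is ill-defined), let $\Lambda^-$ be an arbitrary dummy variable, e.g.\ deterministic on a single point; $\Lambda_B$ is then deterministically $+$. The same applies if $D_\Lambda^+=\emptyset$, although $p^+_\Lambda+p^-_\Lambda=1$ then forces $p^-_\Lambda\ge 1>0$, a contradiction, so this case cannot occur. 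Hence both normalisations $\frac{p(\lambda)}{p^\pm_\Lambda}$ are well defined whenever the corresponding part is nonempty.

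Next I would define the new assignment $p^{**}$. Give the new latent $(\Lambda_B,\Lambda^+,\Lambda^-)$ the product quasidistribution $p^{**}(b,\lambda^+,\lambda^-)=\mathbb{P}(\Lambda_B=b)\,\mathbb{P}(\Lambda^+=\lambda^+)\,\mathbb{P}(\Lambda^-=\lambda^-)$; this is a legitimate quasidistribution for a latent, since $\mathcal{W}_l$ only restricts the observed responses. For each child $v$ of $\Lambda$, set
$$p^{**}(x_v\mid x_{\pa(v)\setminus\{\Lambda\}},(b,\lambda^+,\lambda^-)) := p^*(x_v\mid x_{\pa(v)\setminus\{\Lambda\}},\lambda^{b}),$$
where $\lambda^b$ denotes $\lambda^+$ if $b=+$ and $\lambda^-$ if $b=-$. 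All other factors stay as in $p^*$. The observed responses are still nonnegative, being copies of the old ones, so if $p^{**}$ achieves $p$ it witnesses $p\in\mathcal{W}_l$.

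Checking that $p^{**}$ achieves $p$ amounts to a single marginalisation identity: for any function $g$ of $\lambda$,
$$\sum_{b,\lambda^+,\lambda^-}p^{**}(b,\lambda^+,\lambda^-)\,g(\lambda^b)=p^+_\Lambda\sum_{\lambda^+}\frac{p(\lambda^+)}{p^+_\Lambda}g(\lambda^+)+p^-_\Lambda\sum_{\lambda^-}\frac{p(\lambda^-)}{p^-_\Lambda}g(\lambda^-)=\sum_\lambda p(\lambda)g(\lambda).$$
This works because the component that is not selected sums to $1$. I would apply it with $g(\lambda)$ equal to the product of all the children's response factors, with every other variable held fixed inside the full factorisation sum. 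Since $\mathcal{G}$ is reduced, $\Lambda$ has no parents, so $p(\lambda)$ is unconditional and the identity applies directly.

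Finally, I would iterate over all latents. Each step preserves the achieved distribution and the nonnegativity of the observed responses, and it does not disturb latents that were already replaced. The only real obstacle is the bookkeeping of the degenerate case $p^-_\Lambda=0$; the rest is routine.
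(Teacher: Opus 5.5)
Your proposal is correct and follows the paper's proof essentially step for step. It uses the same product quasidistribution $p^{x_{l_B}}_l\cdot\frac{p^*(x_l^+)}{p_l^+}\cdot\frac{p^*(x_l^-)}{p_l^-}$ on the ensemble and the same rule that each child reads off $\lambda^{b}$. The verification is also the same, resting on the fact that the unselected component sums to $1$, and the replacement is iterated over all latents.

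Your explicit treatment of $p^-_\Lambda=0$ is a small improvement, since the paper divides by $p_l^-$ without comment. In that case you still need to define the children's response on the branch $b=-$ (your dummy point need not lie in $D_\Lambda$). Any nonnegative distribution works there, because that branch has weight $\mathbb{P}(\Lambda_B=-)=0$.
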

\begin{proof}
We will iteratively replace each latent by its ensemble of components. Pick a latent $l\in L$, replace $X_l$ by its ensemble of components $(X_{l_B},X_l^+,X_l^-)$ and set $$p^*_{new}(x_{l_B},x_l^+,x_l^-):=p^{x_{l_B}}_l\cdot\frac{p^*(x_l^+)}{p_l^+}\cdot\frac{p^*(x_l^-)}{p_l^-},$$ where $p_l^+=\sum_{x_l;\,p^*(x_l)\geq 0} p^*(x_l)$,
$p_l^-=\sum_{x_l;\,p^*(x_l)< 0} p^*(x_l)$, and $p^{x_{l_B}}_l:=
\begin{cases}
    p_l^+,\text{ if }x_{l_B}=+\\
    p_l^-,\text{ if }x_{l_B}=-
\end{cases}.$

For each observed child $v\in \ch(l)$ of $l$, set $$p^*_{new}(x_v|x_{\pa_{\mathcal{G}}(v)\setminus\{l\}},x_{l_B},x_l^+,x_l^-):=p^*\left(x_v|x_{\pa_{\mathcal{G}}(v)\setminus\{l\}},x_l=x^{x_{l_B}}_l\right).$$

Set everything else in $p^*_{new}$ to behave the same as in $p^*$. Then $p^*_{new}$ also achieves $p(x_V)$:
\begin{align*}
p(x_V)
    &=\sum_{x_L}\prod_{v\in V\cup L} p^*(x_v|x_{\pa_{\mathcal{G}}(v)})\\
    &=\sum_{x_{L\setminus\{l\}}}\prod_{l'\in L\setminus\{l\}}p^*(x_{l'})\prod_{v\in V\setminus\ch(l)}p^*(x_v|x_{\pa_{\mathcal{G}}(v)})\sum_{x_l}p^*(x_l)\prod_{v\in\ch(l)}p^*(x_v|x_{\pa_{\mathcal{G}}(v)})\\
    &=\sum_{x_{L\setminus\{l\}}}\prod_{l'\in L\setminus\{l\}}p^*_{new}(x_{l'})\prod_{v\in V\setminus\ch(l)}p^*_{new}(x_v|x_{\pa_{\mathcal{G}}(v)})\\
    &\left(\sum_{x_l^+} p^*(x_l^+)\prod_{v\in\ch(l)}p^*(x_v|x_{\pa_{\mathcal{G}}(v)\setminus\{l\}},x_l^+)+\sum_{x_l^-} p^*(x_l^-)\prod_{v\in\ch(l)}p^*(x_v|x_{\pa_{\mathcal{G}}(v)\setminus\{l\}},x_l^-)\right)\\
    &=\sum_{x_{L\setminus\{l\}}}\prod_{l'\in L\setminus\{l\}}p^*_{new}(x_{l'})\prod_{v\in V\setminus\ch(l)}p^*_{new}(x_v|x_{\pa_{\mathcal{G}}(v)})\\
    &\Bigg(p_l^+\sum_{x_l^+} \frac{p^*(x_l^+)}{p_l^+}\prod_{v\in\ch(l)}p^*(x_v|x_{\pa_{\mathcal{G}}(v)\setminus\{l\}},x_l^+)\cdot\sum_{x_l^-} \frac{p^*(x_l^-)}{p_l^-}\\
    &+
    p_l^-\sum_{x_l^-} \frac{p^*(x_l^-)}{p_l^-}\prod_{v\in\ch(l)}p^*(x_v|x_{\pa_{\mathcal{G}}(v)\setminus\{l\}},x_l^-)\cdot\sum_{x_l^+} \frac{p^*(x_l^+)}{p_l^+}\Bigg)\\
    &=\sum_{x_{L\setminus\{l\}}}\prod_{l'\in L\setminus\{l\}}p^*_{new}(x_{l'})\prod_{V\setminus\ch(l)}p^*_{new}(x_v|x_{\pa_{\mathcal{G}}(v)})\\
    &\sum_{x_{l_B},x_l^+,x_l^-} p^*_{new}(x_{l_B},x_l^+,x_l^-)\prod_{v\in\ch(l)}p^*_{new}(x_v|x_{\pa_{\mathcal{G}}(v)\setminus\{l\}},x_{l_B},x_l^+,x_l^-)\\
    &=\sum_{x_L^{new}}\prod_{v\in V\cup L} p^*_{new}(x_v|x_{\pa_{\mathcal{G}}(v)}^{new}),
\end{align*}
and the response distributions of the observed variables are still real nonnegative, therefore $p^*_{new}$ witnesses $p\in\mathcal{W}_l$. Replace $p^*$ by $p^*_{new}$ and repeat the process for the next latent variable, until none are left. We define $p^{**}$ to be the final assignment we obtain, and we are done.
\end{proof}

\begin{figure}[t]  
    \ctikzfig{TikzFigures/Lemma_local_noise_simulating_shared_negativity}
    \caption{Graph $Com_n$.}
    \label{lemma_noise_simulating_negativity}
\end{figure}

\begin{lemma}   \label{proof_local_noise_simulating_shared_negativity}
Consider binary random variables $A_1,\dots,A_n$ taking values in $\{+,-\}$. We denote $[a_1,\dots,a_n]$ to mean the deterministic distribution assigning values $a_1,\dots,a_n$ to the variables. Then the distribution $$p_{A_1,\dots,A_n}=p^+\cdot[++\dots+]\,+\, p^-\cdot[--\dots-]$$ is in the set $\mathcal{W}_r$ of the graph $Com_n$ (which stands for 'common ancestor') in Figure \ref{lemma_noise_simulating_negativity} for any $p^+,p^-\in\mathbb{R}$ such that $p^++p^-=1$, $p^+\geq 1$, $p^-\leq0$.
\end{lemma}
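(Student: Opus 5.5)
The plan is to build the witness explicitly. First I treat the two-party case $n=2$ for an \emph{arbitrary} pair of real weights. Then I bootstrap to general $n$ by induction, repeatedly splitting one party into two. Throughout I use the form of $\mathcal{W}_r(Com_n)$ in which a single latent $\Lambda$ with nonnegative (in fact uniform) weights is a parent of all $A_i$, and the responses $p(a_i|\lambda)$ are real vectors summing to $1$. I write $e_+=(1,0)^T$ and $e_-=(0,1)^T$, so the target tensor is $p^+e_+^{\otimes n}+p^-e_-^{\otimes n}$.

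\emph{Step 1 (two parties, any real weights $q^++q^-=1$).} I want $\tfrac12\sum_{\lambda=1,2}x_\lambda\otimes y_\lambda=\mathrm{diag}(q^+,q^-)$, where each $x_\lambda,y_\lambda\in\mathbb{R}^2$ has entries summing to $1$. Let $X$ be the matrix with columns $x_1,x_2$ and $Y$ the matrix with columns $y_1,y_2$. The condition becomes $XY^T=D:=2\,\mathrm{diag}(q^+,q^-)$.
\begin{itemize}
\item[$\bullet$] Choose $X=\begin{pmatrix}a & 2q^+-a\\ 1-a & 1-2q^++a\end{pmatrix}$. Its columns sum to $1$, and its row sums are $(2q^+,2q^-)$.
\item[$\bullet$] For a generic $a$, $X$ is invertible (its determinant is affine in $a$ and not identically zero). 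Set $Y^T:=X^{-1}D$.
\item[$\bullet$] The row sums of $Y^T$ are $X^{-1}D\mathbf 1=X^{-1}(2q^+,2q^-)^T=X^{-1}X\mathbf 1=\mathbf 1$. So the columns of $Y$ are normalised, as required.
\end{itemize}
This gives the case $n=2$ of the lemma. It does so for every real $(q^+,q^-)$ with $q^++q^-=1$, not only those with $q^+\ge1$.

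\emph{Step 2 (induction on $n$).} Suppose $\Lambda$ (finite, nonnegative weights) and quasi-responses $x_i^\lambda$ realise $p^+e_+^{\otimes(n-1)}+p^-e_-^{\otimes(n-1)}$ on $Com_{n-1}$. The target on $n$ parties is the same tensor with the last party's $\{\pm\}$ replaced by the pair $(A_{n-1},A_n)$ through the embedding $+\mapsto(++)$, $-\mapsto(--)$. So it suffices to replace, for each fixed $\lambda$, the response vector $x_{n-1}^\lambda=(q^+_\lambda,q^-_\lambda)$ by the two-party quasi-distribution $q^+_\lambda[++]+q^-_\lambda[--]$ realised in product form.

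Step 1 provides exactly this realisation. It uses a fresh uniform latent $M$ on two values and responses $x^{\lambda,m},y^{\lambda,m}$ that depend on $\lambda$. I then take the new latent to be $(\Lambda,M)$ with the product (nonnegative) distribution. The responses are as follows:
\begin{itemize}
\item[$\bullet$] parties $1,\dots,n-2$ ignore $m$;
\item[$\bullet$] $A_{n-1}$ uses $x^{\lambda,m}$;
\item[$\bullet$] $A_n$ uses $y^{\lambda,m}$.
\end{itemize}
Summing over $m$ first recovers $x_{n-1}^\lambda$ embedded diagonally, and summing over $\lambda$ then gives the target. The base case $n=1$ is the trivial one-value latent with $A_1$ having quasi-response $(p^+,p^-)$. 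Alternatively, the base case is Step 1 itself with $(q^+,q^-)=(p^+,p^-)$.

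\emph{Main obstacle.} The main obstacle is that no construction with a two-valued latent can work directly for the lemma's weights. Grouping latent values by the pattern of the other parties would force nonnegative weights to sum to $p^-\le 0$. More fundamentally, the rank-two decomposition of this GHZ-type tensor is essentially unique, so its rank-one terms must be $e_\pm^{\otimes n}$ with unnormalisable coefficients. The negativity therefore has to be spread across several parties through a non-diagonal change of basis, which is what the invertible matrix $X$ in Step 1 achieves.

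The delicate points are the following. The first is checking that $X$ can be chosen invertible for every real $q^\pm$ arising in the induction. The determinant is $a(1-2q^++a)-(2q^+-a)(1-a)=2a-2q^+$, so any $a\neq q^+$ works. The second is verifying that making the responses depend on $\lambda$ keeps the final expression in the factorised form required by Definition~\ref{definition_quasimarginal_set}.
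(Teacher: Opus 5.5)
Your proof is correct, and it follows a genuinely different route from the paper's.

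The paper first uses normalisation and an inclusion--exclusion argument to rewrite the target as the moment system $\sum_{\lambda}\prod_{i\in S}x_i^{(\lambda)}=2^{n-1}p^+$ for all nonempty $S\subseteq\{1,\dots,n\}$. It then inducts by splitting the $2^{n-1}$ latent values into two halves. Each half carries an $(n-1)$-party solution, with parameters $p^+_1=1$ and $p^+_2=2p^+-1$ respectively. The new party $A_n$ has response $\alpha=2p^+$ on the first half and $\beta=0$ on the second. The inductive hypothesis is kept at $p^+\ge1$, and the sub-parameters are chosen to stay in that range.

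You never leave the tensor picture. Step 1 is a fan-out gadget that realises $q^+[++]+q^-[--]$ on $Com_2$ with a uniform bit, for every real $q^\pm$; the paper's $n=2$ solution is exactly your $X$ with $a=1$. Step 2 applies this gadget pointwise in $\lambda$ to duplicate the last party of an $(n-1)$-party witness. Because the gadget handles arbitrary real weights, the uncontrolled per-$\lambda$ quasi-responses $(q^+_\lambda,q^-_\lambda)$ cause no trouble. You still end with a uniform latent on $2^{n-1}$ values, and your argument makes clear that the sign conditions on $p^\pm$ are not needed. In exchange, the paper's route gives a fully explicit closed-form assignment of all response values.

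One small correction to your side remark: the claim that no two-valued latent can work holds only for $n\ge3$. There the rank-two decomposition of $p^+e_+^{\otimes n}+p^-e_-^{\otimes n}$ is unique. For $n=2$, your own Step 1 is such a construction.
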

\begin{proof}
We will prove via induction over $n$ that there exists an assignment $p^*$ witnessing $p_{A_1,\dots,A_n}\in\mathcal{W}_r$ that uses a latent variable $\Lambda$ that is uniform over $2^{n-1}$ possible outcomes.

Let $D_\Lambda=\{1,\dots,2^{n-1}\}$ and $$p^*(\lambda)=\frac{1}{2^{n-1}},\,\,\forall \lambda\in D_\Lambda.$$
Denote also
\begin{align*}
    &p^*(a_i=+\,|\,\lambda)=:x_i^{(\lambda)} & p^*(a_i=-\,|\,\lambda)=:y_i^{(\lambda)},
\end{align*}
where $x_i^{(\lambda)}+y_i^{(\lambda)}=1$ for all $i$ and $\lambda$.

Therefore, the condition $$p_{A_1,\dots,A_n}(a_1,\dots,a_n)=\sum_{\lambda\in D_{\Lambda}}p^*(\lambda)\prod_{i=1}^n p^*(a_i|\lambda)$$
is equivalent to the following system:
\begin{equation}    \label{system_lemma}
\begin{cases}
2^{n-1}p^+&=\sum_{\lambda}\prod_{i=1}^n x_i^{(\lambda)}\\\\
0&=\sum_\lambda\prod_{i\in S^+} x_i^{(\lambda)}\prod_{i\in S^-} y_i^{(\lambda)}
\quad,\forall\, S^+\dot\cup S^-=\{1,\dots,n\}, S^+\neq0, S^-\neq 0\\\\
2^{n-1}-2^{n-1}p^+&=\sum_{\lambda}\prod_{i=1}^n y_i^{(\lambda)}.
\end{cases}
\end{equation}
The first equation comes from $p_{A_1,\dots,A_N}(+,\dots,+)=p^+$, the last one comes from $p_{A_1,\dots,A_n}(-,\dots,-)=p^-$, and the ones in the middle make sure that $p_{A_1,\dots,A_n}(a_1,\dots,a_n)=0$ whenever the $a_i$s are not all equal. Note the important condition $S^+\neq0$ and $S^-\neq0$ in the second row. This ensures these equations do not overlap with the $++\dots+$ and $--\dots-$ cases.

It is therefore sufficient to prove that system (\ref{system_lemma}) has solutions $x_i^{(\lambda)},y_i^{(\lambda)}$ over reals. We can slightly alter the system to bring it to a neater form.

To see this, if we consider $S^-:=\{j\}$, the corresponding equation is 
\begin{gather*}
    \sum_\lambda\left(\prod_{i\neq j}^n x_i^{(\lambda)}\right)y_j^{(\lambda)}=0\\
    \iff \sum_\lambda\left(\prod_{i\neq j}^n x_i^{(\lambda)}\right)\left(1-x_j^{(\lambda)}\right)=0\\
    \iff \sum_\lambda\prod_{i\in S^+} x_i^{(\lambda)}=2^{n-1}p^+,
\end{gather*}
where the last equality comes from using the first equation of the system. Therefore, because we are dealing with a linear system of equations, it is sound to replace the equations that have $|S^+|=n-1$ with $$\sum_\lambda\prod_{i\in S^+} x_i^{(\lambda)}=2^{n-1}p^+.$$

However, we can now proceed inductively to prove that these equations being true for all $S^+\neq\emptyset$ is enough to conclude we have solved the initial system. To see this, assume we proved it for all $|S^+|\geq k>1$. Pick $S^+$ such that $|S^+|=k-1$. Then the corresponding equation is:

\begin{gather*}
\sum_\lambda\prod_{i\in S^+}x_i^{(\lambda)}\prod_{i\in S^-}y_i^{(\lambda)}=0\\
\iff \sum_\lambda \prod_{i\in S^+} x_i^{(\lambda)}\prod_{i\in S^-}\left(1-x_i^{(\lambda)}\right)=0\\
\iff \sum_\lambda\left(\prod_{i\in S^+} x_i^{(\lambda)} - \sum_{j\in S^-}\prod_{i\in S^+\cup\{j\}} x_i^{(\lambda)}+\sum_{j,k\in S^-}\prod_{i\in S^+\cup\{j,k\}} x_i^{(\lambda)}-\dots\right)=0
\end{gather*}
But by induction hypothesis, all of the products in the bracket except the first are equal to $2^{n-1}p^+$. By the fact that the number of even subsets of a non-empty set is equal to the number of odd subsets, all but one of the products end up cancelling each other, leaving as desired the equation $$\sum_\lambda\prod_{i\in S^+}x_i^{(\lambda)}=2^{n-1}p^+.$$

Therefore, we can replace our system by the equivalent system:
\begin{equation*}
\begin{cases}
\sum_{\lambda}\prod_{i\in S^+} x_i^{(\lambda)}=2^{n-1}p^+
\hspace{1cm},\forall\, S^+\subseteq\{1,\dots,n\}, S^+\neq0\\\\
\sum_{\lambda}\prod_{i=1}^n \left(1-x_i^{(\lambda)}\right)=2^{n-1}-2^{n-1}p^+.
\end{cases}
\end{equation*}
In fact, we can completely drop the last equation, since it is a consequence of the other equations and the fact that our conditional distributions are normalised. Therefore, the actual system we will solve over reals $x_i^{(\lambda)}$ is 
\begin{equation}    \label{lemma_final_system}
\sum_{\lambda}\prod_{i\in S} x_i^{(\lambda)}=2^{n-1}p^+
\hspace{1cm},\forall\, S\subseteq\{1,\dots,n\}, S\neq0.
\end{equation}

As mentioned before, we will prove this system has solutions over $\mathbb{R}$ for any $p^+\geq1$ using induction over $n$. For the base case, if $n=1$ the system is simply $x_1^{(1)}=p^+$.

For the induction step, assume the system has solutions for any $p^+\geq 1$ for $n-1$ variables and we will prove this is the case for $n$ as well. Now pick $\{x_i^{(\lambda)}\}_{i=\overline{1..n-1},\,\lambda=\overline{1..2^{n-2}}}$ to be a solution for the system with $n-1$ variables with some value $p^+_1\geq 1$ and similarly $\{x_i^{(\lambda)}\}_{i=\overline{1..n-1},\,\lambda=\overline{(2^{n-2}+1)..2^{n-1}}}$ a solution for the system with $n-1$ variables and another value $p^+_2\geq1$; we will pick values for $p^+_1$ and $p^+_2$ later.

\begin{table}[t]
\centering
\begin{tabular}{|c| >{\centering\arraybackslash}p{0.75cm} >{\centering\arraybackslash}p{0.75cm} >{\centering\arraybackslash}p{0.75cm} >{\centering\arraybackslash}p{0.75cm} | >{\centering\arraybackslash}p{1.5cm} >{\centering\arraybackslash}p{0.75cm} >{\centering\arraybackslash}p{0.75cm} |}
    \hline
    & \multicolumn{7}{c|}{$\lambda$}\\
    \cline{2-8}
    & $1$ & $2$ & \dots & $2^{n-2}$  &   $2^{n-2}+1$ & \dots & $2^{n-1}$\\
    \hline
    $x_n^{(\lambda)}$ & $\alpha$ & $\alpha$ & \dots & $\alpha$ & $\beta$ & \dots & $\beta$\\
    \cline{2-8}
    $x_{n-1}^{(\lambda)}$ & & & & & & & \\
    \vdots & \multicolumn{4}{>{\centering\arraybackslash}>{\centering\arraybackslash}p{4cm}|}{solution for $n-1$ variables and $p^+_1$} & \multicolumn{3}{>{\centering\arraybackslash}p{4cm}|}{solution for $n-1$ variables and $p^+_2$}\\
    $x_1^{(\lambda)}$ & & & & & & & \\
    \hline
\end{tabular}
\vspace{0.2cm}
\caption{A table showing the inductive construction of the system solution.}
\label{solution_table}
\end{table}

We also pick $x_n^{(\lambda)}=\begin{cases}
    \alpha, \text{ if } \lambda\leq 2^{n-2}\\
    \beta, \text{ otherwise}
\end{cases}$. Once again, we leave $\alpha$ and $\beta$ unfixed for now and will choose convenient values later. See Table \ref{solution_table} for a visualisation of the construction.

Now for any $\emptyset\neq S\subseteq\{1,\dots,n-1\}$,
\begin{align*}
\sum_{\lambda}^{2^{n-1}}\prod_{i\in S} x_i^{(\lambda)}
    &=\sum_{\lambda=1}^{2^{n-2}}\prod_{i\in S} x_i^{(\lambda)} + \sum_{\lambda=2^{n-2}+1}^{2^{n-1}}\,\,\prod_{i\in S} x_i^{(\lambda)}\\
    &=2^{n-2}p^+_1 + 2^{n-2}p^+_2.
\end{align*}
The last equality holds because of the induction hypothesis. Therefore, when $\emptyset\neq S\subseteq\{1,\dots,n-1\}$, $$\sum_{\lambda}^{2^{n-1}}\prod_{i\in S} x_i^{(\lambda)}=2^{n-1}p^+\iff \boxed{p^+_1+p^+_2=2p^+}.$$

If $S$ contains $n$, then we need to consider two cases. If $S\setminus\{n\}\neq \emptyset$, we get:
\begin{align*}
\sum_{\lambda=1}^{2^{n-1}}\prod_{i\in S}x_i^{(\lambda)}
    &=\sum_{\lambda=1}^{2^{n-1}} x_n^{(\lambda)} \prod_{i\in S\setminus\{n\}}x_i^{(\lambda)}\\
    &=\sum_{\lambda=1}^{2^{n-2}} \alpha \cdot\prod_{i\in S\setminus\{n\}}x_i^{(\lambda)}+\sum_{\lambda=2^{n-2}+1}^{2^{n-1}} \beta \cdot\prod_{i\in S\setminus\{n\}}x_i^{(\lambda)}\\
    &=\alpha\cdot2^{n-2}p^+_1+\beta\cdot2^{n-2}p^+_2
\end{align*}
The last equality holds because of the induction hypothesis and the fact that $S\setminus\{n\}\neq\emptyset$. Therefore, when $\{n\}\subsetneq S$, $$\sum_{\lambda}^{2^{n-1}}\prod_{i\in S} x_i^{(\lambda)}=2^{n-1}p^+\iff \boxed{\alpha p^+_1+\beta p^+_2=2p^+}.$$

If $S=\{n\}$, we get:
    $$\sum_{\lambda=1}^{2^{n-1}} x_n^{(\lambda)}=2^{n-2}\alpha+2^{n-2}\beta=2^{n-1}p^+\iff \boxed{\alpha+\beta=2p^+}.$$

Therefore, if we find $p^+_1,p^+_2\geq 1$ and $\alpha,\beta\in\mathbb{R}$ that satisfy the system 
\begin{equation*}
\begin{cases}
p^+_1+p^+_2=2p^+\\
\alpha p^+_1+\beta p^+_2=2p^+\\
\alpha+\beta=2p^+
\end{cases},
\end{equation*}
we are done. This is actually the system for $n=2$, and indeed has solution $p^+_1=1,p^+_2=2p^+-1\geq1$ and $\alpha=2p^+,\beta=0$.

Building the solution as described above therefore attests that the system has solutions for $n$ as well and our induction is complete. Therefore, the system has solutions for any natural number $n$, which means we can always solve it to construct an assignment that achieves $p_{A_1,\dots,A_n}$. Note that if $\alpha=2p^+>1$, then this will mean that $y_n^{(\lambda)}$ will be negative for some values of $\lambda$, which will translate to some negative response distributions. However, we set the variable $\Lambda$ to be uniform, and therefore positive, so the assignment we obtain at the end will indeed witness $p_{A_1,\dots,A_N}\in\mathcal{W}_r$.
\end{proof}

\section{Proof of Proposition \ref{proposition:quasi_is_in_nested}:} \label{sec:proof_quasi_is_in_nested}
In this section we will prove that $\mathcal{W}(\mathcal{G})\subseteq\mathcal{N}(\mathcal{G})$ by following some arguments in \cite{nested}. We will not define all the notions used in the proof and direct the reader to the cited paper for more details.

To start off, one can generalise the notion of a kernel to the notion of \emph{quasikernel} in the natural way. We will assume soundness of semigraphoid axioms for quasikernels, given that the proof of this in \cite{nested} uses only symbolic manipulations. Similarly, one can extend the (polynomial) fixing operation $\phi_\mathbf{w}(\_;\mathcal{G})$ to be defined on quasikernels by simply following the same formula as the one used for kernels.

Classically, the nested Markov model of a CADMG $\mathcal{G}$ with observed vertices $V$ and fixed vertices $W$ is defined as the set of kernels $q_{V|W}(x_V|x_W)$ with the property that for any fixing sequence $\mathbf{w}$, the kernel $\phi_\mathbf{w}(q;\mathcal{G})$ satisfies the $4$ equivalent Markov properties with respect to the fixed graph $\phi_\mathbf{w}(\mathcal{G})$, including the local Markov property. The local Markov property enforces some equality conditions on the fixed kernel, which is a polynomial function of the initial kernel, so all polynomial equalities enforced by the nested Markov model can be seen as polynomial constraints on $q$.

Now note that the proof of Theorem B.5 in \cite{nested} uses only factorisation and the chain rule of probabilities, so it is sound for quasikernels as well, meaning that Tian factorisation implies the local Markov property for any fixing operation $\mathbf{w}$ and quasi-kernel $\phi_{\textbf{w}}(q)$.

Finally, one can see that all quasikernels $\phi_{\textbf{w}}(q)$ that can be obtained from some $q\in\mathcal{W}(\mathcal{G})$ by applying a fixing sequence must Tian factorise (which can  be seen by grouping terms in the original factorisation), meaning they satisfy the local Markov property. Translating these into constraints on the initial quasidistribution, we get that all nested Markov constraints are satisfied.

\section{Conjecture \ref{conjecture} for Tree-Structured Correlation Scenarios}
    \label{sec:proof_conjecture_trees}

\subsection{Tensor Multilinear Products}    \label{sec:multilinear_products}
In this section of the appendix, we include an introduction to the tensor operations we will be using.

Let $k$ be a field of characteristic $0$. A $n_1\times\dots\times n_q$ tensor $\mathcal{T}$ with entries in $k$ is called \emph{simple} or \emph{elementary} or \emph{rank-1} if it is the tensor product of a bunch of vectors, i.e. there exist vectors $v_1,\dots, v_q$ such that $$\mathcal{T}=v_1\otimes v_2\otimes\dots\otimes v_q.$$

Let $\mathcal{T}$ be an arbitrary $n_1\times\dots\times n_q$ tensor with entries in $k$. Then $\mathcal{T}$ has a \emph{tensor rank}, which is a minimal positive integer $r$ such that $$\mathcal{T}=\sum_{i=1}^r v^{(i)}_1\otimes v^{(i)}_2\otimes\dots\otimes v^{(i)}_q,$$ for some $k$-valued vectors $\{v^{(i)}_j\}_{1\leq i\leq r,1\leq j\leq q}$, i.e. $r$ is the minimal number of simple tensors that sum to $\mathcal{T}$. 

Note the correspondence between the rank of a tensor and the rank of a matrix. Indeed, the rank of a matrix is a particular case of a tensor rank, where the tensor has order $2$. Note that in the case of tensors however, the rank can't be computed from the dimension of the fibre\footnote{The fibre is the generalisation of the concepts of rows and columns of a matrix} vector spaces in the same way as in the case of a matrix. In fact, for tensors these dimensions need not be equal regardless of the type of fibre we choose, whereas it is well-known the row rank and column rank are equal for every matrix. Computing the rank over the reals of an arbitrary tensor even as small as order $3$ is in fact NP-hard \cite{TensorNPHard}.

Given an arbitrary tensor $\mathcal{T}$ as before and its tensor rank representation, we can apply a linear transformation $A:k^{n_i}\rightarrow k^{m_i}$ to its $i$th component. This map sends $v_i\in k^{n_i}$ to $Av_i\in k^{m_i}$, and $v_1\otimes\dots\otimes v_i\otimes\dots\otimes v_q$ to $v_1\otimes\dots\otimes \left(A v_i\right)\otimes\dots\otimes v_q$, and so by linearity we can define its application on the arbitrary tensor $\mathcal{T}$, denoted using the symbol " $\times_{i}$", as:
$$A\times_{i}\mathcal{T} = \sum_{j=1}^r v^{(j)}_1\otimes\dots\otimes \left(Av^{(j)}_i\right)\otimes\dots\otimes v^{(j)}_q.$$
This holds even if the representation is not a tensor rank representation, i.e. even if the value $r$ is not optimal.

It is not too difficult to see that two such linear transformations acting on different components $A_i:k^{n_i}\rightarrow k^{m_i}$ and $A_j:k^{n_j}\rightarrow k^{m_j}$ commute, i.e. $$A_i\times_{i}(A_j\times_{j}\mathcal{T})=A_j\times_{j}(A_i\times_{i}\mathcal{T}).$$ In fact, given $q$ of these transformations, $\{A_i:k^{n_i}\rightarrow k^{m_i}\}_{1\leq i\leq q}$, we can define the multilinear multiplication operator $\mathcal{A}=(A_1,\dots,A_q):k^{n_1}\otimes\dots\otimes k^{n_q}\rightarrow k^{m_1}\otimes\dots\otimes k^{m_q}$ as $$\mathcal{A}\times \mathcal{T}=\sum_{i=1}^r \left(A_1v^{(i)}_1\right)\otimes \left(A_2v^{(i)}_2\right)\otimes\dots\otimes \left(A_qv^{(i)}_q\right).$$

Additionally, it's easy to check that if $A_i$ has a left-inverse $A_i^L$ for all $i$, then there is a left-inverse multilinear multiplication operator $\mathcal{A}^L=(A_1^L,\dots,A^L_q)$ such that $$\mathcal{A}^L\times\mathcal{A}\times\mathcal{T}=\mathcal{T}.$$
The right-inverse and the inverse multilinear multiplication operators are defined similarly in the way one would expect.

\subsection{Gauge Freedom of Tensor Network Decompositions}
\label{sec:gauge_transformations}
\begin{definition}[Gauge Transformations]
Let $\mathcal{G}$ be a tensor network with spaces $\{\mathcal{X}_v\}_{v\in V}$ for its nodes and bond dimensions $\{r_e\}_{e\in E}$, and $\mathcal{T}\in \bigotimes_{v\in V} k^{|\mathcal{X}_v|}$ a tensor that decomposes with respect to this construction. Let such a decomposition be given by tensors $\left\{T^{(v)}\in k^{|\mathcal{X}_v|}\otimes\left(\bigotimes_{e\in\inc(v)} k^{r_e}\right)\right\}_{v\in V}$. Now pick an edge $e^*=(a,b)\in E$ and an $r_{e^*}\times r_{e^*}$ invertible matrix $X$. Then a \emph{gauge transformation} of the decomposition consists of replacing $T^{(a)}$ by $X\times_{i_{e^*}}T^{(a)}$ and $T^{(b)}$ by $\left(X^T\right)^{-1}\times_{i_{e^*}}T^{(b)}$.
\end{definition}

\begin{proposition}[Gauge Freedom]
Consider the tensor network $\mathcal{G}$ with bond dimensions $\{r_e\}_{e\in E}$ as before, and a decomposition of $\mathcal{T}$ with respect to $\mathcal{G}$ and $\{r_e\}_{e\in E}$. Then the resulting set of tensors $\{T^{(v)}\}_{v\in V}$ after applying a gauge transformation is also a tensor decomposition of $\mathcal{T}$.
\end{proposition}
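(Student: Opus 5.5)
The plan is to compute the contraction directly: the only tensors that change are $T^{(a)}$ and $T^{(b)}$, and they are contracted with each other along the single index $i_{e^*}$. Write $e^*=(a,b)$ and split the full sum in \eqref{eq:tensor_factorisation} into the sum over $i_{e^*}$ and the sum over the remaining indices $\{i_e\}_{e\neq e^*}$. Every factor $T^{(v)}$ with $v\notin\{a,b\}$ is untouched by the gauge transformation and does not depend on $i_{e^*}$, since $e^*\notin\inc(v)$. It is therefore enough to show that, for every fixed $x_a,x_b$ and every fixed choice of the other indices incident to $a$ and $b$, the partial contraction
$$\sum_{i_{e^*}=1}^{r_{e^*}} T^{(a)}_{\dots,i_{e^*},\dots}(x_a)\,T^{(b)}_{\dots,i_{e^*},\dots}(x_b)$$
is unchanged.

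Using the definition of $\times_{i_{e^*}}$ from Appendix \ref{sec:multilinear_products}, the transformed entries are
$$\bigl(X\times_{i_{e^*}}T^{(a)}\bigr)_{\dots,j,\dots}=\sum_{s}X_{js}\,T^{(a)}_{\dots,s,\dots}$$
and
$$\Bigl(\bigl(X^T\bigr)^{-1}\times_{i_{e^*}}T^{(b)}\Bigr)_{\dots,j,\dots}=\sum_{t}\bigl((X^T)^{-1}\bigr)_{jt}\,T^{(b)}_{\dots,t,\dots}.$$
To check this entrywise formula, observe that it holds on simple tensors and that both sides are linear. Summing the product over $j$ gives the coefficient
$$\sum_j X_{js}\bigl((X^{-1})^T\bigr)_{jt}=\sum_j X_{js}\bigl(X^{-1}\bigr)_{tj}=\bigl(X^{-1}X\bigr)_{ts}=\delta_{ts}.$$
Hence the partial contraction collapses back to $\sum_s T^{(a)}_{\dots,s,\dots}(x_a)\,T^{(b)}_{\dots,s,\dots}(x_b)$. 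Reinserting this into the full sum over the other indices recovers $\mathcal{T}(x_V)$ exactly.

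There is no genuine obstacle here. The only points needing care are bookkeeping. First, the mode $i_{e^*}$ must appear in exactly the two tensors $T^{(a)}$ and $T^{(b)}$, which follows from the definition of $\inc$. Second, the transpose convention in $(X^T)^{-1}$ has to match how $\times_{i}$ acts, so that one gets $X^{-1}X$ rather than, say, $XX^{-1}$ with indices in the wrong order. Either product is the identity anyway, so the conclusion does not depend on the convention. Finally, I would note that the bond dimensions are preserved because $X$ is square, so the result is again a decomposition with respect to $\mathcal{G}$ and the same $\{r_e\}_{e\in E}$.
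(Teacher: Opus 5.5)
Your proof is correct and is essentially the paper's argument. Both isolate the contraction over $i_{e^*}$ between $T^{(a)}$ and $T^{(b)}$ and show it is invariant because $X$ and $(X^T)^{-1}$ cancel. The only difference is bookkeeping: the paper expands $T^{(a)}$ and $T^{(b)}$ into sums of simple tensors and notes that the $e^*$-factors enter only through dot products $(\mathbf{u}_{e^*})^T\mathbf{w}_{e^*}$, whereas you use the entrywise formula for $\times_{i_{e^*}}$, which you justify by the same simple-tensor-plus-linearity reasoning.
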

\begin{proof}
Let \begin{gather}
    T^{(a)}=\sum_{j_a=1}^{c_a}u^{(j_a)}\otimes\left(\bigotimes_{e\in\inc(a)} u^{(j_a)}_e\right),    \label{eq:tensor_decomposition_A}\\
    T^{(b)}=\sum_{j_b=1}^{c_b}w^{(j_b)}\otimes\left(\bigotimes_{e\in\inc(b)} w^{(j_b)}_e\right) \label{eq:tensor_decomposition_B},
\end{gather}
be rank decompositions of $T^{(a)}$ and $T^{(b)}$. The idea is to show that the RHS of the tensor network decomposition formula \eqref{eq:tensor_factorisation}, namely $$\sum_{\{1\leq i_e\leq r_e\}_{e\in E}}\prod_{v\in V} T^{(v)}_{\{i_e\}_{e\in \inc(v)}}(x_v),$$ only depends on the $\mathbf{u}_{e^*}$ and $\mathbf{w}_{e^*}$ vectors via the dot products $(\mathbf{u}_{e^*})^T\cdot \mathbf{w}_{e^*}$, which remain unchanged after the gauge transformation.

More precisely, if we write the explicit forms of equations \eqref{eq:tensor_decomposition_A} and \eqref{eq:tensor_decomposition_B} we obtain: 
\begin{gather*}
    T_{\{i_e\}_{e\in\inc(a)}}^{(a)}(x_a)=\sum_{j_a=1}^{c_a}u^{(j_a)}(x_a)\cdot\left(\prod_{e\in\inc(a)} u^{(j_a)}_e(i_e)\right),\\
    T_{\{i_e\}_{e\in\inc(b)}}^{(b)}(x_b)=\sum_{j_b=1}^{c_b}w^{(j_b)}(x_b)\cdot\left(\prod_{e\in\inc(b)} w^{(j_b)}_e(i_e)\right).
\end{gather*}

Now we have
\begin{align*}
    &\sum_{i_{e^*}}T^{(a)}_{\{i_{e}\}_{e\in \inc(a)}}(x_a)\cdot T^{(b)}_{\{i_e\}_{e\in\inc(b)}}(x_b)
    =\sum_{i_{e^*}}\sum_{j_a=1,j_b=1}^{c_a,c_b}u^{(j_a)}(x_a)\cdot w^{(j_b)}(x_b)\\
    &\cdot\left(\prod_{e\in\inc(a)} u^{(j_a)}_e(i_e)\right)\cdot\left(\prod_{e\in\inc(b)} w^{(j_b)}_e(i_e)\right)\\
    &=\sum_{j_a=1,j_b=1}^{c_a,c_b}u^{(j_a)}(x_a)\cdot w^{(j_b)}(x_b)\cdot\left(\prod_{e\in\inc(a)\setminus\{e^*\}} u^{(j_a)}_e(i_e)\right)\cdot\left(\prod_{e\in\inc(b)\setminus\{e^*\}} w^{(j_b)}_e(i_e)\right)\\
    &\cdot\sum_{i_{e^*}}\left(u^{(j_a)}_{e^*}(i_{e^*})\cdot w^{(j_b)}_{e^*}(i_{e^*})\right)\\
    &=\sum_{j_a=1,j_b=1}^{c_a,c_b}u^{(j_a)}(x_a)\cdot w^{(j_b)}(x_b)\cdot\left(\prod_{e\in\inc(a)\setminus\{e^*\}} u^{(j_a)}_e(i_e)\right)\cdot\left(\prod_{e\in\inc(b)\setminus\{e^*\}} w^{(j_b)}_e(i_e)\right)\\
    &\left((\mathbf{u}_{e^*}^{(j_a)})^T\cdot \mathbf{w}^{(j_b)}_{e^*}\right),
\end{align*}
where we wrote the vectors at the end of the last expression in bold to emphasise that they are vectors, and part of a dot product.

Applying the gauge transformation only acts on the $u_{e^*}$ and $w_{e^*}$ components, so the only change in the value of this expression will be that the last bracket becomes $\left((X\mathbf{u}_{e^*}^{(j_a)})^T\cdot ((X^T)^{-1}\mathbf{w}^{(j_b)}_{e^*})\right)$, but this doesn't change its value. Therefore, $$\sum_{i_{e^*}}T^{(a)}_{\{i_{e}\}_{e\in \inc(a)}}(x_a)\cdot T^{(b)}_{\{i_e\}_{e\in\inc(b)}}(x_b)$$ doesn't change its value, and so 
\begin{align*}
    \sum_{\{i_e\}_{e\in E}}\prod_{v\in V} T^{(v)}_{\{i_e\}_{e\in \inc(v)}}(x_v)\
    &=\sum_{\{i_e\}_{e\in E\setminus\{e^*\}}}\prod_{v\in V\setminus\{a,b\}}T^{(v)}_{\{i_e\}_{e\in \inc(v)}}(x_v)\\
    &\cdot \left(\sum_{i_{e^*}}T^{(a)}_{\{i_{e}\}_{e\in \inc(a)}}(x_a) \cdot T^{(b)}_{\{i_e\}_{e\in\inc(b)}}(x_b)\right)
\end{align*}
remains unchanged as well, so it will equal $\mathcal{T}(x_V)$ after the gauge transformation.
\end{proof}

\subsection{Proof of the Nonzero Sums Assumption} \label{sec:nonzero_assumption}
In this section we will prove the Nonzero Sums Lemma which will be used in the proof of Theorem \ref{theorem:TCS_reals} in the next section of this appendix. The proof makes use of gauge transformations to modify each component tensor in a way that makes these sums nonzero. We start with another result which will be used in the proof of the lemma.

\begin{definition}[$\exists\forall$-Partitions]
    Let $A\in\mathbb{R}^{|\mathcal{X}_a|}\otimes\left(\bigotimes_{e\in E_a}\mathbb{R}^{r_e}\right)$ and $B\in\mathbb{R}^{|\mathcal{X}_b|}\otimes\left(\bigotimes_{e\in E_b}\mathbb{R}^{r_e}\right)$ be two tensors with exactly one common mode $e^*=E_a\cap E_b$. Given partition $E_a=E_a^\exists\dot\cup E_a^\forall$ for $E_a$, then we say $(E_a^\exists,E_a^\forall)$ is a \emph{$\exists\forall$-partition} for $A$ if the following property holds:
    \begin{gather*}
        \exists\{1\leq i_e\leq r_e\}_{e\in E_a^\exists}\text{ such that } \forall\{1\leq i_e\leq r_e\}_{e\in E_a^\forall},\\
        \sum_{x_a} A_{\{i_e\}_{e\in E_a}}(x_a)\neq 0.
    \end{gather*}
    
    Given partitions $E_a=E_a^\exists\dot\cup E_a^\forall$ and $E_b=E_b^\exists\dot\cup E_b^\forall$ for $E_a$ and $E_b$, then we say the pair of partitions $(E_a^\exists,E_a^\forall;E_b^\exists,E_b^\forall)$ is a \emph{$\exists\forall$-partition pair} if $(E_a^\exists,E_a^\forall)$ is a $\exists\forall$-partition for $A$ and $(E_b^\exists,E_b^\forall)$ is a $\exists\forall$-partition for $B$.
\end{definition}

\begin{lemma}   \label{lemma_one_sided_nonzero_sums}
    Let $A\in\mathbb{R}^{|\mathcal{X}_a|}\otimes\left(\bigotimes_{e\in E_a}\mathbb{R}^{r_e}\right)$ and $B\in\mathbb{R}^{|\mathcal{X}_b|}\otimes\left(\bigotimes_{e\in E_b}\mathbb{R}^{r_e}\right)$ be two tensors with exactly one common mode $e^*=E_a\cap E_b$ and a $\exists\forall$-partition pair $(E_a^\exists,E_a^\forall;E_b^\exists,E_b^\forall)$. Then there exists invertible $r_{e^*}\times r_{e^*}$ matrix $X$ with real entries such that $$(E_a^\exists\setminus\{e^*\},E_a^\forall\cup\{e^*\};E_b^\exists,E_b^\forall)$$ is a $\exists\forall$-partition pair for tensors $X\times_{i_{e^*}}A$ and $(X^T)^{-1}\times_{i_{e^*}}B$.
\end{lemma}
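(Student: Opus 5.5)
The plan is to reduce everything to the ``summed'' tensors and then pick $X$ generically. Write $S_A(\{i_e\}_{e\in E_a}):=\sum_{x_a}A_{\{i_e\}}(x_a)$ and likewise $S_B$. Since the gauge transformation acts only on the bond indices, summing over $x_a$ commutes with it. Hence the summed tensor of $X\times_{i_{e^*}}A$ is $S_A'(\dots,k,\dots)=\sum_j X_{kj}S_A(\dots,j,\dots)$. Similarly, the summed tensor of $(X^T)^{-1}\times_{i_{e^*}}B$ is obtained by applying $Y:=(X^T)^{-1}$ along $e^*$.

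First dispose of the trivial case. If $e^*\in E_a^\forall$, then the new partition pair equals the old one and $X=\mathcal{I}$ works. So assume $e^*\in E_a^\exists$. Fix the witness values $c$ for $E_a^\exists\setminus\{e^*\}$ and $j_0$ for $e^*$, so that $S_A(c,j_0,i_\forall)\neq 0$ for all $i_\forall\in\prod_{e\in E_a^\forall}[r_e]$. For each such $i_\forall$, the vector $\mathbf{s}_{i_\forall}:=S_A(c,\cdot,i_\forall)\in\mathbb{R}^{r_{e^*}}$ is therefore nonzero. Reuse $c$ as the witness for the new partition of $A$. The required property then becomes: for every row index $k$ and every $i_\forall$, $\sum_j X_{kj}(\mathbf{s}_{i_\forall})_j\neq 0$. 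Each such condition says that a nonzero linear polynomial in the entries of $X$ does not vanish.

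Next handle $B$, splitting on where $e^*$ lies; let $d$ be the witness for $E_b^\exists$. If $e^*\in E_b^\forall$, then for every $i_{e^*}$ the old condition holds, so each vector $\mathbf{t}_{i}:=S_B(d,\cdot,i)$ (with $i$ ranging over $E_b^\forall\setminus\{e^*\}$) is nonzero. We need $Y_k\cdot\mathbf{t}_i\neq0$ for all $k,i$. If $e^*\in E_b^\exists$ with witness value $j_1$, the vectors $\mathbf{t}_i$ are again nonzero. It then suffices to require $Y_{1}\cdot \mathbf{t}_i\neq 0$ for all $i$, taking the new witness value for $e^*$ to be $1$. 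In both cases, multiply by $\det X$ and use $Y=\operatorname{adj}(X)^T/\det X$. Each condition then becomes the nonvanishing of the polynomial $\det(X)\,(Y_k\cdot\mathbf{t}_i)$ in the entries of $X$. This polynomial is not identically zero: since $X\mapsto (X^T)^{-1}$ is a bijection on invertible matrices, we may choose an invertible $X$ whose $Y$ has $k$-th row equal to $\mathbf{t}_i^T$.

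Finally add the condition $\det X\neq 0$. We now have finitely many nonzero polynomials on $\mathbb{R}^{r_{e^*}^2}$. Their product is a nonzero polynomial, and a nonzero real polynomial cannot vanish on all of $\mathbb{R}^{r_{e^*}^2}$. Hence some $X$ satisfies all conditions simultaneously, which gives the claimed $\exists\forall$-partition pair. The main point to check carefully is the $B$ side: one must verify that the conditions through $(X^T)^{-1}$ are still given by nonzero polynomials after clearing denominators. This holds by the bijectivity argument above. Everything else is bookkeeping about which index values serve as the new witnesses.
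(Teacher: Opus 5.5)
Your argument is correct, but it takes a different route from the paper's.

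\textbf{The paper's route.} The paper does not use a generic $X$. It restricts to $X=\mathcal{I}$ plus a single column $q$ of free coefficients $\alpha$, where $q$ is the witness value of $i_{e^*}$ on the $A$ side and $\alpha_q=0$. Such an $X$ is automatically invertible. Its inverse $(X^T)^{-1}$ is explicit: $\mathcal{I}$ minus the same coefficients placed in row $q$. As a result only the $q$-th slice of $B$ is altered. Every requirement then reduces to avoiding finitely many affine hyperplanes in the space of the $\alpha$'s. The cost is a case analysis: whether $e^*\in E_b^\exists$ or $E_b^\forall$, and whether $B$ admits a witness with $i_{e^*}=q'\neq q$ (that slice is untouched).

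\textbf{Your route.} You let $X$ range over all of $\mathbb{R}^{r_{e^*}\times r_{e^*}}$. You clear the denominator of $(X^T)^{-1}$ with the adjugate, and certify that each condition is a nonzero polynomial via the bijection $X\mapsto (X^T)^{-1}$ on invertible matrices. You then conclude because a nonzero real polynomial cannot vanish identically.

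\textbf{What each buys.} The paper's construction is fully explicit and uses only the fact that finitely many hyperplanes cannot cover a real vector space. Yours is uniform, needs no tracking of which slices change, and shows that a generic $X$ works. In particular, $\mathbf{t}_i\neq 0$ also holds when $e^*\in E_b^\exists$. So you could impose $Y_k\cdot\mathbf{t}_i\neq 0$ for every row $k$ at once and obtain Corollary~\ref{corollary_nonzero_sums} in a single step, rather than by composing two gauge transformations as the paper does.
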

\begin{proof}
If $e^*\in E_a^\forall$ then we are done. Assume therefore that $e^*\in E_a^\exists$. Let $\{1\leq i_e\leq r_e\}_{e\in E_a^\exists}$ be an assignment to the indices of $A$ in $E_a^\exists$ such that 
    \begin{equation*}
        \forall\{1\leq i_e\leq r_e\}_{e\in E_a^\forall},
        \quad \quad \sum_{x_a} A_{\{i_e\}_{e\in E_a}}(x_a)\neq 0.
    \end{equation*}\\
Denote by $q$ the value of $i_{e^*}$ in this assignment.

We define $X$ to have the following structure:
$$X=\begin{pmatrix}
        1 & 0 & \dots & 0\\
        0 & 1 & \dots & 0\\
          &   &\ddots &  \\
        0 & 0 & \dots & 1
    \end{pmatrix}
    +
    \begin{pmatrix}
    0 & \dots & \alpha_1 & \dots & 0\\
    0 & \dots & \alpha_2 & \dots & 0\\
      &       & \vdots   &       &  \\
    0 & \dots & \alpha_{r_{e^*}} & \dots & 0
    \end{pmatrix},$$
i.e. it has $1$s on the main diagonal and some real coefficients $\alpha$ which we will set later on column $q$. The only coefficient we set for now is $\alpha_q:=0$, so the row corresponding to $i_{e^*}=q$ only contains a single $1$ and no additional $\alpha$ entry.

Then one can check that by calculating $X\times_{e^*} A$ we get $$\left(X\times_{e^*}A\right)_{\{i_e\}_{e\in E_a}}(x_a) =\begin{cases}
    A_{\{i_e\}_{e\in E_a}}(x_a)&,\text{ if } i_{e^*}=q\\
    A_{\{i_e\}_{e\in E_a}}(x_a) + \alpha_{i_e}\left[A_{\{i_e\}_{e\in E_a}}(x_a)\right]_{i_{e^*}=q}&,\text{ if } i_{e^*}\neq q,
\end{cases}$$
where by $\left[A_{\{i_e\}_{e\in E_a}}(x_a)\right]_{i_{e^*}=q}$ we denote the expression inside the square brackets with every occurrence of $i_{e^*}$ replaced by the value $q$.

Therefore, if we fix $\{1\leq i_e\leq r_e\}_{e\in E_a^\exists\setminus\{e^*\}}$ to have the values we considered before, the condition
$$\forall\{1\leq i_e\leq r_e\}_{e\in E_a^\forall\cup\{e^*\}},
        \quad \quad \sum_{x_a} \left(X\times_{e^*}A\right)_{\{i_e\}_{e\in E_a}}(x_a)\neq 0$$
is equivalent to
\begin{equation*}
    \forall\{1\leq i_e\leq r_e\}_{e\in E_a^\forall\cup\{e^*\}},
        \quad \quad \sum_{x_a} A_{\{i_e\}_{e\in E_a}}(x_a) + \alpha_{i_e}\left[\sum_{x_a}A_{\{i_e\}_{e\in E_a}}(x_a)\right]_{i_{e^*}=q}\neq 0.
\end{equation*}
Note that since $\alpha_q=0$, this statement does take into consideration the case $i_{e^*}=q$ as well.

Now since we know that $$\left[\sum_{x_a}A_{\{i_e\}_{e\in E_a}}(x_a)\right]_{i_{e^*}=q}\neq 0$$ must hold because of the way $\{i_e\}_{e\in E_a^\exists\setminus\{e^*\}}$ and $q$ were chosen, we can rewrite the above condition in the equivalent form
\begin{equation}    
    \forall\{1\leq i_e\leq r_e\}_{e\in E_a^\forall\cup\{e^*\}},
        \quad \quad
        \alpha_{i_e}\neq -\frac{\sum_{x_a} A_{\{i_e\}_{e\in E_a}}(x_a)}{\left[\sum_{x_a}A_{\{i_e\}_{e\in E_a}}(x_a)\right]_{i_{e^*}=q}},\label{eq:appendix_a_side}
\end{equation}
so these are the constraints that the $\alpha$ coefficients need to satisfy in order for $E_a^\exists\setminus\{e^*\}$ and $E_a^\forall\cup\{e^*\}$ to form a $\exists\forall$-partition for $X\times_{e^*}A$.

Now let's look at $(X^T)^{-1}\times_{e^*} B$. We have that $$(X^T)^{-1}=
\begin{pmatrix}
    1 & 0 &  & \dots &  & 0\\
    0 & 1 &  & \dots &  & 0\\
      &   &  & \vdots&  &  \\
    -\alpha_1 & -\alpha_2 & \dots & 1 & \dots & -\alpha_{r_{e^*}}\\
      &   &  & \vdots&  &  \\
    0 & 0 &  & \dots &  & 1
\end{pmatrix}.
$$ Then one can calculate that
$$\left((X^T)^{-1}\times_{e^*} B\right)_{\{i_e\}_{e\in E_b}}(x_b) =\begin{cases}
    B_{\{i_e\}_{e\in E_b}}(x_b) - \sum_{q'=1}^{r_{e^*}} \alpha_{i_e}\left[B_{\{i_e\}_{e\in E_b}}(x_b)\right]_{i_{e^*}=q'}&,\text{ if } i_{e^*}= q\\
    B_{\{i_e\}_{e\in E_b}}(x_b)&,\text{ if } i_{e^*}\neq q.
\end{cases}$$
Now pick some values $\{1\leq i_e\leq r_e\}_{e\in E_b^\exists}$ such that 
\begin{equation*}
        \forall\{1\leq i_e\leq r_e\}_{e\in E_b^\forall},
        \quad \quad \sum_{x_b} B_{\{i_e\}_{e\in E_b}}(x_b)\neq 0,
    \end{equation*}
and distinguish two cases.

$\bullet$ \emph{Case $1$:} $e^*\in E_b^\exists.$ If there exists $i_{e^*}=q'\neq q$ such that
\begin{equation*}
        \forall\{1\leq i_e\leq r_e\}_{e\in E_b^\forall},
        \quad \quad \left[\sum_{x_b} B_{\{i_e\}_{e\in E_b}}(x_b)\right]_{i_{e^*}=q'}\neq 0,
    \end{equation*}
then the same is true for $(X^T)^{-1}\times_{e^*}B$ for any values for the coefficients $\alpha$. Pick these coefficients to satisfy \eqref{eq:appendix_a_side} and the matrix $X$ obtained in this way is exactly what we need.

Otherwise, $i_{e^*}=q$ is the only value for which 
\begin{equation*}
        \forall\{1\leq i_e\leq r_e\}_{e\in E_b^\forall},
        \quad \quad \left[\sum_{x_b} B_{\{i_e\}_{e\in E_b}}(x_b)\right]_{i_{e^*}=q}\neq 0
    \end{equation*}
holds. In this case, the condition $$\forall\{1\leq i_e\leq r_e\}_{e\in E_b^\forall},
        \quad \quad \left[\sum_{x_b} \left((X^T)^{-1}\times_{e^*}B\right)_{\{i_e\}_{e\in E_B}}(x_b)\right]_{i_{e^*}=q}\neq 0$$
is equivalent to
\begin{equation}   
    \forall\{1\leq i_e\leq r_e\}_{e\in E_b^\forall},
        \quad \quad \left[B_{\{i_e\}_{e\in E_b}}(x_b)\right]_{i_{e^*}=q} - \sum_{q'=1}^{r_{e^*}} \alpha_{i_e}\left[B_{\{i_e\}_{e\in E_b}}(x_b)\right]_{i_{e^*}=q'}\neq 0.   \label{eq:appendix_b_side}
\end{equation}

We view these as inequations in the $\alpha$ coefficients. Now we know that $$\left[B_{\{i_e\}_{e\in E_b}}(x_b)\right]_{i_{e^*}=q}\neq 0$$ holds in all of these equations because of the way $\{i_e\}_{e\in E_b^\exists\setminus\{e^*\}}$ and $q$ were chosen. Therefore, for each inequation in \eqref{eq:appendix_a_side} and \eqref{eq:appendix_b_side}, the set of $\alpha$s which do not satisfy said inequation is a hyperplane in the space of the vectors $\alpha$. As a consequence, the set of vectors $\alpha$ which satisfy all of these inequations is nonempty (and in fact infinite). We now pick any of these solutions to construct our matrix $X$ and we are done.

$\bullet$ \emph{Case $2$:} $e^*\in E_b^\forall.$ For any $i_{e^*}=q\neq q$, the condition
$$\forall\{1\leq i_e\leq r_e\}_{e\in E_b^\forall},
        \quad \quad \left[\sum_{x_b} B_{\{i_e\}_{e\in E_B}}(x_b)\right]_{i_{e^*}=q'}\neq 0$$
remains true after applying $(X^T)^{-1}$ as well, since the values of these sums do not change.

We just need to pick the $\alpha$ vector in such a way that 
\begin{equation*}
    \forall\{1\leq i_e\leq r_e\}_{e\in E_b^\forall\setminus\{e^*\}},
        \quad \quad \left[B_{\{i_e\}_{e\in E_b}}(x_b)\right]_{i_{e^*}=q} - \sum_{q'=1}^{r_{e^*}} \alpha_{i_e}\left[B_{\{i_e\}_{e\in E_b}}(x_b)\right]_{i_{e^*}=q'}\neq 0
\end{equation*}
holds. However, these equations are the same as \eqref{eq:appendix_b_side}, so by applying the same reasoning as in Case $1$, we can again find a matrix $X$ that satisfies our requirements.
\end{proof}

\begin{corollary}   \label{corollary_nonzero_sums}
    Let $A\in\mathbb{R}^{|\mathcal{X}_a|}\otimes\left(\bigotimes_{e\in E_a}\mathbb{R}^{r_e}\right)$ and $B\in\mathbb{R}^{|\mathcal{X}_b|}\otimes\left(\bigotimes_{e\in E_b}\mathbb{R}^{r_e}\right)$ be two tensors with exactly one common mode $e^*=E_a\cap E_b$ and a $\exists\forall$-partition pair $(E_a^\exists,E_a^\forall;E_b^\exists,E_b^\forall)$. Then there exists invertible $r_{e^*}\times r_{e^*}$ matrix $X$ with real entries such that $$(E_a^\exists\setminus\{e^*\},E_a^\forall\cup\{e^*\};E_b^\exists\setminus\{e^*\},E_b^\forall\cup\{e^*\})$$ is a $\exists\forall$-partition pair for tensors $X\times_{e^*}A$ and $(X^T)^{-1}\times_{e^*}B$.
\end{corollary}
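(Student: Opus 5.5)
The plan is to apply Lemma \ref{lemma_one_sided_nonzero_sums} twice, once from each side, and then compose the two gauge transformations. The lemma is asymmetric: it moves $e^*$ from $E_a^\exists$ to $E_a^\forall$ for $X\times_{e^*}A$, while keeping $B$'s partition $(E_b^\exists,E_b^\forall)$ valid for $(X^T)^{-1}\times_{e^*}B$. The statement is also symmetric in the roles of $A$ and $B$ once we observe that a gauge transformation $(X,(X^T)^{-1})$ on the pair $(A,B)$ is the same thing as a gauge transformation $(Y,(Y^T)^{-1})$ on the pair $(B,A)$, with $Y=(X^T)^{-1}$.

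\textbf{First application.} Apply Lemma \ref{lemma_one_sided_nonzero_sums} to $(A,B)$ with the given pair. This yields an invertible $X_1$ such that
\[
(E_a^\exists\setminus\{e^*\},E_a^\forall\cup\{e^*\};E_b^\exists,E_b^\forall)
\]
is a $\exists\forall$-partition pair for $A_1:=X_1\times_{e^*}A$ and $B_1:=(X_1^T)^{-1}\times_{e^*}B$.

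\textbf{Second application.} Apply the lemma again with the roles swapped, i.e.\ to the ordered pair $(B_1,A_1)$ with the pair $(E_b^\exists,E_b^\forall;E_a^\exists\setminus\{e^*\},E_a^\forall\cup\{e^*\})$. This yields an invertible $X_2$ such that $(E_b^\exists\setminus\{e^*\},E_b^\forall\cup\{e^*\})$ is a $\exists\forall$-partition for $X_2\times_{e^*}B_1$. The crucial part of the lemma's conclusion is that the \emph{other} tensor's partition is preserved: $(E_a^\exists\setminus\{e^*\},E_a^\forall\cup\{e^*\})$ remains a $\exists\forall$-partition for $(X_2^T)^{-1}\times_{e^*}A_1$. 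The lemma's hypotheses allow either tensor's partition to contain $e^*$ in either block (its proof treats both cases), so it applies to this configuration.

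\textbf{Composing the transformations.} Set $X:=(X_2^T)^{-1}X_1$, which is invertible. Since mode products on the same mode compose as matrix products, $(X_2^T)^{-1}\times_{e^*}(X_1\times_{e^*}A)=X\times_{e^*}A$. Likewise
\[
X_2\times_{e^*}\bigl((X_1^T)^{-1}\times_{e^*}B\bigr)=X_2(X_1^T)^{-1}\times_{e^*}B,
\]
and $X_2(X_1^T)^{-1}=(X^T)^{-1}$ because $X^T=X_1^TX_2^{-1}$. So the composite is a single gauge transformation with matrix $X$, and it achieves the required partition pair.

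\textbf{Main obstacle.} The only delicate point is confirming that Lemma \ref{lemma_one_sided_nonzero_sums} really is symmetric under swapping $A$ and $B$, including the $\exists\forall$ conventions on the shared mode. The matrix bookkeeping above shows this, and the rest of the argument is routine.
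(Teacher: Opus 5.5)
Your proposal is correct and follows the paper's argument: apply the one-sided lemma once to $(A,B)$ and once with the roles swapped, then compose the two gauge transformations into a single one. Your composite $X=(X_2^T)^{-1}X_1$ is the correctly bookkept version of the paper's $X:=ZY$. The paper writes the second matrix as if it acted on $A$, even though in the swapped application it acts on $B$, so the paper's $Z$ should be read as your $(X_2^T)^{-1}$.
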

\begin{proof}
    Apply Lemma \ref{lemma_one_sided_nonzero_sums} to find a matrix $Y$ such that $$(E_a^\exists\setminus\{e^*\},E_a^\forall\cup\{e^*\};E_b^\exists,E_b^\forall)$$ is a $\exists\forall$-partition pair for tensors $Y\times_{e^*}A$ and $(Y^T)^{-1}\times_{e^*}B$. Now apply Lemma \ref{lemma_one_sided_nonzero_sums} again for $(A;B):=\left((Y^T)^{-1}\times_{e^*} B;\,Y\times_{e^*}A\right)$ to find $Z$ such that $$(E_a^\exists\setminus\{e^*\},E_a^\forall\cup\{e^*\};E_b^\exists\setminus\{e^*\},E_b^\forall\cup\{e^*\})$$ is a $\exists\forall$-partition pair for tensors $Z\times_{e^*}\left(Y\times_{e^*}A\right)$ and $(Z^T)^{-1}\times_{e^*}\left((Y^T)^{-1}\times_{e^*}B\right)$. But $Z\times_{e^*}\left(Y\times_{e^*}A\right)=(ZY)\times_{e^*}A$ and $(Z^T)^{-1}\times_{e^*}\left((Y^T)^{-1}\times_{e^*}B\right)=((ZY)^T)^{-1}\times_{e^*} B$, so the matrix $X:=ZY$ satisfies our requirements.
\end{proof}

We are now ready to prove the Nonzero Sums Lemma.
\begin{lemma}[The Nonzero Sums Lemma]   \label{lemma_nonzero_sums}
    Let $\mathcal{G}$ be a tensor network with vector spaces for its nodes $\{\mathbb{R}^{|\mathcal{X}_v|}\}_{v\in V}$ and bond dimensions $\{r_e\}_{e\in E}$, and let $\mathcal{T}\in\prod_{v\in V}\mathbb{R}^{|\mathcal{X}_v|}$ be a tensor which decomposes with respect to the above construction. Assume $\sum_{x_V}\mathcal{T}(x_V)\neq 0$. Then there exists a decomposition for $\mathcal{T}$ given by tensors $\left\{T^{(v)}\in \mathbb{R}^{|\mathcal{X}_v|}\otimes\left(\bigotimes_{e\in\inc(v)} \mathbb{R}^{r_e}\right)\right\}_{v\in V}$ with the property that $$\sum_{x_v}T^{(v)}_{\{i_e\}_{e\in\inc(v)}}(x_v)\neq 0$$ holds for any $v\in V$ and any values $\{1\leq i_e\leq r_e\}_{e\in\inc(v)}$ for its adjacent indices.
\end{lemma}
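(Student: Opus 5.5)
Applying gauge transformations edge by edge should do it. By the Gauge Freedom proposition, every gauge transformation yields another decomposition of $\mathcal{T}$ with the same bond dimensions. The aim is to reach a state in which, for every vertex $v$, the partition $(E_v^\exists,E_v^\forall)=(\emptyset,\inc(v))$ is a $\exists\forall$-partition for $T^{(v)}$. That is exactly the desired conclusion: for all values of the adjacent indices, $\sum_{x_v}T^{(v)}\neq 0$. Corollary \ref{corollary_nonzero_sums} moves one edge $e^*$ from the $\exists$ side to the $\forall$ side at both endpoints at once. So the plan is to start from trivial $\exists\forall$-partitions and then process the edges one at a time.

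\textbf{Initialisation.} For every vertex take $E_v^\exists=\inc(v)$ and $E_v^\forall=\emptyset$. For this to be a $\exists\forall$-partition of $T^{(v)}$, we need some index assignment with $\sum_{x_v}T^{(v)}_{\{i_e\}}(x_v)\neq 0$. This is where the hypothesis $\sum_{x_V}\mathcal{T}(x_V)\neq 0$ enters. Summing the decomposition over all $x_V$ gives
\[\sum_{x_V}\mathcal{T}(x_V)=\sum_{\{i_e\}}\prod_{v}\sum_{x_v}T^{(v)}_{\{i_e\}_{e\in\inc(v)}}(x_v).\]
If some vertex had all its sums equal to zero, the right-hand side would vanish, which is a contradiction. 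So every vertex has at least one good assignment.

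\textbf{Iteration.} Enumerate the edges $e_1,\dots,e_{|E|}$. At step $t$, with $e_t=(a,b)$, apply Corollary \ref{corollary_nonzero_sums} to $A=T^{(a)}$ and $B=T^{(b)}$, which share exactly the mode $e_t$. This gives an invertible $X$ such that, after the gauge transformation on $e_t$, the partitions at $a$ and $b$ have $e_t$ moved into $\forall$. Two invariants must be checked.
\begin{itemize}
\item[$\bullet$] Vertices other than $a$ and $b$ are untouched, so their $\exists\forall$-partitions persist.
\item[$\bullet$] Edges already moved into $\forall$ at $a$ or $b$ stay in $\forall$. The Corollary's conclusion only alters the status of $e_t$ and keeps the remaining $\exists$ and $\forall$ sets, so they remain valid for the new tensors.
\end{itemize}
After all edges have been processed, every vertex has $E_v^\exists=\emptyset$, which is the claim. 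Isolated vertices with no incident edges are covered directly by the initialisation.

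\textbf{Main obstacle.} The substantive work is inside Lemma \ref{lemma_one_sided_nonzero_sums}, which is assumed here. The point that needs care is that the Corollary requires the two tensors to share exactly one mode. This fails only if there are multi-edges between $a$ and $b$. In that case I would first merge parallel edges into a single edge with bond dimension equal to the product. Since the graphs in question are simple (trees), I expect no issue. I would also double-check that the Corollary preserves the $\exists$-witness assignments for edges other than $e^*$. Its proof keeps those witness values fixed, so this holds.
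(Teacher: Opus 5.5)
Your proposal is correct and follows the paper's proof step for step. The paper likewise initialises every vertex with the $\exists\forall$-partition $(\inc(v),\emptyset)$, by summing the decomposition over $x_V$ and using $\sum_{x_V}\mathcal{T}(x_V)\neq 0$, and then applies Corollary~\ref{corollary_nonzero_sums} edge by edge until every $T^{(v)}$ has partition $(\emptyset,\inc(v))$; your explicit invariant checks and the remark on isolated vertices are harmless additions that the paper leaves implicit.
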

\begin{proof}
    Let $\left\{T^{(v)}\in \mathbb{R}^{|\mathcal{X}_v|}\otimes\left(\bigotimes_{e\in\inc(v)} \mathbb{R}^{r_e}\right)\right\}_{v\in V}$ be an arbitrary decomposition of $\mathcal{T}$ with respect to $\mathcal{G}$ and the given bond dimensions. In other words, we have the following equation:
    $$\mathcal{T}(x_V)=\sum_{\{1\leq i_e\leq r_e\}_{e\in E}}\prod_{v\in V} T^{(v)}_{\{i_e\}_{e\in\inc(v)}}(x_v).$$
    Summing over $x_V$, we get:
    $$\sum_{x_V}\mathcal{T}(x_V)=\sum_{\{1\leq i_e\leq r_e\}_{e\in E}}\prod_{v\in V}\left(\sum_{x_v} T^{(v)}_{\{i_e\}_{e\in\inc(v)}}(x_v)\right).$$
    Since $\sum_{x_V}\mathcal{T}(x_V)\neq 0$ by hypothesis, we can conclude that for any vertex $v\in V$, there exist $\{1\leq i_e\leq r_e\}_{e\in\inc(v)}$ values for its adjacent indices such that $$\sum_{x_v} T^{(v)}_{\{i_e\}_{e\in\inc(v)}}(x_v)\neq 0.$$
    That is to say, $(\inc(v),\emptyset)$ is a $\exists\forall$-partition for $T^{(v)}$ for any vertex $v$.

    Finally our proof of this lemma involves an inductive approach. Start with our tensor decomposition for $\mathcal{T}$ and with the initial $\exists\forall$-partition for every $T^{(v)}$ given by $(\inc(v),\emptyset)$. Now pick any edge $e\in E$ and apply Corollary \ref{corollary_nonzero_sums} to find a gauge transformation which makes it so that $e$ will be in the $\forall$-component of both of its adjacent $T^{(v)}$ tensors afterwards. We apply this gauge transformation, replace our decomposition with the resulting decomposition, and modify the $\exists\forall$-partitions of the vertices adjacent to $e$ so that $e$ is in their $\forall$-components. We repeatedly apply this process until every edge in the graph is in the $\forall$-component of their adjacent tensors. At the end, each $T^{(v)}$ will have $\exists\forall$-partition $(\emptyset,\inc(v))$, and this means exactly that $$\forall v\in V,\{1\leq i_e\leq r_e\}_{e\in\inc(v)},\quad\quad\sum_{x_v}T^{(v)}_{\{i_e\}_{e\in\inc(v)}}(x_v)\neq 0.$$
\end{proof}

\subsection{Proof of Theorem \ref{theorem:TCS_reals}}
\label{sec:actual_conjecture_proof}
\begin{no_number_theorem}
    For any Tree-structured Correlation Scenario $\mathcal{G}$, a distribution is in the quasimarginal model if and only if it satisfies the observed independence constraints enforced by the graph. In other words, $$\mathcal{W}(\mathcal{G})=\mathcal{M}(\mathcal{G})=\mathcal{N}(\mathcal{G}).$$
\end{no_number_theorem}

\begin{proof}
We recommend the reader familiarises themselves with the simpler proof in the main text first. We will prove the inclusion $\mathcal{M}(\mathcal{G})\subseteq\mathcal{W}(\mathcal{G})$, since the rest follows from Propositions \ref{proposition:quasi_is_in_nested} and \ref{proposition:correlation_scenario_properties}.

As before, consider a minimal tensor decomposition of $p\in\mathcal{M}(\mathcal{G})$ with respect to $\mathcal{G}^t$. Let $r_e$ be the bond dimension of edge $e\in E^t$ of this decomposition. Recall that the tensor decomposition of $p$ implies the existence of $\left\{T^{(v)}\in \mathbb{R}^{|\mathcal{X}_v|}\otimes\left(\bigotimes_{e\in\inc(v)} \mathbb{R}^{r_e}\right)\right\}_{v\in V}$ such that
\begin{equation}    \label{eq:appendix_tensor_decomposition}
    p(x_V)=\sum_{\{1\leq i_e\leq r_e\}_{e\in E}}\prod_{v\in V} T^{(v)}_{\{i_e\}_{e\in\inc(v)}}(x_v).
\end{equation}

We can assume by Lemma \ref{lemma_nonzero_sums} in Appendix \ref{sec:nonzero_assumption} that $$\sum_{x_V} T^{(v)}_{\{i_e\}_{e\in\inc(v)}}(x_v)\neq0$$ holds for all $v\in V$ and values for indices $\{i_e\}_{e\in\inc(v)}$.

We can therefore write
\begin{equation}    \label{eq:appendix_factorisation_before_rank_1}
    p(x_V)=\sum_{\{1\leq i_e\leq r_e\}_{e\in E}}\prod_{v\in V} \frac{T^{(v)}_{\{i_e\}_{e\in\inc(v)}}(x_v)}{\sum_{x_v}T^{(v)}_{\{i_e\}_{e\in\inc(v)}}(x_v)}\cdot\prod_{v\in V}\left(\sum_{x_v}T^{(v)}_{\{i_e\}_{e\in\inc(v)}}(x_v)\right).
\end{equation}
Since $\sum_{x_v}\frac{T^{(v)}_{\{i_e\}_{e\in\inc(v)}}(x_v)}{\sum_{x_v}T^{(v)}_{\{i_e\}_{e\in\inc(v)}}(x_v)}=1$, we can create for each edge $e$ of the graph $\mathcal{G}^t$ the corresponding latent variable $l_e$ in $\mathcal{G}$ with size $r_e$ and the response functions of the observed variables to be $$p\left(x_v|\{x_{l_e}=i_e\}_{e\in \inc(v)}\right):=\frac{T^{(v)}_{\{i_e\}_{e\in\inc(v)}}(x_v)}{\sum_{x_v}T^{(v)}_{\{i_e\}_{e\in\inc(v)}}(x_v)}.$$

We just need to set $p(x_{l_e}=i_e)$ for each edge $e\in E^t$ and $1\leq i_e\leq r_e$. To do so, we will prove that for each vertex $v\in V$, the quantity $\sum_{x_v}T^{(v)}_{\{i_e\}_{e\in\inc(v)}}(x_v)$ decomposes into factors depending on only one index as $$\sum_{x_v}T^{(v)}_{\{i_e\}_{e\in\inc(v)}}(x_v)=\prod_{e\in\inc(v)} \phi^{(v)}_e(i_e),$$ for some functions $\{\phi^{(v)}_e\}_{v\in V, e\in\inc(v)}$.

Let $a\in V$ be an arbitrary vertex of $\mathcal{G}^t$. We sum over $x_a$ in equation \eqref{eq:appendix_tensor_decomposition}:
\begin{equation}    \label{eq:tensor_decomposition_summed}
    p(x_{V\setminus \{a\}})=\sum_{\{1\leq i_e\leq r_e\}_{e\in E}}\prod_{v\in V\setminus\{a\}} T^{(v)}_{\{i_e\}_{e\in\inc(v)}}(x_v)\cdot \left(\sum_{x_a}T^{(a)}_{\{i_e\}_{e\in\inc(a)}}(x_a)\right).
\end{equation}
As before, we denote the branches left after eliminating $a$ from the graph by $$\mathcal{B}_a=\{B\subseteq V\setminus\{a\}:B\text{ is a maximal connected component of the graph } \mathcal{G}\setminus\{a\}\}.$$

We now redistribute the entries of the order $|V|-1$ tensor $p(x_{V\setminus\{a\}})$ to obtain another tensor with fewer modes, in particular one mode for each branch of $\mathcal{B}_a$. We do this in the same way we did in the simpler proof in the main text: for each branch $B\in\mathcal{B}_a$, group the modes corresponding to vertices $v_i$ in $B=\{v_1,\dots,v_s\}$ into a single one corresponding to the tuple $(v_1,\dots,v_s)$. In other words, we replace the modes $x_{v_1},\dots,x_{v_s}$ of sizes $|\mathcal{X}_{v_1}|,\dots,|\mathcal{X}_{v_s}|$ by a single mode $x_B$ of size $\prod_{i=1}^s|\mathcal{X}_{v_i}|$. If we write $\mathcal{B}_a=\{B_1,\dots,B_t\}$, then we will denote the tensor described above by $p(x_{B_1};x_{B_2};\dots;x_{B_t})$.

We also do some regrouping in the RHS of equation \eqref{eq:tensor_decomposition_summed}. For each branch $B=\{v_1,\dots,v_s\}\in\mathcal{B}_a$, we group the factors $T^{(v_i)}_{\{i_e\}_{e\in\inc(v_i)}}(x_{v_i})$ together. As such, denote $$M^{(B)}_{i_{e_B}}(x_B):=\sum_{\{1\leq i_e\leq r_e\}_{e\in\inc(B)\setminus{\{e_B\}}}}\prod_{v_i\in B}T^{(v_i)}_{\{i_e\}_{e\in\inc(v_i)}}(x_{v_i}),$$ where $e_B\in E$ is the edge that links the eliminated vertex $a$ and some vertex in the branch $B$. Now we can rewrite \eqref{eq:tensor_decomposition_summed} as \begin{equation} \label{eq:tensor_decomposition_summed_partial_rewrite}
    p(x_{V\setminus \{a\}})=\sum_{\{1\leq i_{e_B}\leq r_{e_B}\}_{B\in \mathcal{B}_a}}\left(\prod_{B\in\mathcal{B}_a} M^{(B)}_{i_{e_B}}(x_B)\right)\cdot \left(\sum_{x_a}T^{(a)}_{\{i_e\}_{e\in\inc(a)}}(x_a)\right).
\end{equation}

Let us look at what the object $M^{(B)}$ is in detail for a given branch $B$. One can see it as a tensor with $|B|+1$ modes: one mode for each $v_i\in B$ and an additional mode corresponding to the value of the index $i_{e_B}$. However, similar to the proof in the main text, we choose to flatten it by grouping the modes corresponding to $x_{v_i}$ into a single mode corresponding to the value of the tuple $x_B$. The resulting tensor is a $|\mathcal{X}_B|\times r_{e_B}$ matrix having rows indexed by values of the tuple $x_B$ and columns indexed by $i_{e_B}$. For brevity we will keep the same notation $M^{(B)}$, but keep in mind this refers to a matrix from now on.

Now if we view the expression $\sum_{x_a}T^{(a)}_{\{i_e\}_{e\in\inc(a)}}(x_a)$ as an order $|\mathcal{B}_a|$ tensor with a mode $i_{e_B}$ for each branch $B\in\mathcal{B}_a$, then we can rewrite \eqref{eq:tensor_decomposition_summed_partial_rewrite} as:
\begin{equation}    \label{eq:tensor_decomposition_summed_rewritten}
    p(x_{B_1};\dots;x_{B_t})=\left(M^{(B_1)},M^{(B_2)},\dots,M^{(B_t)}\right)\times\left(\sum_{x_a}T^{(a)}_{\{i_e\}_{e\in\inc(a)}}(x_a)\right),
\end{equation}
where $\left(M^{(B_1)},M^{(B_2)},\dots,M^{(B_t)}\right)$ is the multilinear transformation operator defined in Appendix \ref{sec:multilinear_products}.

Recall that $M^{(B)}$ has rows indexed by $x_B$ and columns indexed by $i_{e_B}$. We now claim that $M^{(B)}$ has linearly independent columns for any $B\in\mathcal{B}_a$. Indeed, assume the contrary holds for some $B^*\in\mathcal{B}_a$. Then there exists some $1\leq q\leq r_{e_{B^*}}$ and real numbers $\alpha_{i_{e_{B^*}}}$ such that $$M^{(B^*)}_{q}(x_{B^*})=\sum_{i_{e_{B^*}}\neq q}\alpha_{i_{e_{B^*}}}\cdot M^{(B^*)}_{i_{e_{B^*}}}(x_{B^*})$$ for all $x_{B^*}\in\mathcal{X}_{B^*}$. 

We can replace this in equation \eqref{eq:appendix_tensor_decomposition} to obtain:
\begin{align*}
    p(x_V)
    &=\sum_{\{1\leq i_e\leq r_e\}_{e\in E}}\prod_{v\in V} T^{(v)}_{\{i_e\}_{e\in\inc(v)}}(x_v)\\
    &=\sum_{\{1\leq i_e\leq r_e\}_{e\in E\setminus{\inc(B^*)}}}\left(\prod_{v\in V\setminus({B^*}\cup\{a\})} T^{(v)}_{\{i_e\}_{e\in\inc(v)}}(x_v)\right)\\
    &\cdot \sum_{1\leq i_{e_{B^*}}\leq r_{e_{B^*}}} T^{(a)}_{\{i_e\}_{e\in\inc(a)}}(x_a)\cdot\left(\sum_{\{1\leq i_e\leq r_e\}_{e\in\inc(B^*)\setminus\{e_{B^*}\}\}}}\prod_{v_i\in B^*} T^{(v_i)}_{\{i_e\}_{e\in\inc(v_i)}}(x_{v_i})\right)\\
    &=\sum_{\{1\leq i_e\leq r_e\}_{e\in E\setminus{\inc(B^*)}}}\left(\prod_{v\in V\setminus({B^*}\cup\{a\})} T^{(v)}_{\{i_e\}_{e\in\inc(v)}}(x_v)\right)\\
    &\cdot \sum_{1\leq i_{e_{B^*}}\leq r_{e_{B^*}}} T^{(a)}_{\{i_e\}_{e\in\inc(a)}}(x_a)\cdot M^{(B^*)}_{i_{e_{B^*}}}(x_{B^*})
\end{align*}

However, if we look at the expression on the last line, one can see that for all values of $\{i_e\}_{e\in E\setminus\inc(B^*)}$:
\begin{align*}
&\sum_{1\leq i_{e_{B^*}}\leq r_{e_{B^*}}} T^{(a)}_{\{i_e\}_{e\in\inc(a)}}(x_a)\cdot M^{(B^*)}_{i_{e_{B^*}}}(x_{B^*})\\
&=\sum_{i_{e_{B^*}}\neq q}T^{(a)}_{\{i_e\}_{e\in\inc(a)}}(x_a)\cdot M^{(B^*)}_{i_{e_{B^*}}}(x_{B^*}) + 
\left[T^{(a)}_{\{i_e\}_{e\in\inc(a)}}(x_a)\right]_{i_{e_{B^*}}=q}\cdot M^{(B^*)}_q(x_{B^*})\\
&=\sum_{i_{e_{B^*}}\neq q}T^{(a)}_{\{i_e\}_{e\in\inc(a)}}(x_a)\cdot M^{(B^*)}_{i_{e_{B^*}}}(x_{B^*})+\left[T^{(a)}_{\{i_e\}_{e\in\inc(a)}}(x_a)\right]_{i_{e_{B^*}}=q}\cdot\sum_{i_{e_{B^*}}\neq q}\alpha_{i_{e_{B^*}}}\cdot M^{(B^*)}_{i_{e_{B^*}}}(x_{B^*})\\
&=\sum_{i_{e_{B^*}}\neq q}\left(T^{(a)}_{\{i_e\}_{e\in\inc(a)}}(x_a)+\alpha_{i_{e_{B^*}}}\cdot\left[T^{(a)}_{\{i_e\}_{e\in\inc(a)}}(x_a)\right]_{i_{e_{B^*}}=q}\right)\cdot M^{(B^*)}_{i_{e_{B^*}}}(x_{B^*}).
\end{align*}

Replacing this in the expression for $p(x_V)$ we obtain:
\begin{align*}
    p(x_V) =& \sum_{\{1\leq i_e\leq r_e\}_{e\in E\setminus\{e_{B^*}\}}}\sum_{\substack{1\leq i_{e_{B^*}}\leq r_{e_{B^*}}\\i_{e_{B^*}}\neq q}} \left(\prod_{v\in V\setminus\{a\}} T^{(v)}_{\{i_e\}_{e\in\inc(v)}}(x_v)\right)\\
    &\cdot \left(T^{(a)}_{\{i_e\}_{e\in\inc(a)}}(x_a)+\alpha_{i_{e_{B^*}}}\cdot\left[T^{(a)}_{\{i_e\}_{e\in\inc(a)}}(x_a)\right]_{i_{e_{B^*}}=q}\right).
\end{align*}

Therefore, we obtain a strictly smaller decomposition for $\mathcal{T}$: by removing the value $i_{e_{B^*}}=q$ and replacing the tensor $T^{(a)}_{\{i_e\}_{e\in\inc(a)}}(x_a)$ by the new tensor $T'^{(a)}\in\mathbb{R}^{|\mathcal{X}_a|}\otimes\mathbb{R}^{r_{e_{B^*}}-1}\otimes\left(\bigotimes_{e\in\inc(a)\setminus\{e_{B^*}\}}\mathbb{R}^{r_e}\right)$ defined as $$T'^{(a)}_{\{i_e\}_{e\in\inc(a)}}(x_a):=T^{(a)}_{\{i_e\}_{e\in\inc(a)}}(x_a)+\alpha_{i_{e_{B^*}}}\cdot\left[T^{(a)}_{\{i_e\}_{e\in\inc(a)}}(x_a)\right]_{i_{e_{B^*}}=q},$$we obtain a tensor network decomposition for $p(x_V)$ which has bond dimension for $e_{B^*}$ equal to $r_{e_{B^*}}-1$, contradicting the minimality of the bond dimensions we have picked.

Therefore, indeed the columns of $M^{(B^*)}$ must be linearly independent. As a result, for any $B\in\mathcal{B}_a$, $M^{(B)}$ must have a left inverse $N^{(B)}$ such that $$N^{(B)}M^{(B)}=\mathcal{I}_{r_{e_B}}.$$ Recall equation \eqref{eq:tensor_decomposition_summed_rewritten}:
$$p(x_{B_1};\dots;x_{B_t})=\left(M^{(B_1)},M^{(B_2)},\dots,M^{(B_t)}\right)\times\left(\sum_{x_a}T^{(a)}_{\{i_e\}_{e\in\inc(a)}}(x_a)\right).$$
Applying $\left(N^{(B_1)},N^{(B_2)},\dots,N^{(B_t)}\right)$ to the left of each side of the equation, we get:
$$\left(N^{(B_1)},N^{(B_2)},\dots,N^{(B_t)}\right)\times p(x_{B_1};\dots;x_{B_t})=\sum_{x_a}T^{(a)}_{\{i_e\}_{e\in\inc(a)}}(x_a).$$

However, by $p(x_V)\in\mathcal{M}(\mathcal{G})$ and Proposition \ref{markov_model_in_TCS} we get $$p(x_{V\setminus \{a\}})=\prod_{B\in\mathcal{B}_a}p(x_B),$$
which we can rewrite in terms of tensor products as:
$$p(x_{B_1};\dots;x_{B_t})=\mathbf{p(x_{B_1})}\otimes\dots\otimes \mathbf{p(x_{B_t})},$$ where we write the factors in the RHS in bold to emphasise they are vectors.

Therefore,
\begin{align*}
    \sum_{x_a}T^{(a)}_{\{i_e\}_{e\in\inc(a)}}(x_a)
    &=\left(N^{(B_1)},N^{(B_2)},\dots,N^{(B_t)}\right)\times \left(\mathbf{p(x_{B_1})}\otimes\dots\otimes \mathbf{p(x_{B_t})}\right)\\
    &=\left(N^{(B_1)}\mathbf{p(x_{B_1})}\right)\otimes\dots\otimes\left(N^{(B_t)}\mathbf{p(x_{B_t})}\right).
\end{align*}

Note that $N^{(B)}$ is a $r_{e_B}\times |\mathcal{X}_B|$ matrix, meaning $N^{(B)}\mathbf{p(x_B)}$ is a vector of size $r_{e_B}$, which is what we would expect. Therefore, by setting $$\phi_{e_B}^{(a)}:=N^{(B)}_{i_{e_B}}\mathbf{p(x_B)},$$ we obtain that indeed $$\sum_{x_v}T^{(v)}_{\{i_e\}_{e\in\inc(v)}}(x_v)=\prod_{e\in\inc(v)} \phi^{(v)}_e(i_e).$$

To recap, by equation \eqref{eq:appendix_factorisation_before_rank_1}, we have:
\begin{equation*}
    p(x_V)=\sum_{\{1\leq i_e\leq r_e\}_{e\in E}}\prod_{v\in V} \frac{T^{(v)}_{\{i_e\}_{e\in\inc(v)}}(x_v)}{\sum_{x_v}T^{(v)}_{\{i_e\}_{e\in\inc(v)}}(x_v)}\cdot\prod_{v\in V}\left(\sum_{x_v}T^{(v)}_{\{i_e\}_{e\in\inc(v)}}(x_v)\right).
\end{equation*}
If we define $$\psi_e(i_e):=\phi_e^{(v_1)}(i_e)\cdot\phi_e^{(v_2)}(i_e),$$ for any edge $e=(v_1,v_2)$, then by equation \eqref{eq:appendix_factorisation_before_rank_1} we obtain:
\begin{align*}
    p(x_V)
    &=\sum_{\{1\leq i_e\leq r_e\}_{e\in E}}\prod_{v\in V} \frac{T^{(v)}_{\{i_e\}_{e\in\inc(v)}}(x_v)}{\sum_{x_v}T^{(v)}_{\{i_e\}_{e\in\inc(v)}}(x_v)}\cdot\prod_{v\in V}\left(\sum_{x_v}T^{(v)}_{\{i_e\}_{e\in\inc(v)}}(x_v)\right)\\
    &=\sum_{\{1\leq i_e\leq r_e\}_{e\in E}}\prod_{v\in V} p\left(x_v|\{x_{l_e}=i_e\}_{e\in \inc(v)}\right)\cdot \prod_{v\in V}\prod_{e\in\inc(v)}\phi^{(v)}_e(i_e)\\
    &=\sum_{\{1\leq i_e\leq r_e\}_{e\in E}}\prod_{v\in V} p\left(x_v|\{x_{l_e}=i_e\}_{e\in \inc(v)}\right)\cdot \prod_{e=(v_1,v_2)\in E} \phi_e^{(v_1)}(i_e)\cdot\phi_e^{(v_2)}(i_e)\\
    &=\sum_{\{1\leq i_e\leq r_e\}_{e\in E}}\prod_{v\in V} p\left(x_v|\{x_{l_e}=i_e\}_{e\in \inc(v)}\right)\cdot \prod_{e\in E} \psi_e(i_e).
\end{align*}

By summing this equation over $x_V$ we obtain 
$$
    1=\sum_{\{1\leq i_e\leq r_e\}_{e\in E}}\prod_{e\in E} \psi_e(i_e)=\prod_{e\in E}\sum_{1\leq i_e\leq r_e}\psi_e(i_e).
$$

Therefore, 
\begin{equation}    \label{eq:appendix_factorisation_last_form}
    p(x_V)=\sum_{\{1\leq i_e\leq r_e\}_{e\in E}}\prod_{v\in V} p\left(x_v|\{x_{l_e}=i_e\}_{e\in \inc(v)}\right)\cdot \prod_{e\in E} \frac{\psi_e(i_e)}{\sum_{1\leq i_e\leq r_e}\psi_e(i_e)}.
\end{equation}

Finally, we go back to the original correlation scenario $\mathcal{G}$. Define the latent variable $l_e$ corresponding to edge $e$ in the graph $\mathcal{G}^t$ to have domain $\{1,\dots,r_e\}$ and quasiprobability distribution $$p(x_{l_e}=i_e):=\frac{\psi_e(i_e)}{\sum_{1\leq i_e\leq r_e}\psi_e(i_e)}.$$ Then for all nodes we have $\sum_{x_{l_e}} p(x_{l_e})=1$ and $\sum_{x_v}p\left(x_v|\{x_{l_e}=i_e\}_{e\in \inc(v)}\right)=1$ as expected, and equation \eqref{eq:appendix_factorisation_last_form} tells us that $$p(x_V)=\sum_{x_L}\prod_{v\in V} p(x_v|x_{\pa_{\mathcal{G}}(v)})\prod_{l\in L}p(x_l).$$ 

Therefore, $p(x_V)$ is in the quasi set $\mathcal{W}(\mathcal{G})$, so $\mathcal{M}(\mathcal{G})\subseteq\mathcal{W}(\mathcal{G})$, and so $$\mathcal{W}(\mathcal{G})=\mathcal{M}(\mathcal{G})=\mathcal{N}(\mathcal{G}).$$

\end{proof}

\section{Conjecture \ref{conjecture} in the Triangle Scenario}  \label{sec:triangle_proof}

\begin{figure}[t]
    \ctikzfig{TikzFigures/triangleDAG_letters}
    \caption{The Triangle Scenario.}
    \label{fig:triangle_scenario}
\end{figure}

We prove that Conjecture \ref{conjecture} is true for the Triangle Scenario in Fig. \ref{fig:triangle_scenario}. We start under the assumption that the observed variables are binary, and we will give two constructions of the perfect correlation $p=\frac{1}{2}([000]+[111])$, where $[abc]$ represents the deterministic distribution with $A=a$, $B=b$, and $C=c$. The first one uses quasiprobabilistic latent variables, and the second one is a symmetric construction using quasiprobabilistic response distributions.

The quasilatents construction has $A$, $B$, and $C$ deterministic given their latent parents. We will write the distributions of the latents as vectors, i.e. the distribution $p_{\Lambda_{BC}}$ below has $p_{\Lambda_{BC}}(0)=p_{\Lambda_{BC}}(1)=\frac{1}{2}$:
\begin{itemize}
    \item[$\bullet$] $\Omega_{\Lambda_{BC}}=\{0,1\}$, $p_{\lambda_{BC}}=(\frac{1}{2},\frac{1}{2})^T$\\
    \item[$\bullet$] $\Omega_{\Lambda_{AC}}=\{1,2,3,4,5,6,7\}$, $p_{\Lambda_{AC}}=(0.1, 0.9, 0.1, 0.3, -0.4, -0.1, 0.1)^T$\\
    \item[$\bullet$] $\Omega_{\Lambda_{AB}}=\{1,2,3,4,5\}$, $p_{\Lambda_{AB}}=(1, 1,-1, 1,-1)^T$\\
    \item[$\bullet$] $A=0\iff (\Lambda_{AC},\Lambda_{AB})\in\{(3,1),(6,1),(1,2),(4,2),(7,2),(1,5),(3,4),(6,4),(7,4)\}$\\
    \item[$\bullet$] $B=0 \iff (\Lambda_{BC}=0 \land \Lambda_{AB}\in\{1,2,3\}) \lor (\Lambda_A=1 \land \Lambda_C\in\{2,3,4,5\})$\\
    \item[$\bullet$] $C=0 \iff (\Lambda_{BC}=0 \land \Lambda_{AC}\in\{1,2,3,4,5\}) \lor (\Lambda_A=1 \land \Lambda_B\in\{3,4,5,6,7\})$.
\end{itemize}

The quasiresponses construction has all latents behaving as unbiased coinflips and the following response for each observed variable:
$$p(v|\lambda_1,\lambda_2)=
\begin{cases}
    \frac{3}{2}\,\,\,\,\,\,\,\,\,\text{ if } \lambda_1=\lambda_2=v;\\
    -\frac{1}{2}\,\,\,\,\text{ if } \lambda_1=\lambda_2\neq v;\\
    \frac{1}{2}\,\,\,\,\,\,\,\,\,\text{ if }\lambda_1\neq\lambda_2,
\end{cases}$$
where $v$ is a placeholder for $a,b,c$, with parents $\lambda_1$ and $\lambda_2$.

Now that the perfect correlation is in $\mathcal{W}$, we can extend the latent variables in the original scenario with a copy of this construction (more precisely, replace each latent $\Lambda$ by $(\Lambda,\Lambda')$, where $\Lambda'$ is a copy of its corresponding latent variable in the perfect correlation construction). This results in the observed variables also sharing an unbiased coinflip $U$ (see Fig. \ref{fig:triangle_after_extension}).

\begin{figure}[t]
    \ctikzfig{TikzFigures/triangleDAG_after_extension}
    \caption{The Triangle Scenario after extending the latents with a copy of the perfect correlation construction.}
    \label{fig:triangle_after_extension}
\end{figure}

Now simply note there is no reason to stop at one copy of $U$. Since the latents are allowed to be continuous, we can instead make them generate an infinite sequence of i.i.d. copies of $U$. Since any finite random variable can be simulated via an infinite number of coinflips, we conclude that using negative probabilities, the Triangle Scenario can simulate the Common Ancestor Scenario in Fig. \ref{fig:common_ancestor}. It is obvious any distribution on three variables is in the Common Ancestor Scenario, so we are done.

\begin{figure}[t]
    \ctikzfig{TikzFigures/Common_Ancestor_letters}
    \caption{The Common Ancestor Scenario.}
    \label{fig:common_ancestor}
\end{figure}

\printbibliography[
heading=bibintoc, 
title={References}
]

\end{document}